\newtheorem{theorem}{Theorem}
\newtheorem{proposition}{Proposition}
\newtheorem{remark}{Remark}
\newtheorem{lemma}{Lemma}
\newcommand{\ii}{\infty}
\newcommand\R{{\ensuremath {\mathbb R} }}
\newcommand\C{{\ensuremath {\mathbb C} }}
\newcommand\N{{\ensuremath {\mathbb N} }}
\newcommand\cS{\mathcal{S}}
\renewcommand\phi{\varphi}
\newcommand{\gS}{\mathfrak{S}}
\newcommand{\cP}{\mathcal{P}}
\newcommand{\cQ}{\mathcal{Q}}
\newcommand{\cC}{\mathcal{C}}
\newcommand{\cR}{\mathcal{R}}
\newcommand{\cB}{\mathcal{B}}
\newcommand{\cK}{\mathcal{K}}
\newcommand{\cF}{\mathcal{F}}
\newcommand{\cN}{\mathcal{N}}
\newcommand{\cL}{\mathcal{L}}
\newcommand{\curv}{\mathscr{C}}
\renewcommand{\epsilon}{\varepsilon}
\newcommand\pscal[1]{{\ensuremath{\left\langle #1 \right\rangle}}}
\def\tr{{\rm Tr}}
\def\div{{\rm div \;}}
\def\NN{\N}
\def\RR{\R}
\def\CC{\C}
\title[Dielectric Permittivity of Crystals in the rHF approximation]{The
  Dielectric Permittivity of Crystals in the reduced Hartree-Fock
  approximation} 
\author[\'E. Cancès]{\'Eric CANC\`ES}
 \address{Universit\'e Paris-Est, CERMICS, Project-team Micmac, INRIA-Ecole des Ponts,  6 \& 8 avenue Blaise Pascal, 77455 Marne-la-Vall\'ee Cedex 2, France.}
  \email{cances@cermics.enpc.fr} 
\author[M. Lewin]{Mathieu LEWIN}
 \address{CNRS \& Laboratoire de Math\'ematiques UMR 8088, Universit\'e de Cergy-Pontoise, 2 Avenue Adolphe Chauvin, 95302 Cergy-Pontoise Cedex, France.}
  \email{Mathieu.Lewin@math.cnrs.fr}
\date{March 11, 2009}
\begin{document}
\begin{abstract}
In a recent article (Cancès, Deleurence and Lewin, \textit{Commun. Math. Phys.} \textbf{281} (2008), pp. 129--177), we have rigorously derived, by
means of bulk limit arguments, a new variational model to describe the
electronic ground state of insulating or semiconducting crystals in the
presence of local defects. In this so-called reduced Hartree-Fock model, the ground state electronic density
matrix is decomposed as $\gamma = 
\gamma^0_{\rm per} + Q_{\nu,\epsilon_{\rm F}}$, where $\gamma^0_{\rm
  per}$ is the ground state density matrix of the host crystal and
$Q_{\nu,\epsilon_{\rm F}}$ the modification of the electronic density matrix
generated by a modification $\nu$ of the nuclear charge of the host
crystal, the Fermi level $\epsilon_{\rm F}$ being kept fixed. The
purpose of the present article is twofold. First, we study more in
details the mathematical properties of the density matrix
$Q_{\nu,\epsilon_{\rm F}}$ (which is known to
be a self-adjoint Hilbert-Schmidt operator on $L^2(\R^3)$). We show in
particular that if $\int_{\RR^3} \nu \neq 0$,
$Q_{\nu,\epsilon_{\rm F}}$ is not trace-class. Moreover, the associated
density of charge is not in $L^1(\R^3)$ if the crystal exhibits
anisotropic dielectric properties. These results are obtained by
analyzing, for a small defect $\nu$, the linear and nonlinear terms of
the resolvent expansion of $Q_{\nu,\epsilon_{\rm F}}$. 
Second, we show that, after an appropriate rescaling, the potential
generated by the microscopic total charge (nuclear plus electronic
contributions) of the crystal in the presence of the defect, converges
to a homogenized electrostatic potential solution to a Poisson equation
involving the macroscopic dielectric permittivity of the crystal. This
provides an alternative (and rigorous) derivation of the
Adler-Wiser formula. 
\end{abstract}

\maketitle


\setcounter{tocdepth}{1}

\section{Introduction}

The electronic structure of crystals with local defects has been the
topic of a huge number of articles and monographs in the Physics
literature. On the other hand, the mathematical foundations of
the corresponding models still are largely unexplored.

In \cite{CanDelLew-08a}, we have introduced a variational framework
allowing for a rigorous characterization of the electronic ground state of
insulating or semi-conducting crystals with local
defects, within the reduced Hartree-Fock
setting. Recall that the reduced Hartree-Fock (rHF) model -- also called Hartree model --
\cite{Solovej-91} is a nonlinear approximation of the $N$-body Schrödinger theory, where the state of the electrons is described by a \emph{density matrix} $\gamma$, i.e. a self-adjoint operator acting on $L^2(\R^3)$ (in order to simplify the
notation, the spin variable will be omitted in the whole paper) such that $0\leq\gamma\leq 1$ and whose trace equals the total number of electrons in the system. The rHF model may be obtained from the usual Hartree-Fock model \cite{LieSim-77} by neglecting the so-called \emph{exchange term}. It may also be obtained from the extended
Kohn-Sham model \cite{DreGro-90} by setting to zero the
exchange-correlation functional. 

Our variational model is derived from the
supercell approach commonly used in numerical simulations, by letting the
size of the supercell go to infinity. It is found
\cite{CatBriLio-01,CanDelLew-08a} that the density matrix of the perfect
crystal converges in the limit to a periodic density matrix
$\gamma^0_{\rm per}$ describing the infinitely many electrons of the
periodic crystal (Fermi sea). In the presence of a defect modelled by a
nuclear density of charge $\nu$, the density matrix of the electrons converges in the limit to a state $\gamma$ which can be decomposed as \cite{CanDelLew-08a}
\begin{equation}
 \gamma = \gamma^0_{\rm per} + Q_{\nu,\epsilon_{\rm F}}
\label{eq:SCF_intro}
\end{equation}
where $Q_{\nu,\epsilon_{\rm F}}$ accounts
for the modification of the electronic density matrix induced by a
modification $\nu$ of the nuclear charge of the crystal. The operator 
$Q_{\nu,\epsilon_{\rm F}}$ depends on $\nu$ as well as on the Fermi
level $\epsilon_{\rm F}$, which controls the total charge of the defect. 

Loosely speaking, the operator
$Q_{\nu,\epsilon_{\rm F}}$ describing the modification of the
electronic density matrix should be small when $\nu$ itself is 
small. But it is \emph{a priori} not clear for which norm this really
makes sense. 
The mathematical difficulties of such a model lay in the fact that the
one-body density matrices $\gamma_{\rm per}^0$ and $\gamma$ are infinite
rank operators (usually orthonormal projectors), representing
Hartree-Fock states with infinitely many interacting electrons. 

Following previous results on a QED model
\cite{HaiLewSer-05a,HaiLewSerSol-07,GraLewSer-08} describing
relativistic electrons interacting with the self-consistent Dirac sea,
we have characterized in \cite{CanDelLew-08a} the solution $Q_{\nu,\epsilon_{\rm F}}$ of \eqref{eq:SCF_intro} as
the minimizer of a certain energy functional $E_{\nu,\epsilon_{\rm F}}$
on a convex set $\cK$ that will be defined later. This
procedure leads to the information that  
\begin{equation}
Q_{\nu,\epsilon_{\rm F}}\in \gS_2,\qquad Q^{--}_{\nu,\epsilon_{\rm F}},\
Q^{++}_{\nu,\epsilon_{\rm F}} \in\gS_1. 
\label{info_min} 
\end{equation}
In the whole paper we use the notation 
$$
\begin{array}{ll}
Q^{--} := \gamma^0_{\rm per} Q \gamma^0_{\rm per} & \qquad 
Q^{-+} := \gamma^0_{\rm per} Q (1-\gamma^0_{\rm per}) \\
Q^{+-} := (1-\gamma^0_{\rm per}) Q \gamma^0_{\rm per} & \qquad 
Q^{++} := (1-\gamma^0_{\rm per}) Q (1-\gamma^0_{\rm per}),
\end{array}
$$
and we denote by $\gS_1$ and $\gS_2$ respectively the spaces of trace-class
and Hilbert-Schmidt operators on $L^2(\R^3)$. A definition of these spaces is
recalled at the beginning of Section~\ref{sec:RHFperturbed} for the
reader's convenience.  

Property \eqref{info_min} implies that the operator
$Q_{\nu,\epsilon_{\rm F}}$ is compact, but it does
\emph{not} mean \emph{a priori} that it is trace-class. This
mathematical difficulty complicates the definition of the density of
charge. For $Q \in
\gS_1$, the density of charge can be defined by 
$\rho_{Q}(x)=Q(x,x)$ where $Q(x,x')$ is the integral kernel of $Q$;
it satisfies $\int_{\R^3}\rho_{Q}=\tr(Q)$. Let us emphasize that these
formulae only make sense when $Q$ is trace-class.   

In \cite{CanDelLew-08a} we have been able to define the density
$\rho_{Q_{\nu,\epsilon_{\rm F}}}$ associated with $Q_{\nu,\epsilon_{\rm
    F}}$ by a duality argument, but this only gave us the following information:
\begin{equation}
\rho_{Q_{\nu,\epsilon_{\rm F}}}\in L^2(\R^3) \quad \text{and} \quad
\int_{\R^3} \int_{\R^3} \frac{\rho_{Q_{\nu,\epsilon_{\rm F}}}(x)\rho_{Q_{\nu,\epsilon_{\rm F}}}(x')}{|x-x'|}dx\,dx'<\ii.
\label{info_rho} 
\end{equation}
Also, following \cite{HaiLewSer-05a} and using \eqref{info_min} one can
define the electronic charge of the state counted relatively to the Fermi sea $\gamma^0_{\rm per}$ via
$$
\tr_0(Q_{\nu,\epsilon_{\rm F}}):=\tr(Q^{++}_{\nu,\epsilon_{\rm
    F}})+\tr(Q^{--}_{\nu,\epsilon_{\rm F}}).
$$
It can be shown that when $\nu$ is small enough (in an appropriate sense
precised below), 
$$\tr_0(Q_{\nu,\epsilon_{\rm F}})=0,$$
hence the Fermi sea stays overall neutral in the presence of a small defect.

In \cite{CanDelLew-08a}, we
left open two very natural questions: 
\begin{enumerate}
\item is $Q_{\nu,\epsilon_{\rm F}}$ trace-class?
\item if not, is $\rho_{Q_{\nu,\epsilon_{\rm F}}}$ nevertheless an
  integrable function? 
\end{enumerate}
The purpose in the present article is twofold. First, we prove that
$Q_{\nu,\epsilon_{\rm F}}$ is {\em never} trace-class when
$\int_{\R^3}\nu\neq0$, and that, {\em in general}, 
$\rho_{Q_{\nu,\epsilon_{\rm F}}}$ is {\em not} an integrable function
  (at least for anisotropic dielectric crystals). These unusual mathematical
  properties are in fact directly related to the dielectric properties
  of the host crystal. They show in particular that the
approach of \cite{CanDelLew-08a} involving the complicated variational
set $\cK$ cannot {\it a priori} be simplified by replacing $\cK$ with a
simpler variational set (a subset of $\gS_1$ for instance).

In a second part, we show that our variational model allows to recover the
Adler-Wiser formula \cite{Adler-62,Wiser-63} for the electronic contribution
to the macroscopic dielectric constant of the perfect crystal, by means of a homogenization
argument. More precisely, we rescale a fixed density $\nu \in
L^1(\R^3) \cap L^2(\R^3)$ as follows
$$
\nu_\eta(x):=\eta^3\nu(\eta x),
$$
meaning that we submit the Fermi sea to a modification of the external
potential which is very spread out in space. We consider the
(appropriately rescaled) total electrostatic potential 
$$
W_\nu^\eta(x):=\eta^{-1}\left[\left(\nu_\eta-\rho_{Q_{\nu_\eta,\epsilon_{\rm
          F}}}\right)\star|\cdot|^{-1}\right]\left(\eta^{-1} x\right)
$$
of the nonlinear system consisting of the density $\nu_\eta$
and the self-consistent variation $\rho_{Q_{\nu_\eta,\epsilon_{\rm F}}}$
of the density of the Fermi sea. We prove that $W_\nu^\eta$ converges 
weakly to $W_\nu$, the unique solution in ${\mathcal
  S}'(\R^3)$ of the elliptic equation 
\begin{equation}
\boxed{-\div(\epsilon_{\rm M}\nabla W_\nu)=4\pi\nu}
\label{eq:intro_eq} 
\end{equation}
where $\epsilon_{\rm M}$ is the so-called \emph{macroscopic dielectric
  permittivity}\footnote{To be precise, it is only the \emph{electronic
    part} of the macroscopic dielectric permittivity, as we do not take
  into account here the contribution originating from the relaxation of
  the nuclei of the lattice (the nuclei are fixed in our approach).}, a
$3\times3$ 
symmetric, coercive, matrix which only depends on the perfect crystal,
and can be computed from the Bloch-Floquet decomposition of the mean-field
Hamiltonian. As we will explain in details, the occurence of the dielectric permittivity $\epsilon_{\rm M}$, or more precisely the fact that in general $\epsilon_{\rm M}\neq 1$, is indeed related to the properties that $Q_{\nu,\epsilon_{\rm F}}$ is not trace-class and $\rho_{Q_{\nu,\epsilon_{\rm F}}}$ is not in $L^1(\R^3)$.

This article is organized as follows. In Section~\ref{sec:rHF}, we briefly
present the reduced Hartree-Fock model for molecular systems with finite
number of electrons, for perfect crystals and for
crystals with local defects. In Section~\ref{sec:response_eff} we study
the linear response of the perfect crystal to a variation of the
effective potential, the nonlinear response being the matter of
Section~\ref{sec:expansion_Q_V}.
Note that the results contained in this section can be applied to the linear model (non-interaction electrons), to the reduced
Hartree-Fock model, as well as to the Kohn-Sham LDA model.
We then focus in Section~\ref{sec:response_ext} on the 
response of the reduced Hartree-Fock ground state of the crystal to a small
modification of the external potential generated by a modification
$\nu$ of the nuclear charge. We prove that for $\nu$ small
enough and such that $\int_{\R^3} \nu \neq 0$, one has 
$\tr_0(Q_{\nu,\epsilon_{\rm F}}) = 0$ while the Fourier transform
$\widehat{\rho}_{Q_{\nu,\epsilon_{\rm F}}}(k)$ of
${\rho}_{Q_{\nu,\epsilon_{\rm F}}}$ does not converge to $0$ when
$k$ goes to zero, yielding $Q_{\nu,\epsilon_{\rm F}} \notin \gS_1$. We
also prove that if the host crystal exhibits anisotropic dielectric
properties, $\widehat{\rho}_{Q_{\nu,\epsilon_{\rm F}}}
(k)$ does not have a limit
at $k =0$, which implies that $\rho_{Q_{\nu,\epsilon_{\rm F}}} \notin
L^1(\R^3)$. Finally, it is shown in Section~\ref{sec:macroscopic} that,
after rescaling, the potential generated by the microscopic total charge
(nuclear plus electronic contributions) of
the crystal in the presence of the defect, converges to a homogenized
electrostatic potential solution to the Poisson equation \eqref{eq:intro_eq} involving the
macroscopic dielectric permittivity of the crystal.
All the proofs are gathered in Section~\ref{sec:proofs}.

\section{The reduced Hartree-Fock model for molecules and crystals}
\label{sec:rHF}

In this section, we briefly recall the reduced Hartree-Fock model for finite systems, perfect crystals and crystals with a localized defect.

\subsection{Finite system}
Let us first consider a molecular system
containing $\cN$ non-relativistic quantum electrons and a set of nuclei
having a 
density of charge $\rho^{\rm nuc}$. If for instance the system contains $M$
nuclei of charges $z_1,\cdots,z_K \in \NN \setminus \left\{0\right\}$
located at $R_1,\cdots,R_K \in\R^3$, then  
$$
\rho^{\rm nuc}(x):=\sum_{k=1}^K z_k \, m_k(x-R_k),
$$
where $m_1,\cdots,m_K$ are probability measures on $\R^3$. Point-like
nuclei correspond to $m_k=\delta$ (the Dirac 
measure) while smeared nuclei are modeled by smooth, nonnegative,
radial, compactly supported 
functions $m_k$ such that $\int_{\RR^3} m_k = 1$. 

The electronic energy of the system of $\cN$ electrons in the reduced Hartree-Fock model reads
\cite{Solovej-91,CanDelLew-08a} 
\begin{equation}
\mathcal{ E}_{\rho^{\rm nuc}}(\gamma) = \tr \left( -
  \frac 1 2 \Delta \gamma \right) - \int_{\RR^3} \rho_\gamma \left(
  \rho^{\rm nuc} \star |\cdot|^{-1}\right) 
+ \frac 1 2 D \left( \rho_\gamma , \rho_\gamma \right).
\label{def_rHF}
\end{equation}
The above energy is written in atomic units, i.e. $\hbar=1$, $m=1$,
$e=1$ and $\frac
1{4\pi\epsilon_0}=1$ where $m$ is the mass of the electron, $e$ the
elementary charge, $\hbar$ the reduced Planck constant and
 $\epsilon_0$ the dielectric permittivity of the vacuum.
The first term in the right-hand side of \eqref{def_rHF} is the kinetic
energy of the electrons 
and $D(\cdot,\cdot)$ is the classical Coulomb interaction, which
reads for $f$ and $g$ in $L^{\frac 65}(\R^3)$ as
\begin{equation}
D(f,g) = \int_{\R^3} \int_{\R^3} \frac{f(x) \, g(y)}{|x-y|}dx \, dy=4\pi\int_{\R^3}\frac{\overline{\widehat{f}(k)}\widehat{g}(k)}{|k|^2}dk,
\label{def_D_f_g} 
\end{equation}
where $\widehat{f}$ denotes the Fourier transform of $f$. Here and in the
sequel, we use the normalization convention consisting in defining
$\widehat{f}(k)$ as
$$
\widehat{f}(k) = (2\pi)^{-\frac 32} \int_{\R^3} f(x) e^{-ik\cdot x} \, dx.
$$
In this
mean-field model, the state of the $\cN$ electrons is described by the
one-body density matrix $\gamma$, which is an element of the following
class 
\begin{equation*}
\mathcal{ P}^{\cN} = \bigg\{ \gamma \in \cS(L^2(\R^3)) \; | \; \ 0 \le \gamma
  \le 1, \ \tr(\gamma) = \cN, \ \tr\left(\sqrt{-\Delta} \gamma\sqrt{-\Delta}\right) < \infty \bigg\},
\end{equation*}
$\cS(L^2(\R^3))$ denoting the space of bounded self-adjoint operators on $L^2(\R^3)$. Also we define $\tr(-\Delta\gamma):=\tr(\sqrt{-\Delta}\gamma\sqrt{-\Delta})$ which makes sense when $\gamma\in\cP^\cN$.
The set $\cP^\cN$ is the closed convex hull of the set of orthogonal projectors of rank $\cN$ acting on $L^2(\R^3)$ and having a finite kinetic energy.

The function $\rho_\gamma$ appearing in \eqref{def_rHF} is the density
associated with the operator~$\gamma$, defined by $\rho_\gamma(x) =
\gamma(x,x)$ where $\gamma(x,y)$ is the kernel of the trace class
operator  
$\gamma$. Notice that for all $\gamma \in \mathcal{P}^\cN$, one has
$\rho_\gamma\geq0$ and $\sqrt{\rho_\gamma}\in H^1(\R^3)$, hence the
last two terms of \eqref{def_rHF} are well-defined, since $\rho_\gamma\in
L^1(\R^3) \cap L^3(\R^3)\subset L^{\frac 65}(\R^3)$.   

It can be proved (see the appendix of \cite{Solovej-91}) that if $\cN \le
\sum_{k=1}^M z_k $ (neutral or positively charged 
systems), the variational problem 
\begin{equation}
\inf \left\{ \mathcal{ E}_{\rho^{\rm nuc}}(\gamma),
  \; \gamma \in \mathcal{ P}^\cN \right\} 
\label{def_min_rHF}
\end{equation}
has a minimizer $\gamma$ and that all the minimizers share the same
density $\rho_\gamma$.

\subsection{The perfect crystal}
The above model describes a \emph{finite} system of $\cN$ electrons in
the electrostatic field created by the density $\rho^{\rm
  nuc}$. Our goal is to describe an \emph{infinite} crystalline material
obtained in the bulk limit $\cN\to\ii$. In fact we shall
consider two such systems. The first one is the periodic crystal obtained when, in the bulk limit, the nuclear density approaches the periodic nuclear distribution of the perfect
crystal:
\begin{equation}
 \rho^{\rm nuc}\rightarrow \rho_{\rm per}^{\rm nuc},
\label{case_periodic}
\end{equation}
$\rho^{\rm nuc}_{\rm per}$ being a $\cR$-periodic distribution, $\cR$
denoting a periodic lattice of $\R^3$. The second system
is the previous crystal in presence of a local defect: 
\begin{equation}
\rho^{\rm nuc}\rightarrow \rho_{\rm per}^{\rm nuc}+\nu. 
\label{case_defect}
\end{equation}

The density matrix $\gamma^0_{\rm per}$ of the \emph{perfect
  crystal} obtained in the bulk limit \eqref{case_periodic} is
unique~\cite{CanDelLew-08a}.   
It is the unique solution to the self-consistent equation
\begin{equation}
\boxed{\gamma^0_{\rm per}=1_{(-\ii;\epsilon_{\rm F}]}(H^0_{\rm per})}
\label{eq:SCF0}
\end{equation}
\begin{equation}
\boxed{H^0_{\rm per} = - \frac 1 2 \Delta + V_{\rm per}}
\label{def_H_0}
\end{equation}
where $V_{\rm per}$ is a $\cR$-periodic function satisfying
$$
-\Delta V_{\rm per} = 4 \pi \left( \rho_{\rm per}^0 - \rho_{\rm
    per}^{\rm nuc} \right),\quad\text{ with }\quad \rho_{\rm per}^0(x) =
\gamma_{\rm  per}^0(x,x),$$
and where $\epsilon_{\rm F} \in \RR$ is the Fermi level. The potential
$V_{\rm per}$ is defined up to an additive constant; if $V_{\rm
  per}$ is replaced with $V_{\rm per}+C$, $\epsilon_{\rm F}$ has to be
replaced with $\epsilon_{\rm F}+C$, in such a way that
$\gamma^0_{\rm per}$ remains unchanged. 
The function $V_{\rm per}$ being in $L^2_{\rm per}(\R^3)$, it defines a $\Delta$-bounded operator on $L^2(\R^3)$ with relative bound
  zero (see \cite[Thm XIII.96]{ReeSim4}) and therefore $H^0_{\rm  per}$
  is self-adjoint on $L^2(\R^3)$ with domain $H^2(\R^3)$. Besides, the
  spectrum of $H^0_{\rm per}$ is purely 
  absolutely continuous, composed of bands as stated in \cite[Thm
  1-2]{Thomas-73} and \cite[Thm XIII.100]{ReeSim4}.

More precisely, denoting by $\cR^\ast$ the reciprocal lattice, by $\Gamma$
the unit cell, and by $\Gamma^\ast$ the Brillouin zone,
 we have 
$$
\sigma(H^0_{\rm per})=\bigcup_{n\geq1,\
  q\in\Gamma^\ast} \left\{\epsilon_{n,q}\right\}
$$
where for all $q \in \Gamma^\ast$, $(\epsilon_{n,q})_{n\geq1}$ is the
non-decreasing sequence formed by the eigenvalues (counted with their
multiplicities) of the operator 
$$
(H^0_{\rm per})_q = 
-\frac{1}{2} \Delta-iq\cdot\nabla +\frac{|q|^2}{2} +V_{\rm per}
$$ 
acting on 
$$
L^2_{\rm per}(\Gamma):= \left\{ u \in L^2_{\rm loc}(\R^3) \; | \; u\ 
  \mbox{$\cR$-periodic} \right\},
$$
endowed with the inner product
$$
\langle u,v \rangle_{L^2_{\rm per}} = \int_\Gamma \overline{u} \, v.
$$
We denote by $(u_{n,q})_{n \ge 1}$ an
orthonormal basis of $L^2_{\rm per}(\Gamma)$ consisting of associated
eigenfunctions. The spectral decomposition of $(H^0_{\rm
  per})_q$ thus reads
\begin{equation} \label{eq:dec_HOperq}
(H^0_{\rm per})_q = \sum_{n=1}^\infty \epsilon_{n,q} |u_{n,q}\rangle
\langle u_{n,q}|.
\end{equation}
Recall that according to the Bloch-Floquet theory \cite{ReeSim4}, any
function $f \in L^2(\R^3)$ can be decomposed as 
$$
f(x) = \fint_{\Gamma^\ast} f_q(x) \, e^{iq\cdot x} dq,
$$
where $\fint_{\Gamma^\ast}$ is a notation for
$|\Gamma^\ast|^{-1}\int_{\Gamma^\ast}$ and where the functions $f_q$ are
defined by 
\begin{equation}
f_q(x)=\sum_{R\in\cR}f(x+R) e^{-iq\cdot
  (x+R)}=\frac{(2\pi)^{\frac 32}}{|\Gamma|}
\sum_{K\in\cR^\ast}\widehat{f}(q+K)e^{iK\cdot x}.
\label{def_Bloch_Floquet}
\end{equation}
For almost all $q \in \R^3$, $f_q \in L^2_{\rm per}(\Gamma)$. Besides, 
$f_{q+K}(x)=f_q(x)e^{-iK\cdot x}$ for all $K \in \cR^\ast$ and almost
all $q \in \R^3$. Lastly,
$$
\|f\|_{L^2(\RR^3)}^2 = \fint_{\Gamma^\ast} \|f_q\|_{L^2_{\rm per}(\Gamma)}^2 \, dq.
$$

If the crystal possesses $N$ electrons per unit cell, the
Fermi level $\epsilon_{\rm F}$ is chosen to ensure the correct
charge per unit cell: 
\begin{equation}
N=\sum_{n\geq1} \left|\{q\in\Gamma^\ast\ |\
  \epsilon_{n,q}\leq{\epsilon_{\rm F}}\}\right|.
\label{def_Z}
\end{equation}
In the rest of the paper we will assume that the system is an insulator
(or a semi-conductor) in the sense that the $N^{\rm th}$ band is
strictly below the $(N+1)^{\rm st}$ band:
$$
\Sigma_N^+:=\max_{q\in\Gamma^\ast}\epsilon_{N,q}<\min_{q\in\Gamma^\ast}\epsilon_{N+1,q}:=\Sigma_{N+1}^-. 
$$
In this case, one can choose for $\epsilon_{\rm F}$ any number in the range
$(\Sigma_N^+,\Sigma_{N+1}^-)$. For simplicity we will take in the following 
$$\epsilon_{\rm F}=\frac{\Sigma_N^++\Sigma_{N+1}^-}{2}
$$
and denote by 
$$
g = \Sigma_{N+1}^- - \Sigma_N^+ > 0
$$
the band gap.

\subsection{The perturbed crystal}
\label{sec:RHFperturbed}
Before turning to the model for the crystal with a defect which was introduced in \cite{CanDelLew-08a},  let us recall that a bounded linear operator $Q$ on $L^2(\R^3)$ is said to be
\emph{trace-class} \cite{ReeSim4,Simon-79} if
$\sum_i\pscal{\phi_i,\sqrt{Q^\ast Q}\phi_i}_{L^2}<\ii$ for
some orthonormal basis $(\phi_i)$ of $L^2(\R^3)$. Then
$\tr(Q)=\sum_i\pscal{\phi_i,Q\phi_i}_{L^2}$ is well-defined and does not
depend on the chosen basis. If $Q$ is not trace-class, it may happen
that the series $\sum_i\pscal{\phi_i,Q\phi_i}_{L^2}$ converges for one
specific basis but not for another one. This will be the case for our
operators $Q_{\nu,\epsilon_{\rm F}}$.  

A compact operator $Q=\sum_i\lambda_i|\phi_i\rangle\langle\phi_i| \in
{\mathcal S}(L^2(\R^3))$ is trace-class when its eigenvalues are
summable, $\sum_i|\lambda_i|<\ii$. Then the density 
$$
\rho_Q(x) = Q(x,x) = \sum_{i=1}^{+\infty} \lambda_i |\phi_i(x)|^2
$$
is a function of $L^1(\R^3)$ and 
$$
\tr(Q) = \sum_{i=1}^{+\infty} \lambda_i = \int_{\R^3} \rho_Q.
$$
On the other hand, a Hilbert-Schmidt operator $Q$ is by definition such
that $Q^\ast Q$ is trace-class. 

\medskip

We now describe the results of \cite{CanDelLew-08a} dealing with the perturbed crystal. We have proved in \cite{CanDelLew-08a} by means of bulk limit arguments
that the 
ground state density matrix of the crystal with nuclear charge density
$\rho^{\rm nuc}_{\rm per} + \nu$ reads
$$
\gamma = \gamma^0_{\rm per} + Q_{\nu,\epsilon_{\rm F}}
$$
where $Q_{\nu,\epsilon_{\rm F}}$ is obtained by minimizing the following energy
functional
\begin{equation}
E_{\nu,\epsilon_{\rm F}}(Q) =  \tr(|H^0_{\rm per}-\epsilon_{\rm F}|(Q^{++}-Q^{--})) -
\int_{\RR^3} \rho_Q (\nu \star |\cdot|^{-1}) + \frac 1 2 D(\rho_Q,\rho_Q)
\end{equation}
on the convex set 
\begin{equation} \label{eq:defK}
\cK = \big\{ Q \in \cQ \; | \; -\gamma^0_{\rm per} \le Q \le 1 - 
\gamma^0_{\rm per} \big\}
\end{equation}
where
\begin{eqnarray}
\cQ &=& \big\{ Q \in \gS_2 \; |  \; Q^\ast = Q, \; \; Q^{--} \in \gS_1, \; Q^{++} \in \gS_1, \label{eq:defQ}
 \\ && \qquad
\qquad \quad |\nabla|Q \in \gS_2, \; |\nabla|Q^{--}|\nabla| \in \gS_1,
\; |\nabla|Q^{++}|\nabla| \in \gS_1  \big\}. \nonumber 
\end{eqnarray}
Recall that  $\gS_1$ and $\gS_2$ respectively denote the spaces of trace-class
and Hilbert-Schmidt operators on $L^2(\R^3)$ and that
$$
\begin{array}{ll}
Q^{--} := \gamma^0_{\rm per} Q \gamma^0_{\rm per} & \qquad 
Q^{-+} := \gamma^0_{\rm per} Q (1-\gamma^0_{\rm per}) \\
Q^{+-} := (1-\gamma^0_{\rm per}) Q \gamma^0_{\rm per} & \qquad 
Q^{++} := (1-\gamma^0_{\rm per}) Q (1-\gamma^0_{\rm per}).
\end{array}
$$
It is proved in \cite{CanDelLew-08a} that although a generic operator $Q
\in \cQ$ is 
not trace-class, it can be associated a generalized trace $\tr_0(Q) =
\tr(Q^{++})+ \tr(Q^{--})$ and
a density $\rho_Q \in L^2(\R^3)\cap \cC$ where the so-called Coulomb
space $\cC$ is defined as  
$$
\cC:=\left\{f\in \cS'(\R^3)\ |\ D(f,f)<\ii \quad \text{ where }\quad
D(f,f):= 4\pi \int_{\R^3} \frac{|\hat f(k)|^2}{|k|^2} \, dk \right\}.
$$
Endowed with its natural inner product
$$
\langle f,g \rangle_\cC := D(f,g):= 4\pi \int_{\R^3}
\frac{\overline{\hat f(k)} \, \hat g(k)}{|k|^2} \, dk,
$$
$\cC$ is a Hilbert space. Its dual space is
$$
\cC' := \left\{ V \in L^6(\RR^3) \, | \, \nabla V \in (L^2(\RR^3))^3 \right\},
$$
endowed with the inner product
$$
\langle V_1,V_2 \rangle_{\cC'} := \frac{1}{4\pi}
\int_{\R^3} \nabla V_1 \cdot \nabla V_2 = 
\frac{1}{4\pi}\int_{\R^3} |k|^2\overline{\hat V_1(k)} \, \hat V_2(k)\, dk.
$$
Note that if $Q \in \cK \cap \gS_1$, then of course $\tr_0(Q)=\tr(Q)$,
$\rho_Q(\cdot)=Q(\cdot,\cdot) \in L^1(\R^3)$ and $\tr(Q) = \int_{\R^3} \rho_Q$. 

The energy functional $E_{\nu,\epsilon_{\rm F}}$ is well-defined on
$\cK$ for all $\nu$ such that $(\nu \star |\cdot|^{-1}) \in L^2(\RR^3) +
\cC'$. The first term of $E_{\nu,\epsilon_{\rm F}}$ makes sense as it holds 
$$
c_1 (1-\Delta) \le |H^0_{\rm per}-\epsilon_{\rm F}| \le C_1 (1-\Delta)
$$
for some constants $0 < c_1 < C_1 < \infty$
(see \cite[Lemma 1]{CanDelLew-08a}). The last two terms of $E_{\nu,\epsilon_{\rm F}}$ are
also well defined since 
$\rho_Q \in L^2(\RR^3) \cap \cC$ for all $Q \in \cK$.

The following
result is a straightforward extension of Theorem~2 in
\cite{CanDelLew-08a}, allowing in particular to account for point-like
nuclar charges: if $\nu$ is a Dirac mass, $\nu \star |\cdot|^{-1} \in
L^2(\R^3) + \cC'$. 

\begin{theorem}[Existence of a minimizer for perturbed crystal] \label{thm:defaut}
Let $\nu$ such that $(\nu \star |\cdot|^{-1}) \in L^2(\RR^3) +
\cC'$. Then, the minimization problem
\begin{equation} \label{eq:min_E}
\inf \left\{ E_{\nu,\epsilon_{\rm F}}(Q), \; Q \in \cK \right\}
\end{equation}
has a minimizer $Q_{\nu,\epsilon_{\rm F}}$, and all the minimizers of
(\ref{eq:min_E}) share the same density $\rho_{\nu,\epsilon_{\rm
    F}}$. In addition, $Q_{\nu,\epsilon_{\rm F}}$ is solution to the
self-consistent equation
\begin{equation} \label{eq:SCF}
Q_{\nu,\epsilon_{\rm F}} = 
1_{(-\infty,\epsilon_{\rm F})} \left(H^0_{\rm per} + (\rho_{\nu,\epsilon_{\rm
    F}}-\nu) \star |\cdot|^{-1} \right) -  
1_{(-\infty,\epsilon_{\rm F}]} \left(H^0_{\rm per}\right) + \delta
\end{equation}
where $\delta$ is a finite-rank self-adjoint operator on $L^2(\R^3)$
such that $0 \le \delta \le 1$ and $\mbox{Ran}(\delta) \subset
\mbox{Ker}\left( H^0_{\rm per} - \epsilon_{\rm F} \right)$. 
\end{theorem}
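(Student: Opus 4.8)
The plan is to treat the minimization \eqref{eq:min_E} by the direct method of the calculus of variations, following the strategy used for Theorem~2 in \cite{CanDelLew-08a}, the only genuinely new point being that the external potential $V:=\nu\star|\cdot|^{-1}$ is now merely assumed to lie in $L^2(\R^3)+\cC'$ (so as to cover point-like nuclei). First I would record that $\cK$ is convex and that $E_{\nu,\epsilon_{\rm F}}$ is convex on $\cK$: the first term is linear in $Q$ because $Q\mapsto Q^{++}-Q^{--}$ is linear, the coupling term $-\int_{\R^3}\rho_Q\,V$ is linear in $Q$, and $\frac12 D(\rho_Q,\rho_Q)$ is convex since $D$ is a nonnegative quadratic form. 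To exploit the hypothesis on $V$ I would fix once and for all a splitting $V=V_2+V_{\cC'}$ with $V_2\in L^2(\R^3)$ and $V_{\cC'}\in\cC'$; well-posedness of the three terms on $\cK$ is already granted by the discussion preceding the theorem.

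The heart of the argument, and the step I expect to be the main obstacle, is coercivity. The key ingredient is the operator inequality $Q^2\le Q^{++}-Q^{--}$, valid for every $Q\in\cK$, which follows by expanding $0\le(\gamma^0_{\rm per}+Q)^2\le\gamma^0_{\rm per}+Q$ and regrouping the four blocks. Combined with $c_1(1-\Delta)\le|H^0_{\rm per}-\epsilon_{\rm F}|$, this shows that the first term of $E_{\nu,\epsilon_{\rm F}}$ dominates $c_1\,\tr\big((1-\Delta)Q^2\big)=c_1\big(\|Q\|_{\gS_2}^2+\||\nabla|Q\|_{\gS_2}^2\big)$, which in turn controls $\|\rho_Q\|_{L^2(\R^3)}$ via the density estimate of \cite{CanDelLew-08a}. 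I would then bound the coupling term by duality: $|\int_{\R^3}\rho_Q\,V_{\cC'}|\le\|\rho_Q\|_{\cC}\,\|V_{\cC'}\|_{\cC'}$ and $|\int_{\R^3}\rho_Q\,V_2|\le\|\rho_Q\|_{L^2}\|V_2\|_{L^2}$. Since $\|\rho_Q\|_{\cC}^2=D(\rho_Q,\rho_Q)$ is exactly the third term and $\|\rho_Q\|_{L^2}^2$ is controlled by the first, both contributions can be absorbed by Young's inequality, yielding a bound showing that $E_{\nu,\epsilon_{\rm F}}(Q)+C$ controls both $\tr\big((1-\Delta)Q^2\big)$ and $D(\rho_Q,\rho_Q)$. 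This is precisely the coercivity for the natural norm of $\cQ$ together with the Coulomb norm of the density.

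With coercivity established I would take a minimizing sequence $(Q_n)$, which is bounded in $\gS_2$ with $|\nabla|Q_n$, $Q_n^{\pm\pm}$ and $|\nabla|Q_n^{\pm\pm}|\nabla|$ bounded in the relevant Schatten classes and $\rho_{Q_n}$ bounded in $L^2(\R^3)\cap\cC$; after extraction $Q_n\wto Q$ with $\rho_{Q_n}\wto\rho_Q$ weakly in $L^2(\R^3)\cap\cC$. One checks that $\cK$ is stable under this weak convergence and that $E_{\nu,\epsilon_{\rm F}}$ is weakly lower semicontinuous: the first term is weakly lsc because $|H^0_{\rm per}-\epsilon_{\rm F}|\ge0$ and $Q^{++}\ge0\ge Q^{--}$ on $\cK$, the Coulomb term is convex and continuous on $\cC$ hence weakly lsc, and the linear coupling term passes to the limit by the weak $L^2(\R^3)\cap\cC$-convergence of the densities. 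Hence $Q\in\cK$ is a minimizer. Uniqueness of the density follows from strict convexity of $\rho\mapsto D(\rho,\rho)$: if $Q_1,Q_2$ are two minimizers then convexity of $E_{\nu,\epsilon_{\rm F}}$ and of $\cK$ force $D(\rho_{Q_1}-\rho_{Q_2},\rho_{Q_1}-\rho_{Q_2})=0$, whence $\rho_{Q_1}=\rho_{Q_2}=:\rho_{\nu,\epsilon_{\rm F}}$.

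Finally, for the self-consistent equation \eqref{eq:SCF} I would compute the one-sided directional derivative of $E_{\nu,\epsilon_{\rm F}}$ at the minimizer $Q$ along $Q'-Q$ for arbitrary $Q'\in\cK$, using convexity of $\cK$ to keep $Q+t(Q'-Q)\in\cK$. Via the identity $\tr\big(|H^0_{\rm per}-\epsilon_{\rm F}|(R^{++}-R^{--})\big)=\tr\big((H^0_{\rm per}-\epsilon_{\rm F})R\big)$, which holds because $H^0_{\rm per}$ commutes with $\gamma^0_{\rm per}$ and $\epsilon_{\rm F}$ lies strictly in the gap, the derivative equals $\tr\big((H_Q-\epsilon_{\rm F})(Q'-Q)\big)$ with $H_Q:=H^0_{\rm per}+(\rho_{\nu,\epsilon_{\rm F}}-\nu)\star|\cdot|^{-1}$. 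The optimality inequality $\tr\big((H_Q-\epsilon_{\rm F})(Q'-Q)\big)\ge0$ for all $Q'\in\cK$ means that $\gamma^0_{\rm per}+Q$ minimizes $\gamma'\mapsto\tr\big((H_Q-\epsilon_{\rm F})\gamma'\big)$ over $\{0\le\gamma'\le1\}$; a standard filling (bathtub) argument then identifies it as $1_{(-\infty,\epsilon_{\rm F})}(H_Q)$ up to a self-adjoint operator $\delta$ with $0\le\delta\le1$ whose range lies in the eigenspace of the mean-field Hamiltonian at $\epsilon_{\rm F}$, which is finite-dimensional since $(\rho_{\nu,\epsilon_{\rm F}}-\nu)\star|\cdot|^{-1}$ is a relatively compact perturbation of $H^0_{\rm per}$. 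Subtracting $\gamma^0_{\rm per}=1_{(-\infty,\epsilon_{\rm F}]}(H^0_{\rm per})$ then yields \eqref{eq:SCF}. I expect the coercivity step to be the decisive difficulty, precisely because the new $L^2$ component $V_2$ of the potential must be absorbed by the kinetic term rather than by the Coulomb term.
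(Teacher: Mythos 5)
Your overall architecture (direct method, convexity, weak lower semicontinuity, Euler--Lagrange and bathtub argument for \eqref{eq:SCF}) matches the paper's, which indeed handles everything except coercivity by transposing the proof of Theorem~2 of \cite{CanDelLew-08a}. But your coercivity step --- the one you yourself flag as decisive --- contains a genuine gap. You claim that the kinetic term controls $\|\rho_Q\|_{L^2}^2$, so that the coupling $\int_{\R^3}\rho_Q V_2$ can be absorbed by Young's inequality for a splitting $V=V_2+V_{\cC'}$ fixed once and for all. The density estimate you invoke from \cite{CanDelLew-08a} is $\|\rho_Q\|_{L^2}\le C\|Q\|_{\cQ}$, and the $\cQ$-norm contains not only the Hilbert--Schmidt quantities $\|Q\|_{\gS_2}$ and $\||\nabla|Q\|_{\gS_2}$ --- which are indeed $O(\sqrt{T})$, with $T:=\tr\left((1-\Delta)(Q^{++}-Q^{--})\right)$, by $Q^2\le Q^{++}-Q^{--}$ --- but also the trace norms $\|Q^{\pm\pm}\|_{\gS_1}$ and $\||\nabla|Q^{\pm\pm}|\nabla|\|_{\gS_1}$, which are only $O(T)$: linear, not square-root. (The density of the diagonal blocks genuinely scales linearly in their trace norm; already $\|\rho_{Q^{++}}\|_{L^1}=\tr(Q^{++})$.) Hence what the cited estimate yields for $Q\in\cK$ is $\|\rho_Q\|_{L^2}\le C'(1+T)$, so the coupling term is bounded below only by $-\|V_2\|_{L^2}\,C'(1+T)$. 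This is linear in $T$ with prefactor $C'\|V_2\|_{L^2}$, and Young's inequality cannot absorb a linear term into the linear term $c_1 T$: if your fixed splitting happens to have $C'\|V_2\|_{L^2}\ge c_1$, the lower bound degenerates and coercivity is lost. Note also that, contrary to what you write, $\tr\left((1-\Delta)Q^2\right)$ alone does not control $\|\rho_Q\|_{L^2}$; the $\gS_1$-norms of the diagonal blocks are needed even to define $\rho_Q$.

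This is precisely the point where the paper's proof does something you do not. Since $C^\infty_0(\R^3)$ is dense in $L^2(\R^3)$ and contained in $\cC'$, the splitting is chosen $\eta$-dependently, $V=V_{2,\eta}+V'_\eta$ with $\|V_{2,\eta}\|_{L^2}\le\eta$, and $\eta$ is then fixed so small that $2\eta C'<c_1$. The $\cC'$-part is absorbed by completing the square against the Coulomb term $\frac12 D(\rho_Q,\rho_Q)$, costing only the constant $\frac12 D(\nu'_\eta,\nu'_\eta)$, while the now-arbitrarily-small linear-in-$T$ contribution of the $L^2$-part is absorbed by the kinetic term, giving \eqref{eq:bounds}. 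Your argument could alternatively be repaired without this trick by proving a genuinely sublinear density bound on $\cK$ --- e.g.\ Lieb--Thirring and Hoffmann--Ostenhof inequalities applied to $0\le Q^{++}\le 1$ and $0\le -Q^{--}\le 1$ give $\|\rho_{Q^{\pm\pm}}\|_{L^2}\lesssim T^{3/4}$ --- but that is a different estimate from the one you cite, and as written your coercivity step fails.
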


\begin{remark}
Our notation $Q_{\nu,\epsilon_{\rm F}}$ does not mean that minimizers of
$E_{\nu,\epsilon_{\rm F}}$ are necessarily uniquely defined (although
the minimizing density $\rho_{\nu,\epsilon_{\rm F}}$ is itself unique). However, as we will see below in Lemma \ref{prop:small}, when $\nu\ll1$ in an appropriate sense, one has $\delta=0$ hence $Q_{\nu,\epsilon_{\rm F}}$ is indeed unique.
\end{remark}

In this approach,
the electronic charge of the defect is controlled by the Fermi level
$\epsilon_{\rm F}$, and not via a direct constraint on $\tr_0(Q)$ (see \cite{CanDelLew-08a} for results in the latter case).
When 
$$
\epsilon_{\rm F} \in (\Sigma_N^+,\Sigma_{N+1}^-) \setminus 
\sigma\left(H^0_{\rm per}+(\rho_{\nu,\epsilon_{\rm F}}-\nu) \star
  |\cdot|^{-1}\right), 
$$ 
the minimizer $Q_{\nu,\epsilon_{\rm F}}$ is uniquely defined. It is both a Hilbert-Schmidt
operator ($\tr(Q_{\nu,\epsilon_{\rm F}})^2<\ii$) and the difference
of two orthogonal projectors (since $\delta=0$ then). In this case,
$\tr_0(Q_{\nu,\epsilon_{\rm F}})$ is always an integer, as
proved in \cite[Lemma 2]{HaiLewSer-05a}. 
The integer $\tr_0(Q_{\nu,\epsilon_{\rm F}})$ can be interpreted as the
electronic charge of the state 
$\gamma = \gamma^0_{\rm per}+Q_{\nu,\epsilon_{\rm F}}$ (measured with
respect to the Fermi sea $\gamma^0_{\rm per}$). 
We will see later in Lemma \ref{prop:small} that
$\tr_0(Q_{\nu,\epsilon_{\rm F}})=0$ whenever $\nu$ is small
enough. 

Note that the fact that $Q_{\nu,\epsilon_{\rm F}}$ is both a Hilbert-Schmidt
operator and the difference of two orthogonal projectors automatically implies
that the generalized trace of $Q_{\nu,\epsilon_{\rm F}}$ is well-defined since 
$$
Q_{\nu,\epsilon_{\rm F}}^2=Q_{\nu,\epsilon_{\rm F}}^{++}-Q_{\nu,\epsilon_{\rm F}}^{--},
$$
with $Q_{\nu,\epsilon_{\rm F}}^{++} \ge 0$ and $Q_{\nu,\epsilon_{\rm
    F}}^{--} \le 0$. Let us remark incidently that the condition 
$\tr(Q_{\nu,\epsilon_{\rm F}}^2)<\ii$ is required in the
Shale-Stinespring Theorem which guarantees the equivalence of the Fock space
representations \cite{HaiLewSer-05a,CanDelLew-08b} defined by $\gamma^0_{\rm per}$ and
$\gamma=\gamma^0_{\rm per}+Q_{\nu,\epsilon_{\rm F}}$ respectively. 

One of the purposes of this article is to study in more details the operator $Q_{\nu,\epsilon_{\rm F}}$ and the function $\rho_{{\nu,\epsilon_{\rm F}}}$. 

\section{Linear response to an effective potential}
\label{sec:response_eff}

In this section, we study the linear response of the electronic ground state of
a crystal to a small {\em effective} potential $V \in L^2(\R^3) +
\cC'$. This means more precisely that we expand the formula
$$
Q_V = 1_{(-\infty,\epsilon_{\rm F}]} \left(H^0_{\rm per} + V \right) -  
1_{(-\infty,\epsilon_{\rm F}]} \left(H^0_{\rm per}\right),
$$
in powers of $V$ (for $V$ small enough) and state some important properties of the first order term. The higher order terms will be studied with more details later in Lemma \ref{lem:order2}. Obviously, the first order term will play a decisive role in the study of the properties of nonlinear minimizers.

As mentioned in the introduction, the results of this section can be
used not only for the reduced Hartree-Fock model considered in the paper, but also for the linear model and for the
Kohn-Sham LDA framework. In the reduced
Hartree-Fock model, the effective potential is $V =
(\rho_{\nu,\epsilon_{\rm F}}-\nu) \star |\cdot|^{-1}$. In the linear model,
the interaction between electrons is neglected and $V$ coincides with
the external potential: $V = V_{\rm ext} = - \nu \star |\cdot|^{-1}$. In
the Kohn-Sham LDA model,
$$
V = (\rho_{\nu,\epsilon_{\rm F}}^{{\rm LDA}}-\nu) \star |\cdot|^{-1} + v_{\rm
  xc}^{\rm LDA}(\rho^0_{\rm per}+ \rho_{\nu,\epsilon_{\rm F}}^{{\rm LDA}})-v_{\rm
  xc}^{\rm LDA}(\rho^0_{\rm per})
$$
where $v_{\rm xc}$ is the LDA exchange-correlation potential and
$\rho_{\nu,\epsilon_{\rm F}}^{{\rm LDA}}$ the variation of the 
electronic density induced by the external potential $V_{\rm ext} = -
\nu \star |\cdot|^{-1}$, see \cite{CanDelLew-08b}. 

\bigskip

Expanding (formally) $Q_V$ in powers of $V$ and using the resolvent formula leads to considering the following operator
\begin{equation}
Q_{1,V} =  \frac{1}{2i\pi} \oint_\curv
\left( z-H^0_{\rm per} \right)^{-1} V \left(z-H^0_{\rm
      per}\right)^{-1} \, dz,
\label{eq:Q1V} 
\end{equation}
where  $\curv$ is a smooth curve in the
complex plane enclosing the whole 
spectrum of $H^0_{\rm per}$ below $\epsilon_{\rm F}$, crossing the
real line at $\epsilon_{\rm F}$ and at some $c < \inf\sigma(H^0_{\rm
  per})$. In order to relate our work to the Physics literature, we start by defining the independent particle polarizability operator $\chi_0$.

\begin{proposition}[Independent particle polarizability]\label{prop:chi0} 
If $V \in L^2(\R^3) + \cC'$, the operator $Q_{1,V}$ defined above in
\eqref{eq:Q1V} 
is in $\cQ$ and $\tr_0(Q_{1,V})=0$. If $V \in L^1(\R^3)$,
$Q_{1,V} \in \gS_1$ and $\tr(Q_{1,V})=0$.

The independent particle
polarizability operator $\chi_0$ defined by 
$$\boxed{\chi_0 V := \rho_{Q_{1,V}}}$$
is a continuous linear application from $L^1(\R^3)$ to $L^1(\R^3)$ and 
from $L^2(\R^3) + \cC'$ to $L^2(\R^3) \cap \cC$.
\end{proposition}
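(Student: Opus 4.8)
The plan is to separate the \emph{algebraic} content of $Q_{1,V}$ (its block structure and $\tr_0=0$) from the \emph{quantitative} bounds, and to reduce all the density statements to estimates on the symmetric bilinear form $(V,W)\mapsto\tr(Q_{1,V}W)$. First I would evaluate the contour integral defining $Q_{1,V}$ in the spectral representation of $H^0_{\rm per}$: in an eigenbasis the entry between states of energies $\lambda,\lambda'$ is $\frac{1}{2i\pi}\oint_\curv (z-\lambda)^{-1}(z-\lambda')^{-1}\,dz$, which by the residue theorem vanishes when $\lambda,\lambda'$ lie on the same side of $\epsilon_{\rm F}$ and equals $\pm(\lambda-\lambda')^{-1}$ when they lie on opposite sides. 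Hence $Q_{1,V}^{--}=Q_{1,V}^{++}=0$, so $Q_{1,V}=Q_{1,V}^{+-}+Q_{1,V}^{-+}$ is purely off-diagonal with $(Q_{1,V}^{+-})^\ast=Q_{1,V}^{-+}$. This gives $\tr_0(Q_{1,V})=\tr(Q_{1,V}^{++})+\tr(Q_{1,V}^{--})=0$ at once, trivialises the four conditions in the definition of $\cQ$ that involve $Q^{--}$ and $Q^{++}$, and reduces membership in $\cQ$ to $Q_{1,V}\in\gS_2$ and $|\nabla|Q_{1,V}\in\gS_2$. The same computation records the decisive feature that on the surviving blocks one divides by $\lambda-\lambda'$, a quantity bounded below in modulus by the gap $g>0$.

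Next I would dispatch the $L^1$ and $L^2$ claims by resolvent--potential Hilbert--Schmidt bounds. Writing $R(z)=(z-H^0_{\rm per})^{-1}$, whose kernel $G_z(x,y)$ has (uniformly for $z$ on the bounded curve $\curv$) a local $|x-y|^{-1}$ singularity and exponential decay, the three-dimensional integrability of $|x-y|^{-2}$ yields $\int_{\R^3}|G_z(x,y)|^2\,dx\le C$ uniformly in $y$, whence $\|R(z)f\|_{\gS_2}\le C\|f\|_{L^2}$. For $V\in L^1$ I factor $V=V_1V_2$ with $V_1,V_2\in L^2$ and write $Q_{1,V}=\frac{1}{2i\pi}\oint_\curv [R(z)V_1][V_2R(z)]\,dz$, a contour integral of products of two $\gS_2$ operators of norm $\le C\|V\|_{L^1}^{1/2}$; thus $Q_{1,V}\in\gS_1$ with $\|Q_{1,V}\|_{\gS_1}\le C\|V\|_{L^1}$, so $\rho_{Q_{1,V}}\in L^1$, $\|\rho_{Q_{1,V}}\|_{L^1}\le\|Q_{1,V}\|_{\gS_1}$, and $\tr(Q_{1,V})=\tr_0(Q_{1,V})=0$. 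For $V\in L^2$ the same kernel bound gives $\|R(z)V\|_{\gS_2}\le C\|V\|_{L^2}$ directly, so $Q_{1,V}\in\gS_2$ (and $|\nabla|Q_{1,V}\in\gS_2$ after inserting the bounded factor $(1-\Delta)^{1/2}R(z)$, using $c_1(1-\Delta)\le|H^0_{\rm per}-\epsilon_{\rm F}|$); moreover $|\tr(Q_{1,V}W)|\le\frac{1}{2\pi}\oint_\curv\|R(z)V\|_{\gS_2}\|R(z)W\|_{\gS_2}\,|dz|\le C\|V\|_{L^2}\|W\|_{L^2}$ for all $W\in L^2$, giving $\rho_{Q_{1,V}}\in L^2$ with $\|\rho_{Q_{1,V}}\|_{L^2}\le C\|V\|_{L^2}$.

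The heart of the matter is the Coulomb regime, namely $Q_{1,V}\in\gS_2$ and $\rho_{Q_{1,V}}\in\cC$ for $V\in\cC'$, where the preceding $L^2$ argument breaks down since $\cC'\subset L^6$ only and the low momenta of $V$ are not square-integrable against $R(z)$. The gain must come from the numerators, so I would pass to the Bloch--Floquet representation, diagonalising $H^0_{\rm per}$ into the fibers $(H^0_{\rm per})_q$ and representing $\gamma^0_{\rm per}$ and $1-\gamma^0_{\rm per}$ through the fibered spectral projectors $P_{-,q}$ and $1-P_{-,q}$. Because the gap separates the occupied bands, $P_{-,q}$ is a Riesz projector depending smoothly on $q$ with $\nabla_q P_{-,q}$ bounded uniformly. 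The key point is that multiplication by $V$ transfers quasimomentum by $k$, and the \emph{interband} matrix element of $e^{ik\cdot x}$ between occupied and virtual fibers differing by $k$ vanishes at $k=0$ by orthogonality, hence is $O(|k|)$ uniformly for small $k$, the linear coefficient being controlled by $\nabla_q P_{-,q}$. Carrying two such factors (one from each potential) through the Hilbert--Schmidt and trace computations inserts a weight $|k|^2$ in the low-momentum region, while the high-momentum region, where $|k|\,\widehat V(k)\in L^2$ already forces $\widehat V\in L^2$ locally, reduces to the $L^2$ estimates above. This yields $\|Q_{1,V}\|_{\gS_2}\le C\|V\|_{\cC'}$ (the $|\nabla|$-weighted variant following identically via the $(1-\Delta)$-comparison) and, for $W\in\cC'$, $|\tr(Q_{1,V}W)|\le Cg^{-1}\int_{\R^3}|k|^2|\widehat V(k)|\,|\widehat W(k)|\,dk\le C\|\nabla V\|_{L^2}\|\nabla W\|_{L^2}$ by Cauchy--Schwarz. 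Since $\|\rho\|_\cC=\sup_{\|U\|_{\cC'}\le1}\int_{\R^3}\rho\,U$ and $\int_{\R^3}\rho_{Q_{1,V}}U=\tr(Q_{1,V}U)$, this gives $\rho_{Q_{1,V}}\in\cC$ with $\|\rho_{Q_{1,V}}\|_\cC\le C\|V\|_{\cC'}$.

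Finally I would assemble the four regimes from this single Bloch--Floquet estimate, tracking that each $\cC'$ factor contributes one power of $|k|$ through the interband cancellation while an $L^2$ factor contributes none; this produces $|\tr(Q_{1,V}W)|\le C\|V\|_X\|W\|_Y$ for all four choices $X,Y\in\{L^2,\cC'\}$, the two mixed cases being identified by the symmetry $\tr(Q_{1,V}W)=\tr(Q_{1,W}V)$ (cyclicity of the trace in the contour integral). Combining $|\tr(Q_{1,V}W)|\le C\|V\|_{L^2+\cC'}\|W\|_{L^2}$ with $|\tr(Q_{1,V}W)|\le C\|V\|_{L^2+\cC'}\|W\|_{\cC'}$ and $\rho_{Q_{1,V}}=\chi_0V$ then yields continuity of $\chi_0$ from $L^2+\cC'$ into $L^2\cap\cC$, while the $L^1$ step already gives $\chi_0\colon L^1\to L^1$. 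The main obstacle is the third step: making the $O(|k|)$ vanishing of the interband matrix elements rigorous and, above all, uniform in $q$ and summable over the infinitely many virtual bands. This is exactly where the spectral gap is indispensable, both through the regularity of the occupied bundle $q\mapsto P_{-,q}$ and through the comparison $|H^0_{\rm per}-\epsilon_{\rm F}|\le C_1(1-\Delta)$ used to control the high-band tails; once these are in hand the remaining manipulations are routine.
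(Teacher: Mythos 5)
Your proposal is correct in substance, and in its first half it coincides with the paper's own argument: the residue computation giving $Q_{1,V}^{++}=Q_{1,V}^{--}=0$ (hence $\tr_0(Q_{1,V})=0$ and the reduction of membership in $\cQ$ to Hilbert--Schmidt statements) is exactly the ``residuum formula'' step inside Lemma~\ref{prop:Dyson}, and your factorization $V=V_1V_2$ with resolvent--potential Hilbert--Schmidt bounds is the Kato--Seiler--Simon argument \eqref{KSS} that the paper uses for the $L^1$ and $L^2$ parts (note that you do not actually need pointwise kernel bounds on $(z-H^0_{\rm per})^{-1}$, which are delicate for $V_{\rm per}\in L^2_{\rm per}$: writing $(z-H^0_{\rm per})^{-1}=B(z)(1-\Delta)^{-1}$ with $B(z)$ bounded, as in Lemma~\ref{lem:technical}, and using the explicit kernel of $(1-\Delta)^{-1}$ suffices). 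Where you genuinely diverge is the Coulomb regime $V\in\cC'$ and the passage from operator bounds to density bounds. The paper disposes of both by citation: the $\cC'$ part of the potential is absorbed through the identity $\gamma^0_{\rm per}V(\gamma^0_{\rm per})^\perp=[\gamma^0_{\rm per},V](\gamma^0_{\rm per})^\perp$ together with the commutator estimate $\|[\gamma^0_{\rm per},V]\|_{\gS_2}\le C\|V\|_{L^2+\cC'}$ of Lemma~\ref{lem:technical}, quoted from \cite{CanDelLew-08a}, and the continuity of $V\mapsto\rho_{Q_{1,V}}$ into $L^2\cap\cC$ then follows from the bound $\|\rho_Q\|_{L^2\cap\cC}\le C\|Q\|_{\cQ}$ of \cite[Prop.~1]{CanDelLew-08a}. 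You instead re-derive the derivative gain in momentum space, via the Bloch--Floquet fibers and the $O(|k|)$ interband vanishing, and you obtain the density bounds by duality on the bilinear form $\tr(Q_{1,V}W)$; the symmetry $\tr(Q_{1,V}W)=\tr(Q_{1,W}V)$ you invoke is indeed valid by cyclicity under the contour integral (it is the same computation that makes $\cL$ self-adjoint in Proposition~\ref{lem:defL}). The two routes exploit the identical mechanism---the commutator estimate is precisely the operator-theoretic encoding of your interband cancellation---but deploy it differently: the paper keeps the proof of Proposition~\ref{prop:chi0} short and reserves the fiber analysis for Proposition~\ref{lem:limL}, where it is unavoidable, whereas your version is self-contained and treats all four $\{L^2,\cC'\}$ pairings uniformly by counting powers of $|k|$.

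Two caveats so that your plan closes completely. First, the technical heart you defer (uniformity in $q$ and summability over the virtual bands of the $O(|k|)$ bound) is fillable with exactly the tools the paper uses later: the identity \eqref{pscal_u_n_q} to convert the interband overlap into $\langle iq\cdot\nabla u_{n,q'},u_{n',q'-q}\rangle$ divided by an energy denominator bounded below by the gap and growing like $(n')^{2/3}$ (the bound $\epsilon_{n,q}\ge an^{2/3}-b$), Parseval over the band index $n'$ to sum the numerators, and the cutoff decomposition of $v_{\rm c}$ into singular and regular parts to separate the low-momentum region, where the cancellation acts, from the high-momentum region, where the $L^2$-type estimates apply. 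Second, your duality step tacitly uses $\tr(Q_{1,V}W)=\int_{\R^3}\rho_{Q_{1,V}}W$; this is legitimate here because $Q_{1,V}W$ is trace-class for $W$ in the relevant spaces (by the same Hilbert--Schmidt factorization), and it agrees with the definition of $\rho_Q$ by duality for $Q\in\cQ$ given in \cite{CanDelLew-08a}, but it should be stated, since for general non-trace-class operators the pairing is basis-dependent---a point the paper itself emphasizes.
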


The proof of Proposition \ref{prop:chi0} is provided below in Section \ref{sec:proof_chi0}.

For the cases we have to deal with, we can consider that the effective potential $V$ is the Colomb potential generated by a charge
distribution $\rho$:
$$V=\rho\star|\cdot|^{-1} \quad \text{for some $\rho$}.$$
We will have $\rho=-\nu$ for a linear model (non-interacting electrons)
and $\rho=\rho_{{\nu,\epsilon_{\rm F}}}-\nu$ for the nonlinear reduced
Hartree-Fock model. Following usual Physics notation, we denote by 
 $v_{\rm c}$ the Coulomb operator: 
$$
v_{\rm c}(\rho) = \rho \star |\cdot|^{-1},
$$ 
which defines an isometry from $\cC$ onto $\cC'$.
If $\rho \in v_{\rm c}^{-1}(L^2(\R^3)+\cC')=v_{\rm
  c}^{-1}(L^2(\R^3))+\cC$, then we have $v_{\rm c}(\rho) \in L^2(\R^3) +
\cC'$, hence $Q_{1,v_{\rm c}(\rho)}\in\cQ$ and $\rho_{Q_{1,v_{\rm c}(\rho)}}\in L^2(\R^3)\cap\cC$. 

We now define the linear response operator
\begin{equation}
\boxed{\cL(\rho):=-\rho_{Q_{1,v_{\rm c}\rho}}}
\label{def:cC}
\end{equation}
and we concentrate on the study of the operator $\cL$. As
$\cL=-\chi_0v_{\rm c}$, it follows from Proposition~\ref{prop:chi0} that
$\cL$ maps $\cC$ into $\cC \cap L^2(\R^3)$.
The reason why we have put a minus sign is very simple: in the rHF nonlinear case, we will have 
$$\rho_{\nu,\epsilon_{\rm F}}=\cL(\nu-\rho_{\nu,\epsilon_{\rm F}})+\tilde{r}_2$$
where $\tilde{r}_2$ contains the higher order terms, and and we will
rewrite the above equality as 
\begin{equation}
 (1+\cL)(\nu-\rho_{\nu,\epsilon_{\rm F}})=\nu-\tilde{r}_2.
\label{eq:SCF_rewritten}
\end{equation}
This motivates the following result.

\begin{proposition}[Self-adjointness of the operator $\cL$] \label{lem:defL}
$\cL$ defines a bounded nonnegative self-adjoint operator on $\cC$. 
Hence $1+\cL$, considered as an operator on $\cC$, is invertible and
bicontinuous from $\cC$ to $\cC$. 
\end{proposition}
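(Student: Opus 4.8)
The plan is to prove, in order, that $\cL$ is bounded, self-adjoint and nonnegative on $\cC$, and then to deduce the invertibility of $1+\cL$ from nonnegativity by a coercivity argument. Boundedness is immediate from what precedes: since $\cL=-\chi_0 v_{\rm c}$ and $v_{\rm c}$ is an isometry from $\cC$ onto $\cC'$, the map $v_{\rm c}$ sends $\cC$ continuously into $L^2(\R^3)+\cC'$, while by Proposition~\ref{prop:chi0} the operator $\chi_0$ sends $L^2(\R^3)+\cC'$ continuously into $L^2(\R^3)\cap\cC\subset\cC$. Composing, $\cL$ is a bounded operator on $\cC$.

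The heart of the matter is an explicit formula for the bilinear form $\langle\rho,\cL\rho'\rangle_\cC$. Writing $V=v_{\rm c}\rho$ and $V'=v_{\rm c}\rho'$ and using $D(\rho,\sigma)=\int_{\R^3}(v_{\rm c}\rho)\,\sigma$, I would rewrite $\langle\rho,\cL\rho'\rangle_\cC=-D(\rho,\rho_{Q_{1,V'}})=-\int_{\R^3}V\,\rho_{Q_{1,V'}}$, i.e. the $\cC$--$\cC'$ pairing of the potential $V$ with the density $\rho_{Q_{1,V'}}\in L^2\cap\cC$ supplied by Proposition~\ref{prop:chi0}; formally this is $-\tr(V\,Q_{1,V'})$ with $V$ the multiplication operator. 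I would then insert the spectral (Bloch--Floquet) decomposition of $H^0_{\rm per}$ into \eqref{eq:Q1V} and evaluate the contour integral by residues. Since $\curv$ encloses exactly the occupied spectrum, the $z$-integral kills the occupied--occupied and virtual--virtual contributions and retains only the cross terms, giving schematically
\[
Q_{1,V'}=\sum_{i\ \mathrm{occ},\ a\ \mathrm{virt}}\frac{|u_i\rangle\langle u_i|V'|u_a\rangle\langle u_a|+|u_a\rangle\langle u_a|V'|u_i\rangle\langle u_i|}{\epsilon_i-\epsilon_a},
\]
where $(\epsilon_i,u_i)$ range over occupied and $(\epsilon_a,u_a)$ over virtual Bloch states (in the crystal these sums are $\fint_{\Gamma^\ast}$-integrals over quasi-momenta and sums over bands $n\le N$, resp. $n\ge N+1$).

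Substituting into the pairing and using that $V,V'$ are real, so the multiplication operators are self-adjoint and $\langle u_i|V|u_a\rangle=\overline{\langle u_a|V|u_i\rangle}$, I obtain
\[
\langle\rho,\cL\rho'\rangle_\cC=\sum_{i,a}\frac{2\,\mathrm{Re}\!\left(\langle u_a|V|u_i\rangle\,\overline{\langle u_a|V'|u_i\rangle}\right)}{\epsilon_a-\epsilon_i}.
\]
This expression is manifestly symmetric under $\rho\leftrightarrow\rho'$, which together with boundedness (so $\cL$ is everywhere defined) yields self-adjointness; and taking $\rho=\rho'$ gives $\sum_{i,a}2\,|\langle u_a|V|u_i\rangle|^2/(\epsilon_a-\epsilon_i)\ge0$, since every denominator satisfies $\epsilon_a-\epsilon_i\ge g>0$ by the insulating gap. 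Hence $\cL\ge0$. The delicate point is not this algebra but its rigorous justification: because $Q_{1,V}$ need not lie in $\gS_1$, the symbol $\tr(V\,Q_{1,V'})$ must be interpreted through the $\cC$--$\cC'$ duality via $\rho_{Q_{1,V'}}$, and one must justify interchanging the contour integral with the Bloch--Floquet decomposition together with the absolute convergence of the band sums and $\Gamma^\ast$-integrals. I expect these to be the main obstacle, and I would control them with the uniform gap $g>0$ and the kinetic bounds $c_1(1-\Delta)\le|H^0_{\rm per}-\epsilon_{\rm F}|\le C_1(1-\Delta)$ quoted from \cite{CanDelLew-08a}, which are precisely the estimates already underlying Proposition~\ref{prop:chi0}.

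Finally, since $\cL$ is a bounded nonnegative self-adjoint operator on the Hilbert space $\cC$, the bilinear form $a(\rho,\rho'):=\langle\rho,(1+\cL)\rho'\rangle_\cC$ is bounded and coercive, $a(\rho,\rho)\ge\|\rho\|_\cC^2$. By the Lax--Milgram theorem (equivalently, because $\sigma(1+\cL)\subset[1,1+\|\cL\|]$ is bounded away from $0$), $1+\cL$ is a bijection of $\cC$ onto $\cC$ whose inverse satisfies $\|(1+\cL)^{-1}\|\le1$, hence $1+\cL$ is bicontinuous from $\cC$ to $\cC$. This completes the argument.
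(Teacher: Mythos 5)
Your proposal is correct and follows essentially the same route as the paper: boundedness from the mapping properties of $\chi_0$ and $v_{\rm c}$ (Proposition~\ref{prop:chi0}), nonnegativity and symmetry from the explicit Bloch--Floquet band decomposition of the quadratic form $\pscal{\cL(\rho),\rho'}_\cC$ with denominators bounded below by the gap $g>0$, and invertibility of $1+\cL$ by the standard spectral/coercivity argument. The only cosmetic difference is that the paper obtains self-adjointness more directly from cyclicity of the trace in the contour-integral representation $-\frac{1}{2i\pi}\oint_\curv \tr\left( (z-H^0_{\rm per})^{-1} v_{\rm c}(\rho_1) (z-H^0_{\rm per})^{-1} v_{\rm c}(\rho_2)\right) dz$, reserving the band-resolved formula (which you derive and which matches the paper's) for positivity alone, while your convergence caveats are handled in the paper by the same gap and eigenvalue-growth estimates you invoke.
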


The latter property will be used in Section \ref{sec:response_ext}.

\medskip

We have considered the linear response for all reasonable $V$'s (or $\rho$'s). We now assume that $V=\rho\star
|\cdot|^{-1}$ with a density $\rho\in L^1(\R^3)$ and we derive some
additional properties of $\cL(\rho)$. Note that as $L^1(\R^3) \subset
v_{\rm c}^{-1}(L^2(\R^3)) + \cC$, we have $v_{\rm c}(\rho) \in L^2(\R^3)
+ \cC'$. The following statement is central in the mathematical analysis
of the dielectric response of crystals.  

\begin{proposition}[Properties of $\cL(\rho)$ when $\rho\in L^1$] \label{lem:limL}
Let $\rho \in L^1(\RR^3)$. Then, $\cL(\rho) \in L^2(\R^3) \cap \cC$,
$\widehat{\cL(\rho)}$ is continuous on $\R^3 \setminus
\cR^\ast$, and for all $\sigma \in S^2$ (the unit sphere of $\R^3$), 
\begin{equation} \label{eq:limL}
\lim_{\eta \to 0^+} \widehat{\cL(\rho)}(\eta \sigma) = (\sigma^T L
\sigma) \widehat\rho(0) 
\end{equation}
where $L \in \R^{3 \times 3}$ is the non-negative symmetric matrix
defined by
\begin{equation} \label{eq:mk}
\forall k \in \R^3, \quad 
k^TLk  =   \frac{8\pi}{|\Gamma|}
\sum_{n=1}^N\sum_{n'=N+1}^{+\ii}\fint_{\Gamma^\ast}\frac{\left|\pscal{(k\cdot\nabla_x)u_{n,q},u_{n',q}}_{L^2_{\rm
        per}(\Gamma)}\right|^2}{\big(\epsilon_{n',q}-\epsilon_{n,q}\big)^3} \, dq, 
\end{equation}
where the $\epsilon_{n,q}$'s and the $u_{n,q}$'s are the eigenvalues and
the eigenvectors arising in the spectral decomposition \eqref{eq:dec_HOperq} of
$(H^0_{\rm per})_q$. 

Additionally, 
\begin{equation} \label{eq:Lpositive}
L_0 = \frac 13 \tr(L) > 0.
\end{equation}
\end{proposition}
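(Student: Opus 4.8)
The plan is to deduce all four assertions from a single explicit Bloch--Floquet formula for $\widehat{\rho_{Q_{1,V}}}(k)$, with $V=v_{\rm c}(\rho)$. The membership statement is immediate: since $\rho\in L^1(\R^3)\subset v_{\rm c}^{-1}(L^2(\R^3))+\cC$, we have $V\in L^2(\R^3)+\cC'$, and Proposition~\ref{prop:chi0} gives $\cL(\rho)=-\chi_0V=-\rho_{Q_{1,V}}\in L^2(\R^3)\cap\cC$; in particular $\widehat{\cL(\rho)}$ is a genuine function. Because $\rho\in L^1$, $\widehat\rho$ is continuous and bounded, and $\widehat V(k)=4\pi\,\widehat\rho(k)/|k|^2$.

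To obtain the formula I would start from the kernel identity $\rho_{Q_{1,V}}(x)=\frac1{2i\pi}\oint_\curv\int_{\R^3}R_z(x,y)\,V(y)\,R_z(y,x)\,dy\,dz$ with $R_z=(z-H^0_{\rm per})^{-1}$, insert the fiber decomposition \eqref{eq:dec_HOperq}, and Fourier transform in $x$ (this route avoids assuming $Q_{1,V}\in\gS_1$, which fails for $V=v_{\rm c}\rho$). Multiplication by $e^{-ik\cdot x}$ shifts the quasimomentum by $-k$, while $V$ transfers a momentum $p$; $\cR$-periodicity forces $p\equiv k\ (\mathrm{mod}\ \cR^\ast)$, so only the Coulomb modes $\widehat V(k+K)$, $K\in\cR^\ast$, enter. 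Evaluating the contour integral by residues, via $\frac1{2i\pi}\oint_\curv\frac{dz}{(z-a)(z-b)}=(\mathbf{1}_{a<\epsilon_{\rm F}}-\mathbf{1}_{b<\epsilon_{\rm F}})/(a-b)$ --- nonzero only when exactly one of the two states is occupied --- I expect a formula of the shape
\[
\widehat{\rho_{Q_{1,V}}}(k)=\frac1{|\Gamma|}\sum_{K\in\cR^\ast}\widehat V(k+K)\fint_{\Gamma^\ast}\sum_{n\le N<n'}\frac{\pscal{u_{n,q},u_{n',q+k}}\,\pscal{u_{n',q+k},e^{iK\cdot x}u_{n,q}}}{\epsilon_{n,q}-\epsilon_{n',q+k}}\,dq+(\cdots),
\]
where $(\cdots)$ is the analogous term with occupied and empty bands exchanged. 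For $k\notin\cR^\ast$ every $\widehat V(k+K)$ is finite, the denominators satisfy $\epsilon_{n',q+k}-\epsilon_{n,q}\ge g>0$ by the gap assumption, and the band sum converges uniformly (by Bessel's inequality $\sum_{n'}|\pscal{u_{n,q},u_{n',q+k}}|^2\le1$ together with the uniform lower bound on the denominators); dominated convergence then yields continuity of $\widehat{\cL(\rho)}$ on $\R^3\setminus\cR^\ast$.

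For the directional limit I set $k=\eta\sigma$ and let $\eta\to0^+$. The factor $\pscal{u_{n,q},u_{n',q+k}}$, common to every $K$, vanishes at $k=0$ (orthogonality, $n\ne n'$) and expands as $\eta\,\sigma\cdot\pscal{u_{n,q},\nabla_q u_{n',q}}+O(\eta^2)$. Hence the terms with $K\ne0$, whose Coulomb factor $\widehat V(k+K)$ stays bounded, are $O(\eta)$ and drop out; only $K=0$ survives, where the singular factor $\widehat V(\eta\sigma)\sim4\pi\widehat\rho(0)/\eta^2$ exactly compensates the $\eta^2$ coming from $|\pscal{u_{n,q},u_{n',q+k}}|^2$. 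The two (exchanged) contributions give equal limits, producing
\[
\lim_{\eta\to0^+}\widehat{\cL(\rho)}(\eta\sigma)=\frac{8\pi}{|\Gamma|}\,\widehat\rho(0)\fint_{\Gamma^\ast}\sum_{n\le N<n'}\frac{|\sigma\cdot\pscal{u_{n,q},\nabla_q u_{n',q}}|^2}{\epsilon_{n',q}-\epsilon_{n,q}}\,dq.
\]
It remains to convert the $q$-gradient through the $k\!\cdot\!p$ identity obtained by differentiating $(H^0_{\rm per})_qu_{n',q}=\epsilon_{n',q}u_{n',q}$ and projecting on $u_{n,q}$: since $\partial_{q_\alpha}(H^0_{\rm per})_q=-i\partial_{x_\alpha}+q_\alpha$, one gets $\pscal{u_{n,q},\partial_{q_\alpha}u_{n',q}}=i\,\pscal{\partial_{x_\alpha}u_{n,q},u_{n',q}}/(\epsilon_{n',q}-\epsilon_{n,q})$, so $|\sigma\cdot\pscal{u_{n,q},\nabla_q u_{n',q}}|^2=|\pscal{(\sigma\cdot\nabla_x)u_{n,q},u_{n',q}}|^2/(\epsilon_{n',q}-\epsilon_{n,q})^2$. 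This supplies the two missing powers of the gap and turns the last display into $(\sigma^TL\sigma)\widehat\rho(0)$ with $L$ given by \eqref{eq:mk}.

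Finally, $L$ is symmetric and nonnegative by construction, so $\tr L=0$ would force $\pscal{(\sigma\cdot\nabla_x)u_{n,q},u_{n',q}}=0$ for all $n\le N<n'$, all $\sigma$, and a.e.\ $q$; by the $k\!\cdot\!p$ identity this means $(1-P_q)\nabla_qP_q=0$, i.e.\ $\nabla_qP_q=0$ for the Fermi projector $P_q=(\gamma^0_{\rm per})_q=\sum_{n\le N}|u_{n,q}\rangle\langle u_{n,q}|$. Then $P_q\equiv P$ is independent of $q$ and commutes with $(H^0_{\rm per})_q$ for every $q$; differentiating in $q$ gives $[P,\nabla_x]=0$ and hence $[P,V_{\rm per}]=0$. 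A rank-$N$ orthogonal projector commuting with every momentum and with multiplication by the non-constant $V_{\rm per}$ must be $0$ or $1$, contradicting $1\le N<\infty$; thus $L_0=\frac13\tr L>0$, which is \eqref{eq:Lpositive}. The main obstacle throughout is rigor in exchanging the limit $\eta\to0$ with $\fint_{\Gamma^\ast}$ and with the infinite sum over empty bands: since the eigenvectors $u_{n',q}$ need not be globally smooth in $q$ (band crossings), I would phrase the uniform domination in terms of the gauge-invariant projectors and the resolvent $(1-P_q)((H^0_{\rm per})_q-\epsilon_{n,q})^{-1}$, using the gap together with $|H^0_{\rm per}-\epsilon_{\rm F}|\asymp(1-\Delta)$ to control the tail in $n'$.
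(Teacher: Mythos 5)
Your proposal is correct in substance and follows the same master plan as the paper (explicit Bloch--Floquet formula for $\widehat{\cL(\rho)}$, cancellation of the $|k|^{-2}$ Coulomb singularity against the $O(\eta^2)$ interband overlap via a $k\cdot p$ identity, and a contradiction argument for $L_0>0$), but it diverges at two junctures, and one of them is where your write-up is thinnest. The paper never works with the full lattice sum over $K\in\cR^\ast$: it first splits the Coulomb kernel into low and high Fourier modes, $\cL=\cL_s+\cL_r$, with a cutoff $\xi$ supported inside $\Gamma^\ast$. The high-mode part has an $L^1$ convolution kernel $v_r$, so the corresponding operator $Q_1^r$ is in $\gS_1$ by Kato--Seiler--Simon \eqref{KSS}, whence $\cL_r(\rho)\in L^1(\R^3)$, $\widehat{\cL_r(\rho)}$ is continuous everywhere and $\widehat{\cL_r(\rho)}(0)=0$ by Proposition \ref{prop:chi0}; the low-mode part then contains \emph{exactly and only} the $K=0$ lattice mode, giving the closed formula $\widehat{\cL_s(\rho)}(q)=\cB(q)|q|^{-2}\widehat\rho(q)$ with $\cB$ as in \eqref{def_cB_q}. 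This is precisely what spares the paper the step you gloss over: in your unsplit formula the $K\neq0$ terms are individually $O(\eta)$, but there are infinitely many of them, and the domination you invoke (Bessel's inequality in $n'$ plus the gap) does not control the sum over $K$ --- note that $\sum_{K\in\cR^\ast}|k+K|^{-2}=+\infty$ in dimension $3$, so a termwise bound on $\widehat V(k+K)$ is not summable; you would need, e.g., Cauchy--Schwarz in $K$ using $\sum_{K}|k+K|^{-4}<\infty$ together with Parseval for the products $\overline{u_{n,q}}\,u_{n',q+k}$, with uniformity in $n'$. That is fixable, but it is genuine extra work, and avoiding it is exactly what the paper's singular/regular splitting buys. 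The second divergence is benign: in place of your Taylor expansion $\pscal{u_{n,q},u_{n',q+\eta\sigma}}=\eta\,\sigma\cdot\pscal{u_{n,q},\nabla_q u_{n',q}}+O(\eta^2)$ (problematic at band crossings, as you yourself note), the paper uses the exact algebraic identity \eqref{pscal_u_n_q}, which involves no $q$-derivative of eigenfunctions at all, and then passes to the limit $\eta\to0^+$ only after rewriting $L_\eta(\sigma)$ in gauge-invariant resolvent/projector form --- the very fix you sketch at the end.

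Your positivity argument is a genuinely different (and correct) route. From $\tr L=0$ you deduce that the occupied fiber projector is $q$-independent and commutes with every momentum operator and with multiplication by $V_{\rm per}$, and conclude by a Fourier-support argument: such a projector is diagonal in the basis $(e^{iK\cdot x})_{K\in\cR^\ast}$, and invariance of its finite-dimensional range under multiplication by $V_{\rm per}$ kills every nonzero Fourier coefficient of $V_{\rm per}$ (you assert this last step rather than prove it, but it is standard: for $G\neq0$ a finite set $S\subset\cR^\ast$ cannot be invariant under translation by $G$). The paper instead differentiates the eigenvalue equation to show that multiplication by $\partial_{x_i}V_{\rm per}$ stabilizes the occupied space $X_q$, and uses the pointwise positivity of $u_{1,0}$ plus continuity and periodicity to force $V_{\rm per}$ to be constant. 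Both reach the same contradiction with the insulating hypothesis; yours is arguably slicker, the paper's avoids any appeal to the structure of commutants.
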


\medskip

Proposition \ref{lem:limL} shows that $\cL(\rho)$ is not in 
general a function of $L^1(\R^3)$ even when $\rho\in L^1(\R^3)$, as when
$L \neq L_0$ (i.e. when $L$ is not proportional to the identity matrix),
$\widehat{\cL(\rho)}$ is not continuous at zero (note 
that $L= L_0$ characterizes isotropic dielectric
materials). However the following holds: for any radial 
function $\xi\in C^\ii_0(\R^3)$ such that $0\leq \xi\leq 1$, $\xi\equiv
1$ on $B(0,1)$ and $\xi\equiv 0$ on $\R^3\setminus B(0,2)$, we have
\begin{equation}
\lim_{R\to\ii}\int_{\R^3} \cL(\rho)(x) \, \xi(R^{-1}x)\, dx=L_0
\int_{\R^3}\rho. 
\label{integral_rho_radial} 
\end{equation}

\section{Application to the reduced Hartree-Fock model for perturbed crystals}
\label{sec:response_ext}

Let us now come back to the reduced Hartree-Fock framework and the decay properties of minimizers. Our main result is the following
\begin{theorem}[Properties of the nonlinear rHF ground state for perturbed crystals] \label{thm:notL1}
Let $\nu \in L^1(\R^3)\cap L^2(\R^3)$ be such that $\int_{\R^3} \nu \neq 0$ and  $\| \nu \star |\cdot|^{-1}\|_{L^2+\cC'}$ is small enough.
Then the operator $Q_{\nu,\epsilon_{\rm F}}$ satisfies $\tr_0(Q_{\nu,\epsilon_{\rm F}})=0$ but it is \underline{not} trace-class.
If additionally the 
map $L:S^2\to\R^+$ defined in Proposition \ref{lem:limL} is not constant, then $\rho_{{\nu,\epsilon_{\rm F}}}$ is \underline{not} in $L^1(\R^3)$.
\end{theorem}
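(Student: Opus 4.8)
The plan is to reduce the statement to the scalar self-consistent equation \eqref{eq:SCF_rewritten} and to read off the behaviour of $\widehat{\rho}_{\nu,\epsilon_{\rm F}}$ near the origin from Proposition~\ref{lem:limL}. For $\|\nu\star|\cdot|^{-1}\|_{L^2+\cC'}$ small enough, Lemma~\ref{prop:small} gives $\delta=0$ and $\tr_0(Q_{\nu,\epsilon_{\rm F}})=0$, which is already the first assertion; moreover $Q_{\nu,\epsilon_{\rm F}}=Q_V$ with $V=(\rho_{\nu,\epsilon_{\rm F}}-\nu)\star|\cdot|^{-1}=-v_{\rm c}\mu$, where I set $\mu:=\nu-\rho_{\nu,\epsilon_{\rm F}}$ (note $\mu\in\cC$, since $\nu\in L^1\cap L^2\subset\cC$ and $\rho_{\nu,\epsilon_{\rm F}}\in\cC$ by Theorem~\ref{thm:defaut}). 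Expanding $Q_V$ with the resolvent formula around the linear term $Q_{1,V}$ of \eqref{eq:Q1V}, i.e. $Q_V=Q_{1,V}+\widetilde Q_2$, and using $\rho_{Q_{1,V}}=\chi_0 V=\cL\mu$ (Proposition~\ref{prop:chi0} together with $\cL=-\chi_0 v_{\rm c}$), I obtain $\rho_{\nu,\epsilon_{\rm F}}=\cL\mu+\widetilde r_2$ with $\widetilde r_2:=\rho_{\widetilde Q_2}$, that is $(1+\cL)\mu=\nu-\widetilde r_2=:f$; Proposition~\ref{lem:defL} then gives $\mu=(1+\cL)^{-1}f\in\cC$.

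The one quantitative ingredient I would import from the study of the higher-order terms (Lemma~\ref{lem:order2}) is that $\widetilde Q_2$ is trace-class, so that $\widetilde r_2\in L^1(\R^3)$ with $\|\widetilde r_2\|_{L^1}+|\int_{\R^3}\widetilde r_2|\le C\,\|V\|_{L^2+\cC'}^2$. Since $\int_{\R^3}\nu\neq0$ is a fixed nonzero number while this remainder is quadratically small in the (small) Coulomb norm of the defect, it follows that $\int_{\R^3}f=\int_{\R^3}\nu-\int_{\R^3}\widetilde r_2\neq0$, i.e. $\widehat f(0)\neq0$, as soon as the defect is small enough. Establishing that the whole singular behaviour is carried by the linear term $Q_{1,V}$, while the nonlinear remainder $\widetilde Q_2$ is a genuine trace-class, $L^1$-density perturbation with controlled integral, is the step I expect to be the main obstacle: it rests on Hilbert--Schmidt and trace-class estimates for the multi-resolvent contributions, in the spirit of \cite{HaiLewSer-05a,CanDelLew-08a}.

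For the non-trace-class statement I argue by contradiction. If $Q_{\nu,\epsilon_{\rm F}}\in\gS_1$, then $\rho_{\nu,\epsilon_{\rm F}}\in L^1(\R^3)$ and, since $\gamma^0_{\rm per}$ is an orthogonal projector, the off-diagonal blocks $Q^{-+}$ and $Q^{+-}$ have vanishing trace, whence $\int_{\R^3}\rho_{\nu,\epsilon_{\rm F}}=\tr(Q_{\nu,\epsilon_{\rm F}})=\tr_0(Q_{\nu,\epsilon_{\rm F}})=0$. Now $\rho_{\nu,\epsilon_{\rm F}}\in L^1$ forces $\mu\in L^1$, so applying the radial regularisation \eqref{integral_rho_radial} to $\cL\mu$ and ordinary integrability to $\widetilde r_2$ yields $\int_{\R^3}\rho_{\nu,\epsilon_{\rm F}}=L_0\int_{\R^3}\mu+\int_{\R^3}\widetilde r_2$. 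Combined with $\int_{\R^3}\mu=\int_{\R^3}\nu-\int_{\R^3}\rho_{\nu,\epsilon_{\rm F}}$ this gives $(1+L_0)\int_{\R^3}\rho_{\nu,\epsilon_{\rm F}}=L_0\int_{\R^3}\nu+\int_{\R^3}\widetilde r_2$; imposing $\int_{\R^3}\rho_{\nu,\epsilon_{\rm F}}=0$ forces $L_0\int_{\R^3}\nu=-\int_{\R^3}\widetilde r_2$, which is impossible for a small defect because $L_0>0$ by \eqref{eq:Lpositive}, $\int_{\R^3}\nu\neq0$ is fixed, and the right-hand side is quadratically small. Hence $Q_{\nu,\epsilon_{\rm F}}\notin\gS_1$. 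Note that only $L_0>0$ is used here, so the conclusion holds even for isotropic crystals.

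For the last statement I again assume, for contradiction, that $\rho_{\nu,\epsilon_{\rm F}}\in L^1(\R^3)$, so that $\mu\in L^1$ and $\widehat{\rho}_{\nu,\epsilon_{\rm F}}$ is continuous at $0$. Writing $\widehat{\rho}_{\nu,\epsilon_{\rm F}}(\eta\sigma)=\widehat{\cL\mu}(\eta\sigma)+\widehat{\widetilde r_2}(\eta\sigma)$ and letting $\eta\to0^+$, Proposition~\ref{lem:limL} gives $\widehat{\rho}_{\nu,\epsilon_{\rm F}}(0)=(\sigma^T L\sigma)\,\widehat\mu(0)+\widehat{\widetilde r_2}(0)$ for every $\sigma\in S^2$. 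The left-hand side is independent of $\sigma$; on the other hand the identity $(1+L_0)\int_{\R^3}\rho_{\nu,\epsilon_{\rm F}}=L_0\int_{\R^3}\nu+\int_{\R^3}\widetilde r_2$ of the previous paragraph (valid as soon as $\mu\in L^1$) yields $\int_{\R^3}\mu=(\int_{\R^3}f)/(1+L_0)\neq0$, hence $\widehat\mu(0)\neq0$. Since by hypothesis $\sigma\mapsto\sigma^T L\sigma$ is non-constant, the right-hand side genuinely depends on $\sigma$, a contradiction. Therefore $\rho_{\nu,\epsilon_{\rm F}}\notin L^1(\R^3)$.
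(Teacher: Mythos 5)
Your overall strategy --- assume trace-classness (resp. $L^1$-ness), deduce that the Fourier transform of the self-consistent density has a limit at the origin, and contradict this via Proposition~\ref{lem:limL} and \eqref{eq:Lpositive} --- is exactly the paper's. The genuine gap is the single ``quantitative ingredient'' you import: the claim that the full first-order remainder $\widetilde Q_2 = Q_V - Q_{1,V}$ is trace-class, with $\widetilde r_2 \in L^1(\R^3)$ and $\|\widetilde r_2\|_{L^1} + |\int_{\R^3}\widetilde r_2| \le C\|V\|_{L^2+\cC'}^2$. This is not proved in the paper, and it does not follow from the Kato--Seiler--Simon/commutator estimates you appeal to. The obstruction is that $V=-v_{\rm c}(\mu)$ is a Coulomb potential: when $\int_{\R^3}\mu \neq 0$ one has $\widehat V(k)\sim |k|^{-2}$ near $k=0$, so $V$ lies in $L^p(\R^3)$ only for $p>3$ (never in $L^2$ or $L^3$), and $\nabla V \in L^q$ only for $q>3/2$. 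In the block decomposition of $Q_{2,V}$, the terms containing a diagonal block $\gamma^0_{\rm per} V \gamma^0_{\rm per}$ admit only one commutator $[\gamma^0_{\rm per},V]$, and the resulting Schatten exponents $1/p+1/q$ approach $1$ from below but never reach it. This is precisely what Lemma~\ref{lem:order2}(1) records: $Q_{2,V}\in\gS_p$ for all $p>1$ --- conspicuously not $p=1$ --- while trace-classness of remainders is only available from the third order on (Lemma~\ref{lem:order2}(2); for general $V \in L^2+\cC'$, Lemma~\ref{prop:Dyson} even requires order $\ge 6$).

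The paper works around exactly this point by expanding one order further, $Q_V = Q_{1,V}+Q_{2,V}+\widetilde Q_{3,V}$: it proves $\widetilde Q_{3,V}\in\gS_1$ with $\int_{\R^3}\widetilde r_3 = 0$, and for the second-order density it never asserts $r_2\in L^1$, proving instead the strictly weaker fact that $\widehat{r_2}$ is continuous and vanishes at $k=0$, via the dedicated estimate $|\widehat{\rho_{R_2^{-++}}}(q)|\le C|q|\log\frac1{|q|}$. The logarithm there is the visible trace of the marginal failure of your claim: it comes from $\int \xi(-r)|r|^{-1}|r+q|^{-2}dr$, which diverges as $q\to0$ because of the $|k|^{-2}$ singularity of $\widehat V$. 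So the step you defer as ``the main obstacle'' is not a technical point to be filled in from \cite{HaiLewSer-05a,CanDelLew-08a}; it is an assertion strictly stronger than anything the paper establishes, and quite possibly false. Once it is replaced by the paper's finer decomposition, your two contradiction arguments do go through essentially as written --- in fact more simply, since with $\int_{\R^3}\widetilde r_3 = 0$ and $\widehat{r_2}(0)=0$ holding exactly, no quadratic-smallness bookkeeping is needed: one gets $(\sigma^T L\sigma)\widehat\nu(0)=0$ for all $\sigma\in S^2$ directly, contradicting \eqref{eq:Lpositive}, and similarly $\widehat\rho_{\nu,\epsilon_{\rm F}}(0) = -(\sigma^TL\sigma)\bigl(\widehat\rho_{\nu,\epsilon_{\rm F}}(0)-\widehat\nu(0)\bigr)$ forces $L$ to be constant.
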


The proof of Theorem \ref{thm:notL1} is a simple consequence of our
results on the operator $\cL$ stated in the last section, and of the
continuity properties of higher order terms for an $L^1$ density
$\rho$. A detailed proof is provided in Section \ref{sec:proof_notL1}.

As previously mentioned, the situation $L = L_0$ characterizes
isotropic dielectric materials; it occurs in 
particular when $\cR$ is a cubic lattice and $\rho^{\rm nuc}_{\rm per}$
has the symmetry of the cube. For anisotropic dielectric materials, $L$
is not proportional to the identity matrix, and consequently
$\rho_{{\nu,\epsilon_{\rm F}}} \notin L^1(\R^3)$. 

Formula \eqref{eq:Lpositive} for $L_0$ is well-known in the Physics
literature \cite{Adler-62,Wiser-63}. However to our knowledge it was never
mentioned that the fact that $L_0>0$ is linked to the odd mathematical
property that the operator $Q_{\nu,\epsilon_{\rm F}}$ is not trace-class
when $\int_{\R^3} \nu \neq 0$.
The interpretation is the following: if a defect
with nuclear charge $\nu$ is inserted in the crystal, the Fermi sea
reacts to the modification of the
external potential. Although it stays formally neutral
($\tr_0(Q_{\nu,\epsilon_{\rm F}})=0$) when $\nu$ is small, the
modification $\rho_{\nu,\epsilon_{\rm F}}$ of the electronic density
generated by $\nu$ is not an integrable function such that $\int_{\R^3}
\rho_{\nu,\epsilon_{\rm F}} = 0$, as soon as $\int_{\R^3} \nu
\neq 0$. 

For isotropic dielectric materials, $L=L_0$, and we conjecture that the
density $\rho_{\nu,\epsilon_{\rm F}}$ 
is in $L^1(\R^3)$. In this case, one can define the total charge of the
defect (including the self-consistent polarization of the Fermi sea) as
$\int_{\R^3}(\nu-\rho_{\nu,\epsilon_{\rm F}})$. For $\nu$ small enough, the
Fermi sea formally stays neutral
($\tr_0(Q_{\nu,\epsilon_{\rm F}})=0$), but it nevertheless screens
partially the charge defect in such a way that the total observed charge
gets multiplied by a factor
$(1+L_0)^{-1}<1$:
$$
\int_{\R^3} \big(\nu-\rho_{\nu,\epsilon_{\rm F}}\big) = \frac{\int_{\R^3} \nu}{1+L_0}.
$$
This is very much
similar to what takes place in the mean-field approximation of no-photon QED
\cite{HaiLewSerSol-07,GraLewSer-08}. In the latter setting, the Dirac
sea screens any external charge, leading to charge
renormalization. Contrarily to the Fermi sea of periodic crystals, the
QED free vacuum is not only isotropic (the corresponding
$L$ is proportional to the identity matrix) but also homogeneous (the
corresponding 
operator $\cL$ has a simple expression in the Fourier representation),
and the mathematical 
analysis can be pushed further: Gravejat, Lewin and Séré indeed proved in
\cite{GraLewSer-08} that the observed electronic density in QED (the
corresponding $\rho_{\nu,\epsilon_{\rm F}}$) actually belongs to $L^1(\R^3)$. 
Extending these results to the case of isotropic crystals seems to be a challenging task.

When $L$ is not proportional to the identity matrix (anisotropic dielectric
crystals), it is not possible to define the observed charge of
the defect as the integral of $\nu-\rho_{{\nu,\epsilon_{\rm F}}}$
since $\rho_{{\nu,\epsilon_{\rm F}}}$ is not an integrable
function. Understanding the regularity properties of the Fourier
transform of $\rho_{\nu,\epsilon_{\rm F}}$ is then a very interesting
problem. In the next section, we consider a certain limit related to
homogenization in which only the first order term plays a role and for
which the limit can be analyzed in details.

\section{Macroscopic dielectric permittivity}
\label{sec:macroscopic}

In this section, we focus on the electrostatic potential 
\begin{equation} \label{eq:tot_pot}
V = (\nu-\rho_{\nu,\epsilon_{\rm F}}) \star |\cdot|^{-1}
\end{equation}
generated by the total charge of the defect and we study it in a certain limit.

We note that the self-consistent equation \eqref{eq:SCF_rewritten} can be rewritten as 
\begin{equation}
\nu-\rho_{\nu,\epsilon_{\rm F}}=(1+\cL)^{-1}\nu-(1+\cL)^{-1}\tilde{r}_2.
\label{eq:SCF_rewritten2}
\end{equation}
Therefore for the nonlinear rHF model, the linear response at the level of the density is given by the operator $(1+\cL)^{-1}$. We recall from Proposition \ref{lem:defL} that $\cL\geq0$ on $\cC$ and that $(1+\cL)^{-1}$ is a bounded operator from $\cC$ to $\cC$.  

In Physics, one is often interested in the \emph{dielectric
  permittivity} which is the inverse of the linear response at the level
of the electrostatic potential, i.e. 
$$\boxed{\epsilon^{-1}:=v_{\rm c}(1+\cL)^{-1}v_c^{-1}.}$$
Note that \eqref{eq:SCF_rewritten} can be recast into
\begin{equation} \label{eq:tot_pot_2}
V  =  \epsilon^{-1}v_{\rm c}(\nu) - v_c(1+\cL)^{-1}\tilde{r}_2.
\end{equation}
A simple calculation gives (we recall that $\chi_0$ is the polarizability defined in Proposition \ref{prop:chi0}, which is such that $\cL=-\chi_0 v_{\rm c}$)
$$
\epsilon^{-1}v_{\rm c} =v_{\rm c}(1+\cL)^{-1}= v_{\rm c} (1+\cL)^{-1} (1+\cL+\chi_0v_{\rm c})=  v_{\rm c} + v_{\rm c} (1+\cL)^{-1} \chi_0v_{\rm c}.
$$
Therefore one gets
\begin{equation}
\epsilon^{-1}=1 + v_{\rm c} (1+\cL)^{-1} \chi_0.
\label{def_epsilon_inverse} 
\end{equation}
We also have 
\begin{equation}
\epsilon=v_{\rm c}(1+\cL)v_c^{-1}
\label{def:dielectric} 
\end{equation}
which yields to the usual formula
\begin{equation}
\epsilon=1 -v_c\chi_0.
\label{def_epsilon} 
\end{equation}

The basic mathematical properties of the dielectric operator $\epsilon$ are stated in the following
\begin{proposition}[Dielectric operator]\label{thm:dielectric} 
The dielectric operator $\epsilon=1-v_{\rm c}\chi_0$ is an 
invertible bounded self-adjoint operator on $\cC'$, with inverse $\epsilon^{-1} = 1 + v_{\rm c} (1+\cL)^{-1} \chi_0$.

The hermitian dielectric operator $\widetilde \epsilon = v_{\rm
  c}^{-\frac 12} \epsilon v_{\rm c}^{\frac 12}$ is an invertible bounded
self-adjoint operator on $L^2(\R^3)$.
\end{proposition}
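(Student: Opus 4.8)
The plan is to exploit the conjugation structure already recorded in \eqref{def:dielectric}, namely $\epsilon = v_{\rm c}(1+\cL)v_{\rm c}^{-1}$, and to transport to $\cC'$ (and then to $L^2(\R^3)$) the properties of $1+\cL$ established in Proposition~\ref{lem:defL}. The decisive point is that $v_{\rm c}$ is not merely bounded but an \emph{isometry from $\cC$ onto $\cC'$}; an isometric bijection between Hilbert spaces is unitary, so $v_{\rm c}^{-1} = v_{\rm c}^\ast$, where the adjoint is understood with respect to the inner products $\langle\cdot,\cdot\rangle_\cC$ and $\langle\cdot,\cdot\rangle_{\cC'}$. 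All compositions are well-defined: $\chi_0$ restricts to a bounded map $\cC'\to\cC$ by Proposition~\ref{prop:chi0}, and $v_{\rm c}:\cC\to\cC'$, so $\epsilon = 1-v_{\rm c}\chi_0$ indeed maps $\cC'$ to $\cC'$.

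First I would treat $\epsilon$ on $\cC'$. By Proposition~\ref{lem:defL}, $1+\cL$ is a bounded, self-adjoint, invertible operator on $\cC$ with bounded inverse $(1+\cL)^{-1}$. Conjugation by the unitary $v_{\rm c}$ preserves each of these properties, so $\epsilon = v_{\rm c}(1+\cL)v_{\rm c}^{-1}$ is a bounded, self-adjoint, invertible operator on $\cC'$, with $\epsilon^{-1} = v_{\rm c}(1+\cL)^{-1}v_{\rm c}^{-1}$; the algebraic identity \eqref{def_epsilon_inverse} (which uses only $\cL = -\chi_0 v_{\rm c}$) then rewrites the inverse as $\epsilon^{-1} = 1 + v_{\rm c}(1+\cL)^{-1}\chi_0$. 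Concretely, self-adjointness is checked by writing arbitrary elements of $\cC'$ as $v_{\rm c}f$ and $v_{\rm c}g$ with $f,g\in\cC$, noting $\epsilon v_{\rm c}f = v_{\rm c}(1+\cL)f$, and using $\langle v_{\rm c}f, v_{\rm c}g\rangle_{\cC'} = \langle f,g\rangle_\cC$ to reduce $\langle \epsilon v_{\rm c}f, v_{\rm c}g\rangle_{\cC'} = \langle (1+\cL)f, g\rangle_\cC$ to the self-adjointness of $1+\cL$ on $\cC$.

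For the hermitian operator $\widetilde\epsilon$ I would factor the isometry $v_{\rm c}$ through $L^2(\R^3)$. In Fourier variables $v_{\rm c}$ is multiplication by $4\pi/|k|^2$, and the natural square root $v_{\rm c}^{1/2}$ is multiplication by $2\sqrt\pi/|k|$, with inverse $v_{\rm c}^{-1/2}$ the multiplier $|k|/(2\sqrt\pi)$. Using the Fourier descriptions of the norms, $\|f\|_\cC^2 = 4\pi\int|\hat f|^2/|k|^2$, $\|V\|_{\cC'}^2 = \tfrac{1}{4\pi}\int|k|^2|\hat V|^2$, together with the Plancherel norm on $L^2$, one verifies directly that $v_{\rm c}^{1/2}$ is an isometric isomorphism $\cC\to L^2(\R^3)$ and also $L^2(\R^3)\to\cC'$, that $v_{\rm c}^{1/2}\circ v_{\rm c}^{1/2} = v_{\rm c}$, and hence that $v_{\rm c}^{1/2}:L^2(\R^3)\to\cC'$ is unitary with $(v_{\rm c}^{1/2})^\ast = v_{\rm c}^{-1/2}$. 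Then $\widetilde\epsilon = v_{\rm c}^{-1/2}\,\epsilon\, v_{\rm c}^{1/2} = (v_{\rm c}^{1/2})^\ast\,\epsilon\, v_{\rm c}^{1/2}$ is the unitary pullback to $L^2(\R^3)$ of the bounded, self-adjoint, invertible operator $\epsilon$ on $\cC'$, so it inherits all three properties on $L^2(\R^3)$, and one also reads off $\widetilde\epsilon = 1 - v_{\rm c}^{1/2}\chi_0 v_{\rm c}^{1/2}$.

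The substantive input is entirely Proposition~\ref{lem:defL}; the rest is functional-analytic bookkeeping. The one step requiring genuine care is the construction of $v_{\rm c}^{1/2}$ and the verification that it is an isometric isomorphism between the \emph{correct} pairs among the three Hilbert spaces $\cC$, $L^2(\R^3)$, $\cC'$, which carry three different inner products and on which $v_{\rm c}^{1/2}$ plays different roles ($\cC\to L^2$ versus $L^2\to\cC'$); keeping track of the domains and of the meaning of ``adjoint'' across these identifications is where an error would most likely creep in.
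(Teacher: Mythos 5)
Your proposal is correct and follows essentially the same route as the paper: both arguments transfer the properties of $1+\cL$ (Proposition~\ref{lem:defL}) to $\epsilon$ on $\cC'$ and then to $\widetilde\epsilon$ on $L^2(\R^3)$, using that $v_{\rm c}:\cC\to\cC'$ and $v_{\rm c}^{1/2}:L^2(\R^3)\to\cC'$ preserve the relevant inner products, together with the algebra coming from $\cL=-\chi_0 v_{\rm c}$. The only organizational difference is that the paper obtains boundedness and self-adjointness on $\cC'$ by invoking Proposition~\ref{prop:chi0} and then checks the inverse formula by multiplying out $\left[1+v_{\rm c}(1+\cL)^{-1}\chi_0\right]\epsilon=1$ directly, whereas you get all three properties at once by conjugating $1+\cL$ by the unitary $v_{\rm c}$ --- a reorganization that, if anything, makes the self-adjointness step more explicit than the paper's rather terse citation.
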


The proof of Proposition is a simple consequence of the properties of $\chi_0$ and $\cL$, as explained in Section \ref{sec:proof_dielectric}.

\medskip

Even when $\nu\in L^1(\R^3)$, applying the operator $(1+\cL)^{-1}$
creates some discontinuities in the Fourier domain for the corresponding
first order term $(1+\cL)^{-1}\nu$ in Equation
\eqref{eq:SCF_rewritten2}. If we knew that the higher order term
$\tilde{r}_2$ was better behaved, it would be possible to deduce the
exact regularity of $\widehat{\rho}_{\nu,\epsilon_{\rm F}}$. We will now
consider a certain limit of (\ref{eq:SCF_rewritten2}) by means of a
homogenization argument, for which the higher order term
vanishes. This will give an illustration of the expected properties of
the density in Fourier space at the origin. 
For this purpose, we fix some $\nu \in L^1(\R^3) \cap L^2(\R^3)$ and
introduce for all $\eta > 0$ the rescaled density 
$$
\nu_\eta(x) := \eta^3 \nu(\eta x).
$$
We then denote by $V_\nu^\eta$ the total potential generated by $\nu_\eta$,
i.e. 
\begin{equation} \label{eq:Vnueta}
V_\nu^\eta := (\nu_\eta-\rho_{\nu_\eta,\epsilon_{\rm F}})\star |\cdot|^{-1},
\end{equation}
and define the rescaled potential
\begin{equation} \label{eq:Wnueta}
W_\nu^\eta(x) := \eta^{-1}  \, V_\nu^\eta \left( \eta^{-1} x  \right).
\end{equation}
Note that the scaling parameters have been chosen in such a way that in
the absence of dielectric response (i.e. for $\epsilon^{-1}=1$,
$\tilde r_2=0$), one has $W_\nu^\eta = v_{\rm c}(\nu) = \nu \star
|\cdot|^{-1}$ for all $\eta > 0$. 

\medskip

\begin{theorem}[Macroscopic Dielectric Permittivity]\label{thm:homogenization}
There exists a $3 \times 3$ symmetric matrix $\epsilon_{\rm M} \ge 1$ such
that for all $\nu \in L^1(\R^3) \cap L^2(\R^3)$, the rescaled potential 
$W_\nu^\eta$ defined by (\ref{eq:Wnueta}) converges to $W_\nu$ weakly in $\cC'$ when $\eta$ goes to zero, where $W_\nu$ is the unique
solution in $\cC'$ to the elliptic equation
$$
\boxed{-\div(\epsilon_{\rm M} \nabla W_\nu ) = 4\pi\nu.}
$$
The matrix $\epsilon_{\rm M}$ is proportional to the identity matrix if the
host crystal has the symmetry of the cube.
\end{theorem}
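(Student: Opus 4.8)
The plan is to reduce everything to the linear response, controlled by $(1+\cL)^{-1}$, and then to carry out a scaling analysis of $\cL$ near the origin in Fourier space. Introduce the dilation $S_\eta\rho(x):=\eta^3\rho(\eta x)$, so that $\nu_\eta=S_\eta\nu$ and $\widehat{S_\eta\rho}(k)=\widehat\rho(k/\eta)$, and recall from \eqref{eq:tot_pot_2} that $V_\nu^\eta=\epsilon^{-1}v_{\rm c}(\nu_\eta)-v_{\rm c}(1+\cL)^{-1}\tilde{r}_2^{\eta}=v_{\rm c}(1+\cL)^{-1}\nu_\eta-v_{\rm c}(1+\cL)^{-1}\tilde{r}_2^{\eta}$. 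I would first dispose of the nonlinear remainder. The point is that, although $\nu$ is fixed, the rescaled defect is small in the Coulomb norm: a direct computation gives $\|\nu_\eta\|_{\cC}^2=\eta\,\|\nu\|_\cC^2$ and, splitting frequencies at $|k|=1$, $\|v_{\rm c}(\nu_\eta)\|_{L^2+\cC'}=O(\sqrt\eta)$. Hence for small $\eta$ we sit in the perturbative regime of Lemma~\ref{lem:order2}, which bounds $\|\tilde{r}_2^{\eta}\|_\cC$ by the square of the effective-potential norm, so that $\|\tilde{r}_2^{\eta}\|_\cC=O(\eta)$. Since the rescaling \eqref{eq:Wnueta} multiplies the $\cC'$-norm by $\eta^{-1/2}$ (because $\int|\nabla W_\nu^\eta|^2=\eta^{-1}\int|\nabla V_\nu^\eta|^2$) and $v_{\rm c}(1+\cL)^{-1}$ is bounded from $\cC$ to $\cC'$ by Proposition~\ref{lem:defL}, the rescaled remainder has $\cC'$-norm $O(\eta^{-1/2}\cdot\eta)=O(\sqrt\eta)\to0$. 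Thus it suffices to pass to the limit in the linear part.

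For the linear part I would pass to Fourier and use a conjugation identity. Writing $W_\nu^{\eta,{\rm lin}}$ for the rescaling of $v_{\rm c}(1+\cL)^{-1}\nu_\eta$, one finds $\widehat{W_\nu^{\eta,{\rm lin}}}(k)=\frac{4\pi}{|k|^2}\,\widehat{(1+\cL)^{-1}\nu_\eta}(\eta k)$, and since $S_\eta^{-1}(1+\cL)^{-1}S_\eta=(1+\cL_\eta)^{-1}$ with $\cL_\eta:=S_\eta^{-1}\cL S_\eta$, this equals $\frac{4\pi}{|k|^2}\,\widehat{(1+\cL_\eta)^{-1}\nu}(k)$. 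The heart of the matter is then to prove that $\cL_\eta$ converges, in a sense strong enough to pass to the resolvent, to the Fourier multiplier $\cL_0$ defined by $\widehat{\cL_0\phi}(k)=\frac{k^TLk}{|k|^2}\widehat\phi(k)$, with $L$ the matrix of \eqref{eq:mk}. Granting this, $(1+\cL_\eta)^{-1}\nu\to(1+\cL_0)^{-1}\nu$ and
\[
\widehat{W_\nu}(k)=\frac{4\pi}{|k|^2}\,\frac{|k|^2}{|k|^2+k^TLk}\,\widehat\nu(k)=\frac{4\pi\,\widehat\nu(k)}{k^T(1+L)k},
\]
which is exactly the Fourier form of $-\div((1+L)\nabla W_\nu)=4\pi\nu$. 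This identifies $\epsilon_{\rm M}=1+L$; it is symmetric, and $\epsilon_{\rm M}\ge1$ because $L\ge0$ by Proposition~\ref{lem:limL}. Uniqueness of $W_\nu$ in $\cC'$ then follows from Lax--Milgram: the form $(W,\phi)\mapsto\frac1{4\pi}\int\nabla\phi^{T}\epsilon_{\rm M}\nabla W$ is coercive on $\cC'$ since $\epsilon_{\rm M}\ge1$, and $\phi\mapsto\int\nu\phi$ is continuous on $\cC'$ because $\nu\in L^1(\R^3)\cap L^2(\R^3)\subset\cC$.

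The convergence $\cL_\eta\to\cL_0$ is where I expect the real work to lie, and it is here that the bare statement \eqref{eq:limL} does not suffice. In the Bloch--Floquet representation $\cL$ is not a pure multiplier: $\widehat{\cL\rho}(\xi)$ couples $\widehat\rho(\xi+K)$ over the reciprocal lattice $\cR^\ast$, so that $\widehat{\cL_\eta\phi}(k)=\sum_{K\in\cR^\ast}m_K(\eta k)\,\widehat\phi(k+K/\eta)$. For $\phi$ in the Schwartz class and $K\neq0$ the argument $k+K/\eta$ escapes to infinity, so these cross terms vanish by decay of $\widehat\phi$, provided the symbols $m_K$ are uniformly bounded; only the head term $m_0(\eta k)\widehat\phi(k)$ survives. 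It then remains to show $m_0(\xi)\to\sigma^{T}L\sigma$ as $\xi=|\xi|\sigma\to0$, which is precisely the second-order $k\cdot p$ expansion of the diagonal block of $\cL$ that produces formula \eqref{eq:mk}, the gap $g>0$ guaranteeing the analyticity of the Bloch projectors needed to control the remainder. Combining this pointwise convergence with the uniform multiplier bound $|m_0|\le C$ (which holds since $\cL\ge0$ is bounded on $\cC$) yields $\cL_\eta\phi\to\cL_0\phi$ strongly in $\cC$ by dominated convergence; then, writing $\langle\cL_\eta\rho_\eta,\phi\rangle_\cC=\langle\rho_\eta,\cL_\eta\phi\rangle_\cC$ and using the uniform bound $\|(1+\cL_\eta)^{-1}\nu\|_\cC\le\|\nu\|_\cC$ to extract a weak limit, one passes to the limit in the self-consistent relation and obtains the desired resolvent convergence.

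Finally, for a crystal with the symmetry of the cube, the invariance of $\rho^{\rm nuc}_{\rm per}$ under the cubic point group $G\subset O(3)$ is inherited by $V_{\rm per}$, by the band functions $\epsilon_{n,q}$ and by the gradient matrix elements entering \eqref{eq:mk}, so that the quadratic form $k\mapsto k^{T}Lk$ is $G$-invariant, i.e. $R^{T}LR=L$ for all $R\in G$. Since the standard representation of the cubic group on $\R^3$ is irreducible, Schur's lemma forces $L=L_0\,\mathrm{Id}$, and therefore $\epsilon_{\rm M}=(1+L_0)\,\mathrm{Id}$ is proportional to the identity, as claimed.
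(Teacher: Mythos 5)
Your overall architecture (reduce to the linear response, exploit $\|\nu_\eta\|_{\cC}=\eta^{1/2}\|\nu\|_\cC$ to kill the nonlinear remainder, then take a scaling limit of $(1+\cL)^{-1}$) is the same as the paper's, and the remainder estimate is essentially correct. But the treatment of the linear part contains a genuine error, and it changes the answer: the limit is \emph{not} $\epsilon_{\rm M}=1+L$. The paper proves (see \eqref{eq:epsilon_M} and Lemma \ref{lem:def_epsilon}) that
$$
\epsilon_{\rm M} \;=\; 1+L-\sum_{K,K'\in\cR^\ast\setminus\{0\}}\beta_K\,[C(0)^{-1}]_{K,K'}\,\beta_{K'}^\ast ,
\qquad 1\le \epsilon_{\rm M}\le 1+L,
$$
and the correction term (the ``local field effects'') is generically nonzero. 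The flaw is your claim that $\cL_\eta:=S_\eta^{-1}\cL S_\eta$ converges strongly to the Fourier multiplier $\cL_0$ with symbol $k^TLk/|k|^2$, so that resolvents converge. Your verification only looks at the \emph{low-frequency output} of $\cL_\eta\phi$, i.e.\ the row entries $[\cL]_{0K'}(\eta k)$ of the Bloch matrix of $\cL$, which are indeed bounded (in fact $O(\eta)$ for $K'\neq0$). What you miss is the \emph{high-frequency output}: since $\cL=-\chi_0 v_{\rm c}$ and the Coulomb factor $4\pi/|q|^2$ sits on the column index, the column entries behave like $[\cL]_{K0}(q)=-[\chi_0]_{K0}(q)\,4\pi/|q|^2\sim c_K/|q|$ as $q\to0$ (because $[\chi_0]_{K0}(q)=O(|q|)$, the constant being essentially $\beta_K$). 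Consequently $\cL_\eta\phi$ carries Fourier bumps near each $K/\eta$, $K\neq0$, whose profile is $[\cL]_{K0}(\eta k')\widehat\phi(k')\sim (\eta|k'|)^{-1}$, and a computation of the Coulomb norm shows these bumps have $\cC$-norm of order $1$, not $o(1)$. So $\cL_\eta\phi\to\cL_0\phi$ only \emph{weakly} in $\cC$, uniform boundedness notwithstanding, and strong resolvent convergence fails. (Your parenthetical justification ``the symbols are uniformly bounded since $\cL\ge0$ is bounded on $\cC$'' is exactly where this breaks: boundedness on $\cC$ controls the symbols only with the mismatched weights $|q+K|^{-1}$, $|q|^{-1}$ on the two indices, and $[\cL]_{K0}$ is in fact unbounded.)

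This is not a technicality that a compactness argument can repair: the solution $u_\eta=(1+\cL_\eta)^{-1}\nu$ develops $O(1)$ microscopic (high-frequency) components, and $\cL_\eta$ maps them back into the macroscopic mode through the entries $[\cL]_{0K}$, producing an $O(1)$ contribution to the limiting equation. Your final step (pairing $\langle\rho_\eta,\cL_\eta\phi\rangle_\cC$ with $\rho_\eta$ only weakly convergent and $\cL_\eta\phi$ only weakly convergent) cannot pass to the limit, precisely because weak--weak pairings do not converge. The correct limit keeps track of the coupled system for the $K=0$ and $K\neq0$ Bloch components and eliminates the microscopic ones by a Schur complement: this is the content of the paper's Lemma \ref{lem:def_epsilon} (continuity of $\tilde\epsilon_q$, the limits $\tilde\epsilon_{0K'}(\eta\sigma)\to\beta_{K'}\cdot\sigma$, $P_0\tilde\epsilon_qP_0\to C\ge1$, and the Schur formula for $\langle e_0,\tilde\epsilon_{\eta\sigma}^{-1}e_0\rangle$), followed by Lemma \ref{lem:weak_limit}. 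The pieces of your proposal that do survive are the remainder estimate, the Lax--Milgram uniqueness, and the Schur-lemma symmetry argument (which applies verbatim to the full $\epsilon_{\rm M}$ once one checks that the correction term, not just $L$, is invariant under the cubic group).
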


\medskip

From a physical viewpoint, the matrix $\epsilon_{\rm M}$ is the
electronic contribution to the macroscopic dielectric tensor of 
the host crystal. Note the other contribution, originating from the displacements of the
nuclei~\cite{PicCohMar-70}, is not taken into account in our study. 

The matrix $\epsilon_{\rm M}$ can be computed from the Bloch-Floquet
decomposition of $H^0_{\rm per}$ as follows.
The operator $\widetilde \epsilon^{-1} = v_{\rm c}^{-1/2}
\epsilon^{-1} v_{\rm c}^{1/2}$ commuting with the
translations of the lattice, i.e. with $\tau_R$ for all $R \in \cR$,
it can be represented by the Bloch matrices
$([\widetilde \epsilon_{KK'}^{-1}(q)]_{K,K' \in \cR^\ast})_{q \in
  \Gamma^\ast}$:
$$
\forall f \in L^2(\R^3), \quad 
\widehat{\widetilde \epsilon^{-1} f}(q+K) = 
\sum_{K' \in \cR^\ast} \widetilde \epsilon_{KK'}^{-1}(q) \widehat f(q+K')
$$
for almost all $q \in \Gamma^\ast$ and $K \in \cR^\ast$. 
We will show later in Lemma \ref{lem:def_epsilon} that
$\tilde{\epsilon}_{K,K'}(\eta\sigma)$ has a limit when $\eta$ goes to
$0^+$ for all fixed $\sigma\in S^2$. Indeed one has  
$$
\lim_{\eta\to0^+}\tilde{\epsilon}_{0,0}(\eta\sigma)=1+\sigma^T L \sigma
$$
where $L$ is the $3 \times 3$ non-negative symmetric matrix defined
in~\eqref{eq:mk}.  
When $K,K'\neq0$, $\tilde{\epsilon}_{K,K'}(\eta\sigma)$ has a limit at
$\eta=0$, which is independent of $\sigma$ and which we simply denote as
$\tilde{\epsilon}_{K,K'}(0)$. When $K=0$ but $K'\neq0$, the limit is a
linear function of $\sigma$: for all $K' \in \cR^\ast \setminus
\left\{0\right\}$,  
$$
\lim_{\eta\to0^+} \tilde{\epsilon}_{0,K'}(\eta\sigma)=\beta_{K'} \cdot \sigma,
$$
for some $\beta_{K'} \in \CC^3$.
The electronic contribution to the macroscopic
dielectric permittivity is the $3 \times 3$ symmetric tensor
defined as \cite{BarRes-86}
\begin{equation} \label{eq:epsilonM}
\forall k \in \R^3, \quad 
k^T \epsilon_{\rm M} k = \lim_{\eta \to 0^+} \frac{|k|^2}{[
  \widetilde\epsilon^{-1}]_{00}(\eta k)}.
\end{equation}  
By the Schur complement formula, one has
$$ \frac{1}{[\widetilde\epsilon^{-1}]_{00}(\eta
  k)}=\tilde\epsilon_{00}(\eta
k)-\sum_{K,K'\neq0}{\tilde{\epsilon}_{0,K}(\eta k)}[C(\eta
k)^{-1}]_{K,K'}\tilde{\epsilon}_{K',0}(\eta k) 
$$
where $C(\eta k)^{-1}$ is the inverse of the matrix
$C(\eta k)=[\tilde\epsilon_{KK'}(\eta k)]_{K,K'\in \cR^\ast\setminus\left\{0\right\}}$. 
This leads to
$$ \lim_{\eta\to0^+}\frac{|k|^2}{[\widetilde\epsilon^{-1}]_{00}(\eta k)}
=|k|^2+k^TLk-\sum_{K,K' \in \cR^\ast\setminus\left\{0\right\}} (\beta_{K}
\cdot k) [C(0)^{-1}]_{K,K'}  
\overline{(\beta_{K'} \cdot k)}$$
where $C(0)^{-1}$ is the inverse of the matrix
$C(0)=[\tilde\epsilon_{KK'}(0)]_{K,K'\in
  \cR^\ast\setminus\left\{0\right\}}$, hence to
\begin{equation} \label{eq:epsilon_M}
\boxed{\epsilon_{\rm M} = 1 + L -\sum_{K,K'\in
  \cR^\ast\setminus\left\{0\right\}} \beta_{K} [C(0)^{-1}]_{K,K'}
\beta_{K'}^\ast.}
\end{equation}
As already noticed in \cite{BarRes-86}, it holds
$$
1 \leq \epsilon_{\rm M} \leq 1 + L.
$$

Formula~\eqref{eq:epsilon_M} has been used in numerical simulations for
estimating 
the macroscopic dielectric permittivity of real insulators and
semiconductors~\cite{BarRes-86,HybLou-87a,HybLou-87b,EngFar-92,GajHumKreFurBec-06}.
Direct methods for evaluating $\epsilon_{\rm M}$, bypassing the
inversion of the matrix $C(0)$, have also been
proposed~\cite{ResBal-81,KunTos-84}.

\section{Proofs} 
\label{sec:proofs}

In this last section, we gather the proofs of all the results of this paper.

\subsection{Preliminaries}
Let us first recall some useful results established in
\cite{CanDelLew-08a}.
\begin{lemma}[Some technical estimates] \label{lem:technical} Let $\Lambda$ be a compact subset of $\C\setminus\sigma(H^0_{\rm per})$.
\begin{enumerate}
\item The operator $B(z):=(z-H^0_{\rm per})^{-1}(1-\Delta)$ and its
  inverse are bounded uniformly on $\Lambda$.
\item The operators $|\nabla|\times|z-H^0_{\rm per}|^{-\frac 12}$ and
  $|\nabla|(z-H^0_{\rm per})^{-1}$ are bounded uniformly on $\Lambda$.
\item There exists two positive constants $0 < c_1 < C_1 < \infty$ such
  that
\begin{equation} \label{eq:bounds_H0per}
c_1 (1-\Delta) \le |H^0_{\rm per}-\epsilon_{\rm F}| \le C_1 (1-\Delta).
\end{equation}
\item If $V \in L^2(\R^3) + \cC'$, $[\gamma^0_{\rm per},V] \in \gS_2$
  and there exists a constant $C \in \R_+$ independent of $V$ such that
\begin{equation} \label{eq:commutator1}
\|[\gamma^0_{\rm per},V]\|_{\gS_2} \le C \|V\|_{L^2(\R^3) + \cC'}.
\end{equation}
Besides, if $V \in L^q(\R^3)$ for some $1 \le q \le \infty$ and if
$\nabla V \in L^p(\R^3)$ for some $\frac 65 <p< \ii$, then
$$
\left\|[\gamma^0_{\rm per},V]\right\|_{\gS_p}\leq C\|\nabla V\|_{L^p(\R^3)}.
$$
\end{enumerate}
\end{lemma}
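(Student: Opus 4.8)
These four estimates are quoted from \cite{CanDelLew-08a}, so the plan is to reconstruct them from the two structural facts already recalled above: $V_{\rm per}$ is infinitesimally $\Delta$-form-bounded (whence $H^0_{\rm per}$ is comparable to $-\Delta$), and $\epsilon_{\rm F}$ lies in a spectral gap of the bounded-below operator $H^0_{\rm per}$. On top of these I would invoke only the first resolvent identity, the functional calculus of $H^0_{\rm per}$, and the Kato--Seiler--Simon (KSS) inequality $\|f(x)\,g(-i\nabla)\|_{\gS_p}\le(2\pi)^{-3/p}\|f\|_{L^p}\|g\|_{L^p}$ for $p\ge2$. I would prove the items in the order (1), (3), (2), (4), since (2) and (4) rest on the first two.

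For (1) the plan is to write $1-\Delta=2H^0_{\rm per}+(1-2V_{\rm per})$, so that
\begin{equation*}
B(z)=(z-H^0_{\rm per})^{-1}(1-\Delta)=-2+2z\,(z-H^0_{\rm per})^{-1}+(z-H^0_{\rm per})^{-1}(1-2V_{\rm per}).
\end{equation*}
The resolvent is norm-continuous, hence uniformly bounded, on the compact set $\Lambda$, and the only nontrivial term $(z-H^0_{\rm per})^{-1}V_{\rm per}$ is the adjoint of $V_{\rm per}(\bar z-H^0_{\rm per})^{-1}$, which is bounded because $(\bar z-H^0_{\rm per})^{-1}$ maps $L^2$ into $H^2$ and $V_{\rm per}\colon H^2\to L^2$ is bounded; the inverse $B(z)^{-1}=(1-\Delta)^{-1}(z-H^0_{\rm per})$ is treated identically. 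For (3), infinitesimal form-boundedness gives $(\tfrac12-\theta)(-\Delta)-b_\theta\le H^0_{\rm per}\le(\tfrac12+\theta)(-\Delta)+b_\theta$, so for $M$ large $a(1-\Delta)\le H^0_{\rm per}+M\le A(1-\Delta)$ with $0<a<A$. Then I would compare the two scalar functions $f(\lambda)=|\lambda-\epsilon_{\rm F}|$ and $g(\lambda)=\lambda+M$ on $\sigma(H^0_{\rm per})$: both are bounded below by a positive constant (the gap keeps $f$ away from $0$, while $g\ge\inf\sigma+M>0$) and their ratio tends to $1$ at infinity, so $c_1'g\le f\le C_1'g$ on the spectrum; the functional calculus of $H^0_{\rm per}$ then yields $|H^0_{\rm per}-\epsilon_{\rm F}|\simeq H^0_{\rm per}+M\simeq 1-\Delta$, which is \eqref{eq:bounds_H0per}.

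For (2) I would factor $|\nabla|(z-H^0_{\rm per})^{-1}=\big[|\nabla|(1-\Delta)^{-1/2}\big]\big[(1-\Delta)^{1/2}(z-H^0_{\rm per})^{-1}\big]$: the first bracket is a Fourier multiplier of modulus $\le1$, and the second is bounded uniformly on $\Lambda$ since $(1-\Delta)(z-H^0_{\rm per})^{-1}=B(\bar z)^\ast$ is, by (1). For the operator $|\nabla|\,|z-H^0_{\rm per}|^{-1/2}$ I use the same factorisation with $|z-H^0_{\rm per}|^{-1/2}$ in place of $(z-H^0_{\rm per})^{-1}$; boundedness of $(1-\Delta)^{1/2}|z-H^0_{\rm per}|^{-1/2}$ follows from (3) once one notes that for $z\in\Lambda$ off the spectrum the scalar functions $|z-\lambda|$ and $|\lambda-\epsilon_{\rm F}|$ are comparable on $\sigma(H^0_{\rm per})$, uniformly in $z$ (their ratio is continuous, positive, and tends to $1$ at infinity).

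Finally, for (4) I would split $V=V_1+V_2$ with $V_1\in L^2(\R^3)$ and $V_2\in\cC'$. For $V_2$ the Cauchy formula $\gamma^0_{\rm per}=\frac{1}{2i\pi}\oint_\curv(z-H^0_{\rm per})^{-1}dz$ gives
\begin{equation*}
[\gamma^0_{\rm per},V_2]=\frac{1}{2i\pi}\oint_\curv(z-H^0_{\rm per})^{-1}\,[H^0_{\rm per},V_2]\,(z-H^0_{\rm per})^{-1}\,dz,
\end{equation*}
and since $V_{\rm per}$ commutes with multiplication by $V_2$ one has $[H^0_{\rm per},V_2]=-\tfrac12\big(\nabla\cdot(\nabla V_2\,\cdot)+(\nabla V_2)\cdot\nabla\big)$, in which only $\nabla V_2\in L^2$ enters. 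In the sandwich each outer derivative is absorbed by a resolvent through (2), while the middle factor $(1-\Delta)^{-1}(\nabla V_2)$ is Hilbert--Schmidt by KSS, the multiplier $(1+|k|^2)^{-1}$ lying in $L^2(\R^3)$; integrating over the finite contour gives $\|[\gamma^0_{\rm per},V_2]\|_{\gS_2}\le C\|V_2\|_{\cC'}$. For $V_1$ I use that (3) already makes $\gamma^0_{\rm per}(1-\Delta)^{1/2}$ bounded, and the identity $\gamma^0_{\rm per}H^0_{\rm per}(1-\gamma^0_{\rm per})=0$ upgrades this to boundedness of $\gamma^0_{\rm per}(1-\Delta)$; writing $\gamma^0_{\rm per}V_1=\gamma^0_{\rm per}(1-\Delta)\cdot(1-\Delta)^{-1}V_1$ and applying KSS to the $\gS_2$ factor $(1-\Delta)^{-1}V_1$ yields $\|[\gamma^0_{\rm per},V_1]\|_{\gS_2}\le C\|V_1\|_{L^2}$. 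The $\gS_p$ statement is then obtained by repeating the contour computation for $V_2$ with the $\gS_p$ form of KSS and the $L^p$-boundedness of the Riesz and resolvent multipliers involved. \textbf{The main obstacle} I expect is precisely this last $\gS_p$ estimate over the full range $6/5<p<\infty$: KSS is only directly available for $p\ge2$, so for $6/5<p<2$ one must interpolate (or dualise) and verify $L^p$-boundedness of the relevant singular integrals on exactly this range; matching the sharp exponent $6/5$, rather than the cruder threshold $p>3/2$ produced by naively demanding $(1+|k|^2)^{-1}\in L^p$, is the delicate bookkeeping and is where I would lean most heavily on the computations of \cite{CanDelLew-08a}.
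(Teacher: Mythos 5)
Your reconstructions of items (1)--(3) and of the $\gS_2$ bound in item (4) are correct; the paper itself does not prove these (it cites \cite{CanDelLew-08a}), and your commutator identity for the $\cC'$-part is exactly the contour formula the paper displays for the last estimate. One local repair: boundedness of $\gamma^0_{\rm per}(1-\Delta)$ does not follow from the identity $\gamma^0_{\rm per}H^0_{\rm per}(1-\gamma^0_{\rm per})=0$, which holds trivially by commutation and carries no $H^2$ information. Instead write $(1-\Delta)\gamma^0_{\rm per}=\left[(1-\Delta)(z_0-H^0_{\rm per})^{-1}\right]\left[(z_0-H^0_{\rm per})\gamma^0_{\rm per}\right]$ for a fixed $z_0\notin\sigma(H^0_{\rm per})$: the first factor is bounded by your item (1), the second by functional calculus since $\lambda\mapsto(z_0-\lambda)1_{(-\ii,\epsilon_{\rm F}]}(\lambda)$ is bounded on $\sigma(H^0_{\rm per})$.

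The genuine gap is the one you flag yourself: the $\gS_p$ estimate over the full range $\frac65<p<\ii$, which you defer to \cite{CanDelLew-08a}. The missing idea is neither interpolation nor duality; it is an \emph{asymmetric} H\"older splitting of $\nabla V$ between two Kato--Seiler--Simon factors, and this splitting is exactly what produces the exponent $\frac65$. Each term of the contour formula has the form $B(z)\,T_j\,B(z)^\ast$ with, say, $T_j=\left[(1-\Delta)^{-1}\partial_{x_j}\right](\partial_{x_j}V)\,(1-\Delta)^{-1}$ (or its mirror image). Write $\partial_{x_j}V=u_jw_j$ with $u_j=|\partial_{x_j}V|^{p/p_1}\,{\rm sgn}(\partial_{x_j}V)$, $w_j=|\partial_{x_j}V|^{p/p_2}$, where $\frac1p=\frac1{p_1}+\frac1{p_2}$. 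H\"older's inequality for Schatten norms gives
\[
\|T_j\|_{\gS_p}\le \left\|\left[(1-\Delta)^{-1}\partial_{x_j}\right]u_j\right\|_{\gS_{p_1}}\left\|w_j(1-\Delta)^{-1}\right\|_{\gS_{p_2}},
\]
and \eqref{KSS} applies to each factor as soon as $p_1\ge2$, $p_2\ge2$, $k_j(1+|k|^2)^{-1}\in L^{p_1}$ (i.e. $p_1>3$) and $(1+|k|^2)^{-1}\in L^{p_2}$ (automatic once $p_2\ge2$), with $\|u_j\|_{L^{p_1}}\|w_j\|_{L^{p_2}}=\|\partial_{x_j}V\|_{L^p}$. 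The admissible exponents are thus $\frac1p=\frac1{p_1}+\frac1{p_2}<\frac13+\frac12=\frac56$, and conversely every $p>\frac65$ is attained: take $p_1=p_2=2p$ when $p\ge2$, and $p_2=2$, $\frac1{p_1}=\frac1p-\frac12\in\left(0,\frac13\right)$ when $\frac65<p<2$. Your symmetric choice $p_1=p_2=2p$ is precisely what loses the interval $\left(\frac65,\frac32\right]$: the obstruction is not the unavailability of KSS below exponent $2$ (landing in $\gS_p$ with $p<2$ is unproblematic), but the requirement $k_j(1+|k|^2)^{-1}\in L^{p_1}$, which must be unloaded onto one factor only, the KSS floor $p_2\ge2$ on the other factor then fixing the threshold at $\left(\frac13+\frac12\right)^{-1}=\frac65$. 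Integrating over the finite contour, with $B(z)$ and $B(z)^\ast$ uniformly bounded, finishes the proof; this is the content hidden in the paper's one-line ``it then suffices to use the Kato--Seiler--Simon inequality.''
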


We denote as usual by $\gS_p$ the space of all operators $A$ such
that $\tr(|A|^p)<\ii$, endowed with the norm
$\|A\|_{\gS_p}:=\tr(|A|^p)^{\frac 1p}$. 

\begin{proof}
For (1), (2), (3) and the first assertion of (4), see the proofs of
\cite[Lemma~1]{CanDelLew-08a} and 
  \cite[Lemma~3]{CanDelLew-08a}. The last estimate is obtained like in \cite[p. 148]{CanDelLew-08a} by writing\footnote{Note there is a sign misprint in the corresponding formula at the top of p. 148 in \cite{CanDelLew-08a}.}
\begin{multline*}
[\gamma^0_{\rm per},V]=-\sum_{j=1}^3 \frac{1}{4 i \pi} \int_\curv B(z)
\left((-\Delta+1)^{-1} \partial_{x_j} \right)
\frac{\partial V}{\partial{x_j}} (-\Delta+1)^{-1}B(z)^\ast\, dz \\
- \sum_{j=1}^3 \frac{1}{4 i \pi} \int_\curv
B(z)(-\Delta+1)^{-1} \frac{\partial V}{\partial{x_j}} \left( 
\partial_{x_j} (-\Delta+1)^{-1} \right)) B(z)^\ast\, dz.
\end{multline*}
It then suffices to use the Kato-Seiler-Simon inequality (see \cite{SeiSim-75} and
\cite[Thm 4.1]{Simon-79}) 
\begin{equation}
\forall p\geq2,\qquad \|f(-i\nabla)g(x)\|_{\gS_p}\leq (2\pi)^{-\frac 3p}
\|g\|_{L^p(\R^3)}\|f\|_{L^p(\R^3)}
\label{KSS}
\end{equation}
and the fact that $B(z)$ is uniformly bounded on $\curv$.
\end{proof}

\subsection{Proof of Theorem~\ref{thm:defaut}}
Let $\nu$ be such that $V = (\nu
\star |\cdot|^{-1}) \in L^2(\R^3) + \cC'$. As $C^\infty_0(\R^3)$ is
dense in $L^2(\R^3)$ and is included in $\cC'$, $V$ can be decomposed
for all $\eta > 0$ as $V=V_{2,\eta} + V'_\eta$ with
$V_{2,\eta} \in L^2(\R^3)$, $V'_\eta \in \cC'$ and
$\|V_{2,\eta}\|_{L^2} \le \eta$. Denoting by $\nu'_\eta = -
(4\pi)^{-1} \Delta V'_\eta$, we obtain $\nu'_\eta \in \cC$ and
$$
\forall Q \in \cQ, \quad - \int_{\RR^3} \rho_Q V  \ge 
- \eta \|\rho_Q\|_{L^2} - D(\nu'_\eta,\rho_Q).
$$
By \cite[Prop.~1]{CanDelLew-08a}, we know that there
exists a constant $C \in \RR_+$ such that
\begin{equation*}
\forall Q \in \cQ, \quad \|\rho_Q\|_{L^2} \le  C \|Q\|_\cQ.
\end{equation*}
Besides, for all $Q \in \cK$, $Q^2 \le Q^{++}-Q^{--}$ with $Q^{++} \ge
0$ and $Q^{--} \le 0$. Hence, 
$$
\forall Q \in \cK, \quad \|\rho_Q\|_{L^2} \le C'+C'
 \tr\left((1-\Delta)(Q^{++}-Q^{--})\right).
$$
Using (\ref{eq:bounds_H0per}) and choosing $\eta > 0$ such that $2\eta C'
< c_1$ leads to   
\begin{equation} \label{eq:bounds}   
\forall Q \in \cK, \quad E_{\nu,\epsilon_{\rm F}}(Q) \ge \frac{c_1}2
\tr\left((1-\Delta)(Q^{++}-Q^{--})\right) - C' \eta - \frac
12D(\nu'_\eta,\nu'_\eta). 
\end{equation}
The above inequality provides the bounds on the minimization sequences
of (\ref{eq:min_E}) which allow one to complete the proof of
Theorem~\ref{thm:defaut} by transposing the arguments used in the proof
of  \cite[Theorem~2]{CanDelLew-08a}. \qed

\subsection{Expanding $Q_V$}\label{sec:expansion_Q_V}
In this section, we explain in details how to expand
$$Q_V:=1_{(-\infty,\epsilon_{\rm F}]} \left(H^0_{\rm per} + V \right) -  
1_{(-\infty,\epsilon_{\rm F}]} \left(H^0_{\rm per}\right),$$
and give the properties of each term in the expansion.
The multiplicative operator associated with some $V \in L^2(\R^3) + \cC'$ is
a compact perturbation of $H^0_{\rm per}$, so that the operator
$H^0_{\rm per}+V$ is self-adjoint on $L^2(\R^3)$.
When $V$ is small,
it is then possible to expand $Q_V$ in a perturbative series, using the resolvent formula. For this purpose, we consider a smooth curve $\curv$ in the
complex plane enclosing the whole 
spectrum of $H^0_{\rm per}$ below $\epsilon_{\rm F}$, crossing the
real line at $\epsilon_{\rm F}$ and at some $c < \inf\sigma(H^0_{\rm
  per})$. We furthermore assume that 
$$
d(\sigma(H^0_{\rm per}),\Lambda) = \frac g 4 \quad \mbox{where} \quad 
\Lambda = \left\{z\in \CC \; | \; d(z,\curv) \le \frac g4 \right\},
$$
$d$ denoting the Euclidian distance in the complex plane and $g$ the
band gap (see Fig.~\ref{fig:contour}). 

\begin{figure}[h] 
\centering
\includegraphics{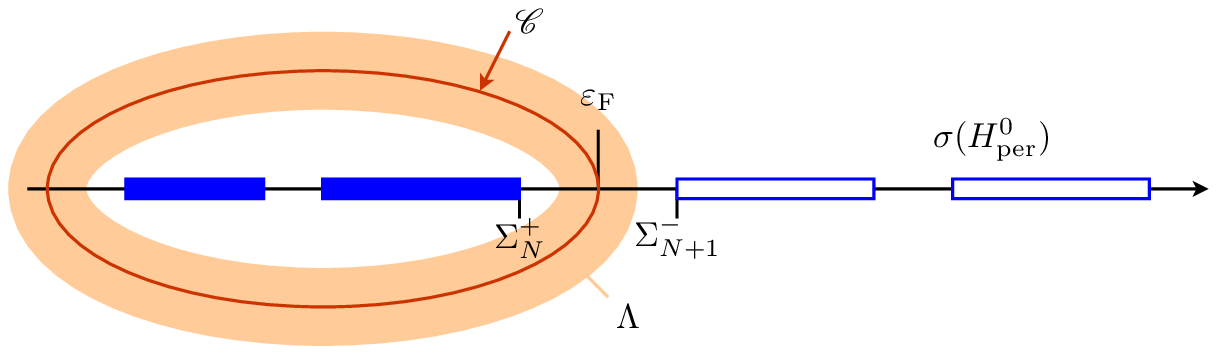}
\caption{Graphical representation of a contour $\curv \subset \CC$
  enclosing $\sigma(H^0_{\rm per}) \cap (-\infty,\epsilon_{\rm F}]$ and
  of the compact set $\Lambda$.}
\label{fig:contour}
\end{figure}

The following result will be useful to expand $Q_V$: 
\begin{lemma} \label{prop:Bogoliubov} There exists $\alpha > 0$ such that if
  $V \in L^2(\R^3) + \cC'$ is such that
$$
\| V \|_{L^2+\cC'} < \alpha,
$$
then 
\begin{equation} \label{eq:ess_spec}
\sigma \left(H^0_{\rm per} + V \right) \cap \Lambda = \emptyset.
\end{equation}
Moreover $\epsilon_{\rm F} \notin \sigma\left(H^0_{\rm per} + V
\right)$, $Q_V \in \cQ$ and $\tr_0(Q_V)=0$. 

Besides, there exists an orthonormal basis $(\phi_i^-)_{i \ge
  1}$ of the occupied space ${\mathcal H}_- = \mbox{Ran}(\gamma^0_{\rm
  per})$ and an orthonormal basis $(\phi_i^+)_{i \ge
  1}$ of the occupied space ${\mathcal H}_+ = \mbox{Ker}(\gamma^0_{\rm
  per})$ such that in the orthonormal basis $((\phi_i^-),(\phi_i^+))$ of
$L^2(\R^3) =  {\mathcal H}_- \widehat{\oplus} {\mathcal H}_+ $,
\begin{equation} \label{eq:decQ}
Q_V  =  \left( 
\begin{array}{c||c}
\mbox{diag}(-a_1,-a_2, \cdots ) &  \mbox{diag}(b_1,b_2, \cdots ) \\
\hline \hline 
\mbox{diag}(b_1,b_2, \cdots ) &  \mbox{diag}(a_1,a_2, \cdots ) \\
\end{array}
\right) 
\end{equation}
with
$$
0 \le a_i < 1, \quad \sum_{i=1}^{+\infty} a_i < \infty, \quad 
b_i = \sqrt{a_i(1-a_i)}.
$$
\end{lemma}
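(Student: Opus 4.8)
The plan is to prove Lemma \ref{prop:Bogoliubov} in three stages: first establish the spectral separation \eqref{eq:ess_spec}, then deduce that $Q_V \in \cQ$ with $\tr_0(Q_V)=0$, and finally obtain the block-diagonal form \eqref{eq:decQ}. For the spectral statement, I would invoke the fact recorded in the text that $V$ is $\Delta$-bounded with relative bound zero (since $V \in L^2(\R^3)+\cC'$, with $\cC' \subset L^6$), so that $H^0_{\rm per}+V$ is self-adjoint on $H^2(\R^3)$. The key quantitative estimate is that for $z \in \Lambda$, the resolvent $(z-H^0_{\rm per})^{-1}$ is bounded, and in fact by Lemma \ref{lem:technical}(1) the operator $B(z)=(z-H^0_{\rm per})^{-1}(1-\Delta)$ is uniformly bounded on $\Lambda$. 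Writing the second resolvent identity
$$
(z-H^0_{\rm per}-V)^{-1} = (z-H^0_{\rm per})^{-1}\bigl(1 - V(z-H^0_{\rm per})^{-1}\bigr)^{-1},
$$
I would show that $\|V(z-H^0_{\rm per})^{-1}\|<1$ uniformly on $\Lambda$ when $\|V\|_{L^2+\cC'}<\alpha$ for $\alpha$ small. The main technical point here is controlling $\|V(z-H^0_{\rm per})^{-1}\|$ by $\|V\|_{L^2+\cC'}$: splitting $V=V_2+V'$ with $V_2\in L^2$ and $V'\in \cC'\subset L^6$, one bounds the $L^2$ part using $\|V_2(1-\Delta)^{-1}\|$ via a Kato--Seiler--Simon type estimate and the $L^6$ part using Sobolev embeddings, then absorbs the $(1-\Delta)(z-H^0_{\rm per})^{-1}=B(z)^{-1}$ factor uniformly. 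This gives \eqref{eq:ess_spec}, and since $\epsilon_{\rm F}\in\Lambda$ lies at the crossing point, $\epsilon_{\rm F}\notin\sigma(H^0_{\rm per}+V)$ as well.

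Once the gap is preserved, both $\gamma^0_{\rm per}=1_{(-\infty,\epsilon_{\rm F}]}(H^0_{\rm per})$ and $1_{(-\infty,\epsilon_{\rm F}]}(H^0_{\rm per}+V)$ are spectral projectors associated with contours avoiding the spectra, so $Q_V$ can be written as a Cauchy integral
$$
Q_V = \frac{1}{2i\pi}\oint_\curv \Bigl[(z-H^0_{\rm per}-V)^{-1}-(z-H^0_{\rm per})^{-1}\Bigr]\,dz.
$$
To show $Q_V\in\cQ$, I would expand the bracket using the resolvent identity into the first-order term $Q_{1,V}$ of \eqref{eq:Q1V} plus a remainder, and invoke Proposition \ref{prop:chi0} (which already gives $Q_{1,V}\in\cQ$) together with the uniform boundedness of $B(z)$ and the commutator estimate \eqref{eq:commutator1} from Lemma \ref{lem:technical}(4) to control the higher-order Hilbert--Schmidt and trace-class norms appearing in the definition \eqref{eq:defQ} of $\cQ$; these are the same ingredients used to define $\cQ$ in \cite{CanDelLew-08a}. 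For $\tr_0(Q_V)=0$, I would argue that $Q_V$ is the difference of two orthogonal projectors whose ranges have the same ``dimension'' relative to $\gamma^0_{\rm per}$; concretely, since $\epsilon_{\rm F}$ stays in the gap and the contour $\curv$ encloses exactly the same number of bands, a continuity/homotopy argument in the parameter $V$ (starting from $V=0$ where $\tr_0=0$) shows the integer-valued quantity $\tr_0(Q_V)$ cannot jump, as in \cite[Lemma 2]{HaiLewSer-05a}.

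For the block structure \eqref{eq:decQ}, the crucial observation is that $P:=\gamma^0_{\rm per}+Q_V=1_{(-\infty,\epsilon_{\rm F}]}(H^0_{\rm per}+V)$ is itself an orthogonal projector, so $(\gamma^0_{\rm per}+Q_V)^2=\gamma^0_{\rm per}+Q_V$, which upon expansion yields the operator identity $Q_V^2=Q_V^{++}-Q_V^{--}$ together with $\gamma^0_{\rm per}Q_V+Q_V\gamma^0_{\rm per}+Q_V^2=Q_V$. From these relations one reads off that, in the two-by-two block decomposition with respect to $\cH_-\widehat\oplus\cH_+$, the diagonal blocks $Q_V^{--}$ and $Q_V^{++}$ commute and the off-diagonal blocks are determined by them. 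I would then diagonalize the nonnegative trace-class operator $Q_V^{++}$ on $\cH_+$ in an orthonormal eigenbasis $(\phi_i^+)$ with eigenvalues $a_i\in[0,1)$ (the bound $a_i<1$ follows from $Q_V\le 1-\gamma^0_{\rm per}$, and summability $\sum a_i=\tr(Q_V^{++})<\infty$ from $Q_V^{++}\in\gS_1$), transport this basis to $\cH_-$ via the off-diagonal block to produce the matching $(\phi_i^-)$, and verify using $Q_V^2=Q_V^{++}-Q_V^{--}$ that the off-diagonal entries are exactly $b_i=\sqrt{a_i(1-a_i)}$ with $Q_V^{--}$ having eigenvalues $-a_i$. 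The main obstacle I anticipate is the first stage --- obtaining the uniform resolvent bound $\|V(z-H^0_{\rm per})^{-1}\|<1$ cleanly in the combined $L^2+\cC'$ norm --- since everything downstream (membership in $\cQ$, the integer invariance of $\tr_0$, and the simultaneous diagonalization) follows from standard projector algebra once the spectral gap and the regularizing resolvent bounds of Lemma \ref{lem:technical} are in place.
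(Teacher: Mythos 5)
Your first stage is exactly the paper's proof: the factorization $z-H^0_{\rm per}-V=(z-H^0_{\rm per})\left(1-B(z)(1-\Delta)^{-1}V\right)$, the Kato--Seiler--Simon bound $\|(1-\Delta)^{-1}V\|\le C'\|V\|_{L^2+\cC'}$ obtained by splitting $V\in L^2(\R^3)+L^6(\R^3)$, and the resulting emptiness of $\sigma(H^0_{\rm per}+V)\cap\Lambda$. After that, the two arguments part ways. The paper extracts one more estimate from the resolvent identity, namely $\|Q_V\|\le \frac{|\curv|}{2\pi}\frac{C^2C'\|V\|_{L^2+\cC'}}{1-CC'\|V\|_{L^2+\cC'}}$, chooses $\alpha$ precisely so that $\|Q_V\|<1$, and then quotes \cite[Lemma 2]{HaiLewSer-05a} and \cite[Theorem 5]{HaiLewSer-08}, which yield \eqref{eq:decQ} (and hence $\tr_0(Q_V)=\sum_i a_i-\sum_i a_i=0$) for any Hilbert--Schmidt difference of two orthogonal projectors of norm strictly less than one; membership $Q_V\in\cQ$ is not proved inside this lemma at all, but later, in the proof of Lemma~\ref{prop:Dyson}, by remarking that $Q_V$ minimizes the linearized functional $\tr(|H^0_{\rm per}-\epsilon_{\rm F}|(Q^{++}-Q^{--}))-\int_{\R^3}V\rho_Q$ on $\cK$, so that Theorem~\ref{thm:defaut} (with the Coulomb term erased) applies. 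Your stage 3 is in substance a hand-made reconstruction of the two-projection argument behind \cite[Theorem 5]{HaiLewSer-08}, and your homotopy argument for $\tr_0(Q_V)=0$ is the one the paper sketches in the discussion following the lemma; both are viable replacements for the citations.

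Two genuine gaps remain. First, you never establish the operator-norm bound $\|Q_V\|<1$, and the substitute you offer is wrong: $Q_V\le 1-\gamma^0_{\rm per}$ holds for \emph{any} difference of orthogonal projectors (it merely restates $1-(\gamma^0_{\rm per}+Q_V)\ge 0$) and gives only $a_i\le 1$. The strict inequality $a_i<1$ --- equivalently $\mbox{Ker}(\gamma^0_{\rm per})\cap\mbox{Ran}(\gamma^0_{\rm per}+Q_V)=\{0\}$, which is also what makes your transported vectors $\phi_i^-=b_i^{-1}Q_V^{-+}\phi_i^+$ well defined --- is exactly what the norm bound buys, and it is the reason for the paper's specific choice $\alpha=((1+(2\pi)^{-1}C|\curv|)CC')^{-1}$. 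The fix is one line (the resolvent identity you already wrote), but it is a needed line. Second, your route to $Q_V\in\cQ$ is circular as the paper is organized, and it underestimates the work: Proposition~\ref{prop:chi0} is proved from Lemma~\ref{prop:Dyson}, which itself invokes the present lemma, so you would have to reprove $Q_{1,V}\in\cQ$ directly from Lemma~\ref{lem:technical} and \eqref{KSS} (this is doable and is what the paper does). Moreover a single splitting of $Q_V$ into ``$Q_{1,V}$ plus remainder'' cannot close the argument: with $V\in L^2+L^6$, the Kato--Seiler--Simon inequality makes the remainder trace-class only after expanding to sixth order, and the intermediate terms $Q_{2,V},\dots,Q_{5,V}$ require the residuum/commutator block analysis carried out in Lemma~\ref{prop:Dyson} to control their $\gS_1$ diagonal blocks. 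Either redo that multi-order analysis, or use the paper's variational shortcut via Theorem~\ref{thm:defaut}.
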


\medskip

The meaning of (\ref{eq:ess_spec}) is the following. As mentioned above,
any $V \in L^2(\R^3) + \cC'$ defines a compact perturbation of $H^0_{\rm
  per}$; hence the essential spectrum of the Hamiltonian
remains unchanged: 
$$
\sigma_{\rm ess}\left( H^0_{\rm
    per}+V\right)=\sigma_{\rm ess}\left( H^0_{\rm 
    per}\right).
$$
This in particular means that only eigenvalues of finite multiplicity
may appear in the gap 
$(\Sigma_N^+,\Sigma_{N+1}^-)$, and they can only accumulate at
$\Sigma_N^+$ or $\Sigma_{N+1}^-$. For $V$ small enough in $L^2+\cC'$, these
eigenvalues will be 
localized at the edges of the gap, i.e. in a vicinity of $\Sigma_N^+$
and $\Sigma_{N+1}^-$. It can be seen that the charge $\tr_0(Q_{tV})$ jumps as $t$ is increased when an eigenvalue crosses the curve $\curv$ and that it is a constant integer when this does not happen. By continuity, we deduce that $\tr_0(Q_{V})=0$: for $V$ small enough, no electron-hole pair is created from the Fermi sea.

The representation (\ref{eq:decQ}) of $Q_V$ was proved in \cite{HaiLewSer-08} and it can be
interpreted in terms of Bogoliubov states. Each $2 \times 2$
submatrix $\left( \begin{array}{cc} -a_i & b_i \\ b_i & a_i \end{array}
\right)$ can be seen as a {\em virtual} electron-hole pair. A {\em real}
electron-hole pair would be observed for $a_i=1$. 
It is easy to see that
the eigenvalues of $Q_{V}$ (including multiplicities)
are $(-a_i^{\frac 12},a_i^{\frac 12})_{i \ge 1}$. Thus a necessary and
sufficient condition for $Q_{V}$ being trace class
reads $\sum a_i^{\frac 12} < \infty$. 

\medskip

We now provide the 
\begin{proof}[Proof of Lemma \ref{prop:Bogoliubov}]
Let $C \in \RR_+$ be such that for all $z \in \Lambda$, $\|B(z)\| \le C$
(see the first statement of Lemma~\ref{lem:technical}). For all $z \in
\Lambda$, 
$$
z-H^0_{\rm per}-V = (z-H^0_{\rm per}) (1 - B(z)(1-\Delta)^{-1}V).
$$
As $L^2(\R^3) + \cC' \subset L^2(\R^3) + L^6(\R^3)$, it follows from the 
Kato-Seiler-Simon inequality \eqref{KSS} that there exists a constant $C'$ such that
$$
\forall V \in L^2(\R^3) + \cC', \quad \|(1-\Delta)^{-1}V\| \le C'
\|V\|_{L^2+\cC'}. 
$$
If $\|V\|_{L^2+\cC'} < (CC')^{-1}$, $\sigma(H^0_{\rm per}+V) \cap
\Lambda = \emptyset$. As $V$ defines a compact perturbation of $H^0_{\rm
  per}$, it follows from a standard continuity arguments that 
for  $\|V\|_{L^2+\cC'} < (CC')^{-1}$, the set $\sigma(H^0_{\rm per}+V) \cap
(-\infty,\epsilon_{\rm F}]$ lays inside the contour $\curv$, yielding
$$
1_{(-\infty,\epsilon_{\rm F}]}(H^0_{\rm per}+V) = 
\frac{1}{2i\pi} \oint_\curv  \left(z-H^0_{\rm per}-V\right)^{-1}  \, dz.
$$
Thus,
$$
Q_V = \frac{1}{2i\pi} \oint_\curv  \left[
\left(z-H^0_{\rm per}-V\right)^{-1} - \left(z-H^0_{\rm
    per}\right)^{-1}\right] \, dz. 
$$
Besides, still in the case when $\|V\|_{L^2+\cC'} < (CC')^{-1}$,
$$
\left(z-H^0_{\rm per}-V\right)^{-1} - \left(z-H^0_{\rm
    per}\right)^{-1} =  B(z)(1-\Delta)^{-1}V\left(z-H^0_{\rm per}-V\right)^{-1},
$$
so that
$$
\| Q_V \| \le \frac{|\curv|}{2\pi} \frac{C^2C' \|V\|_{L^2+\cC'}}{1-CC' \|V\|_{L^2+\cC'}}.
$$
We now set $\alpha=((1+(2\pi)^{-1}C|\curv|)CC')^{-1}$. For all $V \in
L^2(\R^3)+\cC'$ such that $\|V\|_{L^2+\cC'} < \alpha$, it holds
$\sigma(H^0_{\rm per}+V) \cap \Lambda = \emptyset$ and
$\| Q_V \| < 1$.
We conclude using \cite[Lemma 2]{HaiLewSer-05a} and \cite[Theorem 5]{HaiLewSer-08}.
\end{proof}

\bigskip

We are now going to expand $Q_V$ using the resolvent formula. We already know that
$$
\gamma^0_{\rm per} = 1_{(-\infty,\epsilon_{\rm F}]} \left(H^0_{\rm
    per}\right) = \frac{1}{2i\pi} \oint_\curv  \left(z-H^0_{\rm
  per}\right)^{-1}  \, dz.
$$
It now follows from the proof of Lemma~\ref{prop:Bogoliubov} that 
$$
1_{(-\infty,\epsilon_{\rm F}]} \left(H^0_{\rm
    per}+V\right) = \frac{1}{2i\pi} \oint_\curv  \left(z-H^0_{\rm
  per}-V\right)^{-1}  \, dz
$$
for all $V \in L^2(\R^3) + \cC'$ such that $\|V\|_{L^2+\cC'} <
\alpha$, yielding for such $V$'s
$$
Q_V =  \frac{1}{2i\pi} \oint_\curv \left[
\left( z-H^0_{\rm per} - V \right)^{-1} - \left(z-H^0_{\rm
  per}\right)^{-1} \right] \, dz.
$$ 

One important result of the present section is the following
\begin{lemma}[Resolvent expansion] \label{prop:Dyson}
Let $V \in  L^2(\R^3) + \cC'$ such that $\|V\|_{L^2+\cC'} <
\alpha$. Then, for all $K \in \N \setminus \left\{0\right\}$,
\begin{equation} \label{eq:Dyson_Q}
Q_V  =   Q_{1,V} + \cdots + Q_{K,V}  + \widetilde Q_{K+1,V} 
\end{equation}
where 
\begin{eqnarray}
Q_{k,V} & = & \frac{1}{2i\pi} \oint_\curv
\left( z-H^0_{\rm per} \right)^{-1} \left[ V \left(z-H^0_{\rm
      per}\right)^{-1}\right]^k \, dz \label{eq:QkV}  \\
\widetilde Q_{k,V} & = & \frac{1}{2i\pi} \oint_\curv
\left( z-H^0_{\rm per} -V \right)^{-1} \left[ V \left(z-H^0_{\rm
      per}\right)^{-1}\right]^{k} \, dz \label{eq:tQkV}
\end{eqnarray}
For all $k \ge 1$, the operator $Q_{k,V}$ is
in $\cQ$ and $\tr_0(Q_{k,V})=0$. For all $k \ge 1$, the operator
$\widetilde Q_{k,V}$ is in $\cQ$ and $\tr_0(\widetilde Q_{k,V})=0$. For
all $k \ge 6$, the operators $Q_{k,V}$ and $\widetilde Q_{k,V}$ are
trace-class and 
$\tr(\widetilde Q_{k,V}) = 0$. 
\end{lemma}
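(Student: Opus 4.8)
The plan is to obtain \eqref{eq:Dyson_Q} as an \emph{exact} algebraic identity and then to read off each stated property from it, so that no convergent infinite series is required. Writing $R=R(z):=(z-H^0_{\rm per})^{-1}$ and $\tilde R=\tilde R(z):=(z-H^0_{\rm per}-V)^{-1}$ (both well defined on $\curv$ when $\|V\|_{L^2+\cC'}<\alpha$, by Lemma~\ref{prop:Bogoliubov}), I would iterate the second resolvent identity in the form $\tilde R=R+\tilde R\,V R$ exactly $K$ times to obtain $\tilde R-R=\sum_{k=1}^{K}R(VR)^k+\tilde R(VR)^{K+1}$, and integrate against $(2i\pi)^{-1}$ over $\curv$. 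This gives \eqref{eq:Dyson_Q} with $Q_{k,V}$ and $\widetilde Q_{K+1,V}$ exactly as in \eqref{eq:QkV}--\eqref{eq:tQkV}. The same identity applied once more yields the telescoping relation $\widetilde Q_{k,V}=Q_{k,V}+\widetilde Q_{k+1,V}$, and the case $K=0$ (i.e. $\tilde R-R=\tilde R VR$) gives $\widetilde Q_{1,V}=Q_V$; both will be used below.

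For the membership $Q_{k,V},\widetilde Q_{k,V}\in\cQ$, I would argue exactly as for $Q_{1,V}$ and $Q_V$ in \cite{CanDelLew-08a}, estimating the integrand uniformly on the compact curve $\curv$. Self-adjointness follows by taking $\curv$ symmetric about the real axis, since $z\mapsto\bar z$ sends the integrand to its adjoint. For the Schatten-class bounds the inputs are Lemma~\ref{lem:technical} and the Kato--Seiler--Simon inequality \eqref{KSS}: each interior factor $VR$ is bounded on $L^2$ with norm $\le C\|V\|_{L^2+\cC'}<1$ through $B(z)$, so all estimates are uniform in $k$; the factor $B(z)=R(1-\Delta)$, together with its analogue $\tilde R(1-\Delta)=(1-RV)^{-1}B(z)$ (uniformly bounded on $\curv$ by the Neumann series of the proof of Lemma~\ref{prop:Bogoliubov}), converts resolvents into $(1-\Delta)^{-1}$'s on which \eqref{KSS} applies. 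The delicate points are membership in $\gS_2$ with $|\nabla|Q_{k,V}\in\gS_2$, and the trace-class character of the diagonal blocks $Q_{k,V}^{\pm\pm}$ with $|\nabla|Q_{k,V}^{\pm\pm}|\nabla|\in\gS_1$: when $V$ lies only in $\cC'\hookrightarrow L^6(\R^3)$ a crude norm count gives only $\gS_6$, and one must instead exploit the commutator estimate $\|[\gamma^0_{\rm per},V]\|_{\gS_2}\le C\|V\|_{L^2+\cC'}$ of Lemma~\ref{lem:technical}(4), inserting factors $\gamma^0_{\rm per}$ (which commute with every $R$) so as to rewrite the off-diagonal structure through such commutators. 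This is the main obstacle of the proof.

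The identities $\tr_0(Q_{k,V})=0$ I would obtain by differentiating the relation $\tr_0(Q_{tV})\equiv0$, valid for all small $|t|$ by Lemma~\ref{prop:Bogoliubov}. Since $Q_{k,tV}=t^kQ_{k,V}$, the expansion reads $Q_{tV}=\sum_{k=1}^{K}t^kQ_{k,V}+\widetilde Q_{K+1,tV}$ in $\cQ$, and the uniform bounds of the previous step give $\|\widetilde Q_{K+1,tV}\|_{\cQ}\le C\,|t|^{K+1}$. As $\tr_0=\tr(\cdot^{++})+\tr(\cdot^{--})$ is linear and continuous on $\cQ$, applying it yields the scalar relation $\sum_{k=1}^{K}t^k\,\tr_0(Q_{k,V})=O(|t|^{K+1})$ for small $t$; a polynomial in $t$ of degree $\le K$ that is $O(|t|^{K+1})$ vanishes identically, whence $\tr_0(Q_{k,V})=0$ for $1\le k\le K$, and $K$ is arbitrary. (For small $k$ one may alternatively integrate the diagonal matrix elements by residues: each entry carries a double pole $(z-\lambda_i)^{-2}$ whose contour integral is zero.) The vanishing $\tr_0(\widetilde Q_{k,V})=0$ then costs nothing: telescoping gives $\tr_0(\widetilde Q_{k,V})=\tr_0(\widetilde Q_{k+1,V})$, constant in $k$, and $\widetilde Q_{1,V}=Q_V$ fixes the constant to $\tr_0(Q_V)=0$.

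Finally, for $k\ge6$ the trace-class property comes from a clean Schatten count. Splitting $V=V_2+V_6$ with $V_2\in L^2(\R^3)$, $V_6\in L^6(\R^3)$, the inequality \eqref{KSS} together with $(1+|\cdot|^2)^{-1}\in L^2(\R^3)\cap L^6(\R^3)$ gives $(1-\Delta)^{-1}V_2\in\gS_2$ and $(1-\Delta)^{-1}V_6\in\gS_6$; absorbing the bounded factors $B(z)$ and $\tilde R(1-\Delta)$, each pairing of a potential with a resolvent contributes a $\gS_6$ factor, uniformly on $\curv$. By Hölder for Schatten norms the product of $k$ such factors lies in $\gS_{6/k}\subset\gS_1$ as soon as $k\ge6$; multiplying by the remaining bounded resolvent and integrating over $\curv$ shows $Q_{k,V},\widetilde Q_{k,V}\in\gS_1$. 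Since $\tr_0=\tr$ on $\cQ\cap\gS_1$, the already established $\tr_0(\widetilde Q_{k,V})=0$ immediately yields $\tr(\widetilde Q_{k,V})=0$ (and likewise $\tr(Q_{k,V})=0$).
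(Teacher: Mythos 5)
Your proposal is correct and follows essentially the same route as the paper: the iterated resolvent identity for \eqref{eq:Dyson_Q}, the Kato--Seiler--Simon inequality together with the commutator estimate of Lemma~\ref{lem:technical}(4) and insertions of $\gamma^0_{\rm per}+(\gamma^0_{\rm per})^\perp=1$ for the $\cQ$-membership of the $Q_{k,V}$, the scaling $V\mapsto tV$ plus a polynomial-vanishing argument for $\tr_0(Q_{k,V})=0$, and a Schatten--H\"older count giving $\gS_{6/k}\subset\gS_1$ for $k\ge 6$. The only (harmless) deviations are that you get $\tr_0(\widetilde Q_{k,V})=0$ by telescoping from $\widetilde Q_{1,V}=Q_V$ rather than directly from the remainder bound, and that the membership $\widetilde Q_{k,V}\in\cQ$ should indeed be deduced---as the paper does---from $Q_V\in\cQ$ (Lemma~\ref{prop:Bogoliubov}) together with the telescoping identity, since a direct contour estimate on $(z-H^0_{\rm per}-V)^{-1}$ does not lend itself to the residue/commutator tricks (the perturbed resolvent does not commute with $\gamma^0_{\rm per}$).
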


Note that by linearity, the operators $Q_{k,V}$ are well-defined for all
$V \in L^2(\R^3) + \cC'$, and not only for small $V$'s. It can in fact
be shown using the same arguments as in the proof of
Lemma~\ref{prop:Dyson} that for all $k \ge 1$,
$$
(V_1,\cdots,V_k) \mapsto  \frac{1}{2i\pi} \oint_\curv
\left( z-H^0_{\rm per} \right)^{-1} V_1 \left(z-H^0_{\rm
      per}\right)^{-1} \cdots  V_k \left(z-H^0_{\rm
      per}\right)^{-1} \, dz
$$
is a continuous $k$-linear application from $(L^2(\R^3) + \cC')^k$ to
$\cQ$. 

\medskip

Let us now detail the 
\begin{proof}[Proof of Lemma~\ref{prop:Dyson}]
It follows from the proof of Lemma~\ref{prop:Bogoliubov} that
each term of the expansion (\ref{eq:Dyson_Q}) makes sense in the space
of bounded 
operators on $L^2(\R^3)$. We now have to prove that $Q_{k,V}$ and
$\widetilde Q_{k,V}$ are in $\cQ$ and that their generalized trace is
equal to zero.
We start by noticing that $Q_V$ is indeed a minimizer for the functional
$$E(Q):=\tr(|H^0_{\rm per}-\epsilon_{\rm F}|(Q^{++}-Q^{--})-\int_{\R^3}V\rho_Q$$
on $\cK$. Theorem \ref{thm:defaut} with the nonlinear term erased then implies that $Q_V\in\cQ$.

Let us consider $Q_{1,V}$. Decomposing $V$ as $V =
V_2 + V'$ with $V_2 \in L^2(\R^3)$ and $V'
\in \cC' \subset L^6(\R^3)$, and using the Kato-Seiler-Simon inequality \eqref{KSS} and
the first assertion of Lemma~\ref{lem:technical}, 
$Q_{1,V_2} \in \gS_2$ and $Q_{1,V'} \in \gS_6$. Hence, $Q_{1,V}$ is
well-defined in $\gS_6$. A straightforward application of the
residuum formula then shows that $Q_{1,V}^{++}=Q_{1,V}^{--}=0$. As
\begin{eqnarray*}
Q_{1,V'}^{-+} & = & 
\frac1{2i\pi}\oint_\curv \gamma^0_{\rm per} (z-H^0_{\rm
  per})^{-1}V'(z-H^0_{\rm per})^{-1} (\gamma^0_{\rm per})^\perp dz
 \\
& = & \frac1{2i\pi}\oint_\curv  \gamma^0_{\rm per}  (z-H^0_{\rm per})^{-1}
[\gamma^0_{\rm per},V'] (z-H^0_{\rm per})^{-1} (\gamma^0_{\rm per})^\perp dz, 
\end{eqnarray*}
we can make use of Lemma~\ref{lem:technical} to
conclude that $Q_{1,V'}^{-+} \in \gS_2$. Obviously, the same holds true  
for $Q_{1,V'}^{+-}$, so that $Q_{1,V'}$, and henceforth $Q_{1,V}$, are in
$\gS_2$. As $|\nabla|(z-H^0_{\rm per})^{-1}$ is a bounded operator,
uniformly in $z \in \curv$, it is easy to check that $|\nabla|Q_{1,V_2}$
and $|\nabla|Q_{1,V'}$ both are Hilbert-Schmidt operators. Finally,
$|\nabla| Q_{1,V} \in \gS_2$ and therefore $Q_{1,V} \in \cQ$. As
$Q_{1,V}^{++}=Q_{1,V}^{--}=0$, we obviously get $\tr_0(Q_{1,V})=0$.

Let us now consider $Q_{k,V}$ for $k \ge 2$. The potential
$V$ being in $L^2(\R^3) + L^6(\R^3)$, we have 
\begin{itemize}
\item $Q_{2,V} \in \gS_3$ and $|\nabla|Q_{2,V} \in  \gS_3$;
\item $Q_{k,V} \in \gS_2$ and $|\nabla|Q_{k,V} \in \gS_2$ for all $k \ge
3$.
\end{itemize}
As usual 
\cite{HaiLewSer-05a,CanDelLew-08a}, the next step consists in
introducing $\gamma^0_{\rm per}+(\gamma^0_{\rm per})^\perp=1$ in
(\ref{eq:QkV}) in places where $(H^0_{\rm per}-z)^{-1}$ appears, and in
expanding everything. We will use the notation
$$
Q_{2,V}^{--+}:=-\frac1{2i\pi}\oint_\curv\frac{\gamma^0_{\rm per}}{H^0_{\rm
    per}-z}V\frac{\gamma^0_{\rm per}}{H^0_{\rm
    per}-z}V\frac{(\gamma^0_{\rm per})^\perp}{H^0_{\rm per}-z}dz,
$$
and similar definitions for all the other terms. A simple application of
the residuum formula tells us that  
$Q_{2,V}^{+++}=Q_{2,V}^{---}=0$. Therefore, $Q_{2,V}^{--} =
Q_{2,V}^{-+-}$ and  $Q_{2,V}^{++} = Q_{2,V}^{+-+}$.
Now we remark that the terms $Q_{2,V}^{-+-}$ and $Q_{2,V}^{+-+}$ involve two
terms of the form $\gamma^0_{\rm per}V(\gamma^0_{\rm
  per})^\perp=[\gamma^0_{\rm per},V](\gamma^0_{\rm per})^\perp$ (or its
adjoint) in their formula. Using Lemma~\ref{lem:technical}, we obtain
that $Q_{2,V}^{--}$, $Q_{2,V}^{++}$, $|\nabla|Q_{2,V}^{--}|\nabla|$  and
$|\nabla|Q_{2,V}^{++}|\nabla|$ are trace-class operators. Likewise, 
$Q_{k,V}^{--}$, $Q_{k,V}^{++}$, $|\nabla|Q_{k,V}^{--}|\nabla|$  and
$|\nabla|Q_{k,V}^{++}|\nabla|$ are trace-class operators.
Lastly, $Q_{2,V}^{-+}=Q_{2,V}^{--+}+Q_{2,V}^{-++}$, both operators of
the right-hand side involving one term of the form $\gamma^0_{\rm
  per}V(\gamma^0_{\rm per})^\perp=[\gamma^0_{\rm per},V](\gamma^0_{\rm
  per})^\perp$. Consequently $Q_{2,V}^{-+}$ and $|\nabla|Q_{2,V}^{-+}$
are Hilbert-Schmidt. Repeating the same argument for $Q_{2,V}^{+-}$, we
obtain that $Q_{2,V}$ and $|\nabla|Q_{2,V}$ are
Hilbert-Schmidt. Therefore, all the operators $Q_{k,V}$ are in $\cQ$. As
$Q_V$ also is in $\cQ$,
 $\widetilde Q_{k,V} \in \cQ$ for all $k \ge
3$. It also follows from the first assertion of
Lemma~\ref{lem:technical} and the Kato-Seiler-Simon inequality that 
$\widetilde Q_{k,V}$ is trace-class for $k \ge 6$.

Using (\ref{eq:commutator1}) and the Kato-Seiler-Simon inequality \eqref{KSS},
we then easily obtain that for all $k \ge 1$, there exists a constant
$C_k \in \R_+$ such that
$$
\forall V \in L^2(\R^3)+\cC', \quad \|Q_{k,V}\|_\cQ \le C_k \|V\|_{L^2+\cC'}^k
$$
and that for all $k \ge 2$, there exists a constant $\widetilde C_k \in
\R_+$ such that
$$
\forall V \in L^2(\R^3)+\cC' \mbox{ s.t. } \|V\|_{L^2+\cC'} < \alpha,
\quad \|\widetilde Q_{k,V}\|_\cQ \le \widetilde C_k \|V\|_{L^2+\cC'}^k.
$$
Let $V \in L^2(\R^3)+\cC'$ such that $\|V\|_{L^2+\cC'} < \alpha$.
For all $t \in [0,1]$, $\|tV\|_{L^2+\cC'} < \alpha$ and 
$$
Q_{tV} = Q_{1,tV} + \cdots + Q_{K,tV} + \widetilde Q_{K,tV} = 
t Q_{1,V} + \cdots + t^K Q_{K,tV} + \widetilde Q_{K+1,tV}.
$$
As we know that $\tr_0(Q_{tV}) = 0$, we obtain that for all $t \in [0,1]$,
$$
0 = t \tr_0(Q_{1,V}) + \cdots + t^K \tr_0(Q_{K,tV}) + \tr_0(\widetilde
Q_{K+1,tV}) 
$$
with $|\tr_0(\widetilde Q_{K+1,tV})| \le \|\widetilde Q_{K+1,tV}\|_\cQ
\le \widetilde C_{K+1} t^{K+1} \|V\|_{L^2+\cC'}^{K+1}$. Hence,
$\tr_0(Q_{k,V})=0$ for all $k \ge 1$ and $\tr_0(\widetilde Q_{k,V})=0$
for all $k \ge 2$. 
\end{proof}

\subsection{Proof of Proposition~\ref{prop:chi0}}\label{sec:proof_chi0}
Let $V \in L^2(\R^3)+\cC'$. We already know from Lemma~\ref{prop:Dyson} that 
$Q_{1,V} \in \cQ$ and that $\tr_0(Q_{1,V})=0$.
Decomposing $V$ as $V =
V_2 + V'$ with $V_2 \in L^2(\R^3)$ and $V'
\in \cC'$, and proceeding as in Section~\ref{sec:expansion_Q_V}, we
obtain 
$$\| Q_{1,V_2} \|_{\cQ}  \le  C \|V_2\|_{L^2}, $$
$$\| Q_{1,V'} \|_{\cQ}  \le  C \|V'\|_{\cC'} .$$
We infer that $\chi_0$ is a continuous linear application from
from $L^2(\R^3) + \cC'$ to $L^2(\R^3) \cap \cC$.

Let us now examine the case when $V \in L^1(\R^3)$. Using again the
Kato-Seiler-Simon inequality and 
the first assertion of Lemma~\ref{lem:technical}, we obtain
$Q_{1,V} \in \gS_1$ and 
$$
\| Q_{1,V} \|_{\gS_1} \le  C \|V\|_{L^1}.
$$
Consequently, $\chi_0$ defines a continuous linear application from
$L^1(\R^3)$ to $L^1(\R^3)$.
As from the residuum formula, $Q_{1,V}^{++}=Q_{1,V}^{--}=0$, we get
$\tr(Q_{1,V})=0$. 
 \qed

\subsection{Proof of Proposition~\ref{lem:limL}}
As $\rho \in L^1(\R^3) \subset v_{\rm c}^{-1}(L^2(\R^3)) + \cC$, we have
$\cL(\rho)\in L^2(\R^3) \cap \cC$. 
The operator $\cL$ can be explicitely calculated in Bloch transform. 
We start from the Bloch-Floquet decomposition of $H^0_{\rm per}$: for $f
\in H^2(\R^3)$,
$$
(H^0_{\rm per}f)(x) = \fint_{\Gamma^\ast} ((H^0_{\rm per})_q f_q) \,
e^{iq\cdot x} \, dq 
$$
where (see Eq.~\eqref{eq:dec_HOperq})
$$
(H^0_{\rm per})_q = \sum_{n=1}^{+\infty} \epsilon_{n,q}
|u_{n,q}\rangle\langle u_{n,q}|.
$$
Note that by time-reversal symmetry, 
$$
u_{n,-q}=\overline{u_{n,q}},\qquad \epsilon_{n,-q}=\epsilon_{n,q}.
$$
Denoting by $V = v_{\rm c}(\rho)$, we can write the Bloch matrix of the
operator $Q_{1,V}$ as: 
\begin{equation}
[Q_{1,V}]_{qq'}=\frac1{2i\pi}\oint_\curv(z-(H^0_{\rm
  per})_q)^{-1}V_{q-q'}(z-(H^0_{\rm per})_{q'})^{-1} \, dz. 
\label{matrix_Q_1}
\end{equation}
Inserting the spectral decomposition of $H^0_{\rm per}$ in \eqref{matrix_Q_1}, we obtain
\begin{multline}
[Q_{1,V}]_{qq'} = -
\sum_{n=1}^N\sum_{n'=N+1}^{+\ii}\bigg(\frac1{\epsilon_{n',q'}-\epsilon_{n,q}}
\pscal{u_{n,q},V_{q-q'}u_{n',q'}}_{L^2_{\rm
    per}(\Gamma)}|u_{n,q}\rangle\langle u_{n',q'}|\\
+\frac1{\epsilon_{n',q}-\epsilon_{n,q'}}\pscal{u_{n',q},V_{q-q'}u_{n,q'}}_{L^2_{\rm
    per}(\Gamma)}|u_{n',q}\rangle\langle u_{n,q'}|\bigg).\label{series_first}
\end{multline}
\begin{remark}
In the following we will write series of the form \eqref{series_first}
and we will invert sums and integrals without giving any
justification. To see that such a series is absolutely convergent, one
can use the fact that there exists $a$ and $b$ in $\R_+$ such that
for all $n \ge 1$ and $q \in \Gamma^\ast$,
$$
\epsilon_{n,q}\geq an^{2/3}-b.
$$
This bound is easily obtained by comparison with the eigenvalues of the
periodic Laplacian. It follows that there exists $C \in \RR_+$ such that 
$$
|\pscal{u_{n,q},V_{q-q'}u_{n',q'}}_{L^2_{\rm per}(\Gamma)}|\leq
\frac{C}{(n')^{2/3}}
$$ 
for all $1 \le n \le N$, all $n' \ge N+1$ and all $q,q' \in \Gamma^\ast$.
\end{remark}

If an operator $A \in \cQ$ has a Bloch matrix $A_{qq'}$, then we have
\begin{equation}
(\rho_A)_q(x)=\fint_{\Gamma^\ast} A_{q',q'-q}(x,x)\, dq'.
\label{rho_Bloch} 
\end{equation}
This formula is obtained by writing, for any real-valued function $f \in
L^2(\R^3)$, 
\begin{eqnarray*}
\fint_{\Gamma^\ast}\pscal{(\rho_A)_q,f_q}_{L^2_{\rm per}(\Gamma)}\,dq & =
& \int_{\R^3}\rho_A(x)f(x)\,dx\\ 
 & = & \tr(Af)= \fint_{\Gamma^\ast}\tr_{{L^2_{\rm per}(\Gamma)}}\big((Af)_{q',q'}\big)\, dq' \\
 & = & \fint_{\Gamma^\ast}dq' \fint_{\Gamma^\ast}dq''\; \tr_{L^2_{\rm per}(\Gamma)}\big(A_{q',q''}f_{q''-q'}\big)\\
 & = & \int_{\Gamma}dx\fint_{\Gamma^\ast}dq'\fint_{\Gamma^\ast}dq\; \overline{A_{q'+q,q'}(x,x)}{f_{q}(x)}.
\end{eqnarray*}
We deduce that 
\begin{multline} \label{eq:FB_L}
[\cL(\rho)]_{q}(x)=\fint_{\Gamma^\ast}dq'\sum_{n=1}^N\sum_{n'=N+1}^{+\ii} \\
\bigg(\frac1{\epsilon_{n',q'-q}-\epsilon_{n,q'}}\pscal{u_{n,q'},(\rho\star|\cdot|^{-1})_{q}
  u_{n',q'-q}}_{L^2_{\rm per}(\Gamma)}u_{n,q'}(x)\overline{u_{n',q'-q}(x)}\\ 
+\frac1{\epsilon_{n',q'}-\epsilon_{n,q'-q}}\pscal{u_{n',q'},(\rho\star|\cdot|^{-1})_{q}u_{n,q'-q}}_{L^2_{\rm per}(\Gamma)}u_{n',q'}(x)\overline{u_{n,q'-q}}(x)\bigg).
\end{multline}
The next step consists in decomposing the operator $\cL$ as the sum of a
singular part and a regular part, corresponding respectively to the
low and high Fourier modes of the Coulombic interaction kernel
$|\cdot|^{-1}$. More precisely,
we choose some smooth function $\xi$ which equals $1$ in a small
neighborhood $B(0,\delta)$ of 0 and 0 outside the ball $B(0,2\delta)$, with
$\delta >0$ such that $B(0,2\delta)\subset \Gamma^\ast$. Then we define 
$$
\cL_s(\rho):=\cL\left(\cF^{-1}\xi\widehat{\rho}\right),\qquad
\cL_r(\rho):=\cL\left(\cF^{-1}(1-\xi)\widehat{\rho}\right)
$$
where $\cF^{-1}$ is the inverse Fourier tranform. Similarly we define
$$
v_s:=\sqrt{\frac 2\pi}\cF^{-1}(\xi(\cdot)|\cdot|^{-2}),\qquad
v_r:=\sqrt{\frac 2\pi}\cF^{-1}((1-\xi(\cdot))|\cdot|^{-2})
$$
and note that $v_r\in L^1(\R^3)$. This being said, we have by the
Kato-Seiler-Simon inequality \eqref{KSS} that  
$$
Q_1^r:=\frac1{2i\pi}\oint_\curv(z-H^0_{\rm per})^{-1} (\rho\star v_r)
(z-H^0_{\rm per})^{-1}dz\in\gS_1,
$$
hence $\cL_r(\rho)\in L^1(\R^3)$ and $\int_{\R^3}\cL_r(\rho)=0$ by Proposition \ref{prop:chi0}. Consequently, 
$\widehat{\cL_r(\rho)} \in C^0(\RR^3)$ and $\widehat{\cL_r(\rho)}(0)=0$.

Let us now deal with the singular part of $\cL(\rho)$. 
Using the definition \eqref{def_Bloch_Floquet} of the Bloch-Floquet
transform, we obtain that
$$
(|\cdot|^{-1})_q(x)= \frac{4\pi}{|\Gamma|}
\sum_{K\in\cR^\ast}\frac{e^{iK\cdot x}}{|q+K|^2}
$$
and that for almost all $q \in \Gamma^\ast$,
$$
\left(\cF^{-1}(\xi\widehat{\rho})\right)_q(x)
=\frac{(2\pi)^{\frac 32}}{|\Gamma|}\widehat{\rho}(q)\xi(q).
$$
This implies that  for almost all $q \in \Gamma^\ast$,
\begin{equation}
(\rho \star v_s)_q(x)= 4\pi \frac{(2\pi)^{\frac 32}}{|\Gamma|}
\frac{\xi(q)\widehat{\rho}(q)}{|q|^2}. 
\label{formula_rho_vs} 
\end{equation}
Therefore we get for almost all $q \in \Gamma^\ast$,
\begin{equation}
\cL_s(\rho)_q(x)= (2\pi)^{\frac 32} \frac{B_q(x)}{|q|^2}\widehat{\rho}(q)
\label{def:L_singular} 
\end{equation}
where
\begin{multline}
 B_q(x):= \frac{4\pi}{|\Gamma|} 
 \xi(q)\fint_{\Gamma^\ast}d{q'}\sum_{n=1}^N\sum_{n'=N+1}^{+\ii} \\
\bigg(\frac1{\epsilon_{n',{q'}-q}-\epsilon_{n,{q'}}}\pscal{u_{n,{q'}},u_{n',{q'}-q}}_{L^2_{\rm per}(\Gamma)}u_{n,{q'}}(x)\overline{u_{n',{q'}-q}(x)}\\
+\frac1{\epsilon_{n',{q'}}-\epsilon_{n,{q'}-q}}\pscal{u_{n',{q'}},u_{n,{q'}-q}}_{L^2_{\rm per}(\Gamma)}u_{n',{q'}}(x)\overline{u_{n,{q'}-q}(x)}\bigg).
\label{def_B}
\end{multline}
It follows that almost everywhere in $\Gamma^\ast$,
\begin{eqnarray*}
\widehat{\cL_s(\rho)}(q) & = & (2\pi)^{-\frac 32} 
\int_{\Gamma} \cL_s(\rho)_q(x)  dx \\
& = & \frac{{\mathcal B}(q)}{|q|^2}\widehat{\rho}(q)
\end{eqnarray*}
where
\begin{equation}
{\mathcal B}(q) = \frac{8\pi}{|\Gamma|}  \xi(q)
\sum_{n=1}^N\sum_{n'=N+1}^{+\ii} \fint_{\Gamma^\ast}d{q'}
\frac{\left|\pscal{u_{n,{q'}},u_{n',{q'}-q}}_{L^2_{\rm
        per}(\Gamma)}\right|^2}{\epsilon_{n',{q'-q}}-\epsilon_{n,{q'}}}. 
\label{def_cB_q}
\end{equation}
We now remark that the above formula may be written
$$
{\mathcal B}(q) = -\frac{8\pi}{|\Gamma|}  \xi(q)
\tr_{L^2_{\rm per}}\left[ \oint_\curv dz\fint_{\Gamma^\ast}d{q'}\frac{\left(\gamma^0_{\rm per}\right)_{q'}}{z-(H^0_{\rm per})_{q'}}\;\frac{\left(\gamma^0_{\rm per}\right)_{q'-q}^\perp}{z-(H^0_{\rm per})_{q'-q}}\right]. 
$$
We recall \cite{Panati-07} that $q\mapsto\left(\gamma^0_{\rm
    per}\right)_q$ is a smooth periodic function and that
$\left(\gamma^0_{\rm per}\right)_q$ is for all $q$ a rank-$N$ orthogonal
projector. It is then easy to deduce that $q\mapsto\cB(q)$ is a
continuous periodic function on $\R^3$. 
Consequently, $\widehat{\cL_s(\rho)}$ and therefore
$\widehat{\cL(\rho)}$ are continuous on
$\Gamma^\ast \setminus \left\{0\right\}$. Using similar arguments, one
obtains that $\widetilde{\cL(\rho)}$ is continuous on $\R^3 \setminus
\cR^\ast$. 
In order to study the limit of $\widehat{\cL_s(\rho)}(q)$ when $q$ goes
to zero, we use the relation
\begin{multline}
\left[ \left( \epsilon_{n',q'-q}-\frac{|q'-q|^2}2 \right) - 
\left( \epsilon_{n,q'}-\frac{|q'|^2}2 \right) \right] \langle
u_{n,q'},u_{n'q'-q}\rangle_{L^2_{\rm per}(\Gamma)}  \\
= -
\langle i q \cdot \nabla u_{n,q'},u_{n',q'-q} \rangle_{L^2_{\rm per}(\Gamma)}
\label{pscal_u_n_q}
\end{multline}
to rewrite $\widehat{\cL_s(\rho)}(\eta \sigma)$ for $\sigma \in S^2$ and
$\eta>0$ small enough as 
$$
\widehat{\cL_s(\rho)}(\eta \sigma) = L_\eta(\sigma) 
\widehat{\rho}(\eta \sigma)
$$
where
$$
L_\eta(\sigma) =  \frac{8\pi}{|\Gamma|} 
\sum_{n=1}^N\sum_{n'=N+1}^{+\ii} \fint_{\Gamma^\ast}d{q'}
\frac{\left|
\langle \sigma \cdot \nabla u_{n,q'},u_{n',q'-\eta \sigma} 
\rangle_{L^2_{\rm per}(\Gamma)}\right|^2}
{(\epsilon_{n',{q'-\eta\sigma}}-\epsilon_{n,{q'}})
(\epsilon_{n',{q'-\eta\sigma}}-\epsilon_{n,{q'}}+\eta q'\cdot
\sigma-\frac{\eta^2}2)^2} 
$$
(recall that $\xi \equiv 1$ in the vicinity of $0$). 
Again the above formula may be rewritten as
\begin{multline*}
L_\eta(\sigma) =  \frac{8\pi}{|\Gamma|} 
\sum_{n=1}^N \fint_{\Gamma^\ast}d{q'}\times\\
\pscal{\frac{\left(\gamma^0_{\rm per}\right)^\perp_{q'-\eta\sigma}}
{\left((H^0_{\rm per})_{q'-\eta\sigma}-\epsilon_{n,{q'}}\right)
\left((H^0_{\rm per})_{q'-\eta\sigma}-\epsilon_{n,{q'}}+\eta q'\cdot
\sigma-\frac{\eta^2}2\right)^2}\sigma \cdot \nabla u_{n,q'},\sigma \cdot \nabla u_{n,q'}}_{L^2_{\rm per}(\Gamma)} 
\end{multline*}
which shows that when $\eta$ goes
to zero, $L_\eta(\sigma)$ converges to $\sigma^TL\sigma$ while
$\widehat{\rho}(\eta \sigma)$ converges to $\widehat{\rho}(0)$. 

\medskip

We now turn to the proof that $L_0>0$. We note first that
$$L_0= \frac13 \frac{8\pi}{|\Gamma|} 
\sum_{n=1}^N\sum_{n'=N+1}^{+\ii} \fint_{\Gamma^\ast}d{q'}
\frac{\left|
\langle \nabla u_{n,q'},u_{n',q'} 
\rangle_{L^2_{\rm per}(\Gamma)}\right|^2}
{(\epsilon_{n',{q'}}-\epsilon_{n,{q'}})^3}$$
hence
$L_0=0$ would imply $\langle \nabla u_{n,q},u_{n',q} \rangle_{L^2_{\rm per}(\Gamma)}=0$ for all $q\in\Gamma^\ast$, all $n\leq N$ and all $n'\geq N+1$. 
Hence for $i=1,2,3$, $\partial_{x_i}$ would stabilize the space $X_q$ spanned by $(u_{1,q},...,u_{N,q})$ for any $q$.
Next we differentiate the eigenvalue equation for $u_{n,q}$ and get
$$\left(-\frac{\Delta}{2}-iq\cdot\nabla+\frac{|q|^2}{2}+V_{\rm per}-\epsilon_{n,q}\right)\partial_{x_i}u_{n,q}+(\partial_{x_i}V_{\rm per})u_{n,q}=0.$$
From this we deduce that $(\partial_{x_i}V_{\rm per})$ would also
stabilize $X_q$. This means that we would have
$$
\forall x\in\Gamma,\qquad (\partial_{x_i}V_{\rm per})(x)\left(\begin{array}{c}u_{1,q}(x)\\ \vdots\\ u_{N,q}(x)\\ \end{array}\right)
=A_q\left(\begin{array}{c}u_{1,q}(x)\\ \vdots\\ u_{N,q}(x)\\
  \end{array}\right)
$$
for some matrix $A_q$ depending only on $q$. As $u_{1,0}(x)>0$ for all
$x$ (it is the first eigenfunction of a Schrödinger operator), we deduce
that $(\partial_{x_i}V_{\rm per})(x)$ would be for any $x\in\Gamma$ an
eigenvalue of $A_0$. By continuity and periodicity we infer that $V_{\rm
  per}$ would be constant. This is in contradiction with the assumption
that the host crystal is an insulator or a semiconductor.
\qed

\subsection{Proof of Proposition~\ref{lem:defL}}
The proof of Proposition \ref{prop:chi0} shows that $\cL$
defines a bounded linear operator on $\cC$. Besides, for all $\rho_1$
and $\rho_2$ in $\cC$, 
\begin{eqnarray*}
\langle \cL(\rho_1),\rho_2 \rangle & = & \int_{\R^3} [\cL(\rho_1)](x) \,
[v_{\rm c}(\rho_2)](x) \, dx \\
& = & - \frac1{2i\pi}\oint_\curv \tr \left( (z-H^0_{\rm
  per})^{-1} v_{\rm c}(\rho_1) (z-H^0_{\rm per})^{-1} v_{\rm c}(\rho_2)
\right) \, dz \\
& = & \langle \cL(\rho_2),\rho_1 \rangle = \langle \rho_1, \cL(\rho_2) \rangle.
\end{eqnarray*}
Therefore, $\cL$ is self-adjoint on $\cC$.
Lastly, denoting by $V = v_{\rm c}(\rho)$, we have for all $\rho \in \cC$,
\begin{align*}
\pscal{\cL(\rho),\rho}_{\cC}&=\int_{\R^3}\cL(\rho)(x)V(x)\,dx=\fint_{\Gamma^\ast}\int_{\Gamma}
\overline{\cL(\rho)_q(x)} \, V_q(x)\,dx\,dq\\ 
&=\sum_{n=1}^N\sum_{n'=N+1}^{+\ii}\fint_{\Gamma^\ast}dq\fint_{\Gamma^\ast}d{q'}\Bigg(\frac{\big|\pscal{u_{n,{q'}},V_{q}u_{n',{q'}-q}}_{L^2_{\rm
      per}(\Gamma)}\big|^2}{\epsilon_{n',{q'}-q}-\epsilon_{n,{q'}}}\\
&\qquad\qquad\qquad\qquad\qquad+\frac{\left|\pscal{u_{n',{q'}},V_{q}u_{n,{q'}-q}}_{L^2_{\rm per}(\Gamma)}\right|^2}{\epsilon_{n',{q'}}-\epsilon_{n,{q'}-q}}\Bigg)\geq0.
\end{align*}
We conclude that $\cL$ is a bounded positive self-adjoint operator on
$\cC$. Consequently, $1+\cL \, : \, \cC \rightarrow \cC$ is invertible.
 \qed

\subsection{Expanding the density to higher orders}\label{sec:expansion_density_higher_orders}
In the previous sections, we have studied the first order density
$-\cL(\rho)$. For the proof of our Theorem \ref{thm:notL1} on the
reduced-Hartree-Fock model, we need to consider the higher order terms.
Each of the operators $Q_{k,V}$ (for $k \ge 1$) and $\widetilde
Q_{k,V}$ (for $k \ge 2$) defined in Lemma \ref{prop:Bogoliubov} being in $\cQ$, the expansion
(\ref{eq:Dyson_Q}) can be rewritten in terms of the associated
densities, yielding the following equation in $L^2(\R^3) \cap \cC$:
\begin{equation} \label{eq:Dyson_rho}
\rho_{Q_V} =   \rho_{Q_{1,V}} + \cdots + \rho_{Q_{K,V}} +
\rho_{\widetilde Q_{K+1,V}}.  
\end{equation}
We introduce the quadratic operator $r_2$  defined by
$$
r_2(\rho) = \rho_{Q_{2,v_{\rm c}(\rho)}},
$$
which is continuous from $v_{\rm c}^{-1}(L^2(\R^3))+\cC$ to $L^2(\R^3)
\cap \cC$, and the nonlinear map $\widetilde r_3$ from
$v_{\rm c}^{-1}(B_\alpha)$ to $L^2(\R^3) \cap \cC$ ($B_\alpha$ denoting
the ball of $L^2(\R^3)+\cC'$ of radius $\alpha$), defined by
$$
\widetilde r_3(\rho) = \rho_{\widetilde Q_{3,v_{\rm c}(\rho)}}.
$$
We obtain
\begin{equation}
r(\rho):=\rho_{Q_{v_{\rm c}(\rho)}}  = - \cL(\rho) + r_2(\rho)  + 
\widetilde r_3(\rho).\label{eq:Dyson_rho_2} 
\end{equation}
The next lemma is concerned with the second and third order terms of the
expansion (\ref{eq:Dyson_rho_2}). We will assume
that $\rho \in L^1(\R^3) \cap L^2(\R^3)$. Then $\rho \in L^{\frac
  65}(\R^3) \subset \cC$, so that $v_{\rm c}(\rho) \in \cC'$ and
$\|v_{\rm c}(\rho)\|_{\cC'} \le C 
\|\rho\|_{L^1 \cap L^2}$, where $C$ is a universal constant. In
particular, there exists a constant $\gamma > 0$, such that 
$$
\| \rho\|_{L^1 \cap L^2} \le \gamma \quad \Rightarrow \quad \|v_{\rm
  c}(\rho)\|_{L^2 + \cC'} \le \|v_{\rm c}(\rho)\|_{\cC'} < \alpha. 
$$ 

\begin{lemma}[Nonlinear terms in the expansion]  \label{lem:order2}
Let $\rho \in L^1(\RR^3) \cap L^2(\R^3)$. Then
\begin{enumerate}
\item $Q_{2}(\rho) \in \gS_p$ for all $p > 1$ and the 
Fourier transform $\widehat{r_2(\rho)}$ of $r_2(\rho)$ is continuous on
$\R^3$ and vanishes at $k=0$~;
\item If in addition, $\|\rho\|_{L^1 \cap L^2} \le \gamma$, then
  $\widetilde Q_3(\rho) \in \gS_1$,  $\widetilde r_3(\rho) \in L^1(\R^3)$
  and 
$$
\tr(\widetilde Q_{3}(\rho)) = \int_{\RR^3} \widetilde r_3(\rho) = 0.
$$ 
\end{enumerate}
\end{lemma}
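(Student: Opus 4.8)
The plan is to handle the two orders separately, in both cases exploiting that the effective potential $V:=v_{\rm c}(\rho)$ is far more integrable than a generic element of $L^2(\R^3)+\cC'$. Indeed, since $\rho\in L^1\cap L^2\subset L^s(\R^3)$ for all $1\le s\le 2$, applying the Hardy--Littlewood--Sobolev inequality to $V=\rho\star|\cdot|^{-1}$ and to $\nabla V=\rho\star\nabla|\cdot|^{-1}$ yields $V\in L^q(\R^3)$ for every $q>3$ and $\nabla V\in L^p(\R^3)$ for every $p\in(3/2,6]$. Combined with the Kato--Seiler--Simon inequality \eqref{KSS}, which gives $\|(1-\Delta)^{-1}W\|_{\gS_p}\le C_p\|W\|_{L^p}$ for $p\ge2$, and with the commutator bound $\|[\gamma^0_{\rm per},W]\|_{\gS_p}\le C\|\nabla W\|_{L^p}$ from Lemma~\ref{lem:technical}(4), these are the only analytic inputs I would need.

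For the Schatten bound in (1), I would expand $Q_2(\rho)=\frac1{2i\pi}\oint_\curv R\,VR\,VR\,dz$ with $R=(z-H^0_{\rm per})^{-1}$ and insert $\gamma^0_{\rm per}+(\gamma^0_{\rm per})^\perp$ at each resolvent exactly as in the proof of Lemma~\ref{prop:Dyson}. The residuum formula annihilates the two blocks with constant sign pattern, so every surviving block contains at least one factor $\gamma^0_{\rm per}V(\gamma^0_{\rm per})^\perp=[\gamma^0_{\rm per},V](\gamma^0_{\rm per})^\perp$. The diagonal blocks $Q_2^{++}$ and $Q_2^{--}$ carry two such commutators, hence are products of two operators in $\gS_{p_0}$ with $p_0>3/2$ and are trace-class. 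The off-diagonal blocks $Q_2^{-+}$ and $Q_2^{+-}$ carry exactly one commutator $[\gamma^0_{\rm per},V]\in\gS_{p_0}$ and one diagonal factor $(1-\Delta)^{-1}V\in\gS_q$ with $q>3$; by the H\"older inequality for Schatten norms they belong to $\gS_r$ with $1/r=1/p_0+1/q$, and since $p_0$ may be taken arbitrarily close to $3/2$ and $q$ arbitrarily close to $3$, this covers every $r>1$. Hence $Q_2(\rho)\in\gS_p$ for all $p>1$.

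The delicate half of (1) is the continuity of $\widehat{r_2(\rho)}$ and its vanishing at $k=0$, which cannot follow from a trace since $Q_2(\rho)$ is in general not trace-class. Here I would split $r_2(\rho)$ according to the four blocks. The trace-class blocks $Q_2^{\pm\pm}$ contribute an $L^1$ density whose integral equals $\tr(Q_2^{++})+\tr(Q_2^{--})=\tr_0(Q_2(\rho))=0$, so their Fourier transform is continuous and vanishes at the origin. For the off-diagonal blocks I would use the Bloch--Floquet representation of the density \eqref{rho_Bloch}, as in the proof of Proposition~\ref{lem:limL}, to write $\widehat{r_2(\rho)}(k)$ as an average over an internal Brillouin momentum of a product of two Coulomb-weighted Bloch vertices built from the $u_{n,q}$, all energy denominators being bounded below by the gap $g$. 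The crucial difference with the first-order term $\cL(\rho)$ is that there a single Coulomb factor sits at the external momentum $k$ and produces the direction-dependent limit $\sigma^TL\sigma\,\widehat\rho(0)$, whereas at second order the two Coulomb singularities sit at internal momenta that are integrated out; the local integrability of $|\cdot|^{-2}$ in $\R^3$, together with the vanishing of the Bloch matrix elements at coincident momenta encoded in \eqref{pscal_u_n_q}, should make the resulting integral absolutely convergent and continuous in $k$ across all of $\R^3$, with value $0$ at $k=0$. Making this continuity argument rigorous up to and including the lattice points $\cR^\ast$, and in particular identifying the value at the origin, is the main obstacle of the whole lemma.

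For (2) the operator is genuinely trace-class, so once that is established the density statement is immediate. I would first rewrite $\widetilde Q_3(\rho)=Q_3(\rho)+Q_4(\rho)+Q_5(\rho)+\widetilde Q_6(\rho)$ by comparing the expansion \eqref{eq:Dyson_Q} for $K=2$ and $K=5$. Taking $q>3$ close to $3$, the Kato--Seiler--Simon inequality puts $Q_4(\rho)\in\gS_{q/4}$ and $Q_5(\rho)\in\gS_{q/5}$, both inside $\gS_r$ for some $r<1$, while $\widetilde Q_6(\rho)\in\gS_1$ by Lemma~\ref{prop:Dyson} (this is precisely where the smallness $\|\rho\|_{L^1\cap L^2}\le\gamma$, guaranteeing $\|V\|_{L^2+\cC'}<\alpha$, enters). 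The only borderline term is $Q_3(\rho)$: its nonvanishing blocks carry at least one commutator, and the worst case has one commutator in $\gS_{p_0}$ ($p_0>3/2$) and two diagonal factors in $\gS_q$ ($q>3$), landing it in $\gS_r$ with $1/r=1/p_0+2/q$, which exceeds $1$ once $p_0$ is near $3/2$ and $q$ is near $3$; thus $Q_3(\rho)\in\gS_1$ as well. Consequently $\widetilde Q_3(\rho)\in\gS_1$, its density $\widetilde r_3(\rho)=\rho_{\widetilde Q_3(\rho)}$ lies in $L^1(\R^3)$, and $\int_{\R^3}\widetilde r_3(\rho)=\tr(\widetilde Q_3(\rho))=\tr_0(\widetilde Q_3(\rho))=0$, the last equality being part of Lemma~\ref{prop:Dyson}.
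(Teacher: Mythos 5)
Your Schatten-class analysis is correct and matches the paper's: the bound $Q_2(\rho)\in\gS_p$ for all $p>1$ follows exactly as you say from the commutator estimate of Lemma~\ref{lem:technical}(4) and the Kato--Seiler--Simon inequality \eqref{KSS}, and your part (2) is complete (your decomposition $\widetilde Q_3=Q_3+Q_4+Q_5+\widetilde Q_6$ differs only cosmetically from the paper's $\widetilde Q_3=Q_3+\widetilde Q_4$, and the chain $\int_{\R^3}\widetilde r_3(\rho)=\tr(\widetilde Q_3)=\tr_0(\widetilde Q_3)=0$ is closed the same way). Handling the diagonal blocks of $r_2(\rho)$ through $\tr_0(Q_{2,V})=0$ is also fine.

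The genuine gap is the one you flag yourself: the continuity of $\widehat{r_2(\rho)}$ and its vanishing at $k=0$ for the non-trace-class part --- the heart of part (1) --- is left as a heuristic, and the heuristic as stated cannot close. You invoke (i) local integrability of $|\cdot|^{-2}$ and (ii) the $O(|k|)$ vanishing of the density vertex $\pscal{u_{m',q'-k},u_{n,q'}}$ coming from \eqref{pscal_u_n_q}. Power counting with only these two inputs gives, for the dangerous term in which both Coulomb lines carry low Fourier modes,
$$
|k|\int_{|r|\le\delta}\frac{dr}{|r|^2\,|r+k|^2}\;\sim\;|k|\cdot\frac{C}{|k|}\;=\;O(1),
$$
i.e.\ boundedness but no decay at the origin; this is exactly the borderline counting which at first order produces the finite, nonzero, direction-dependent limit $(\sigma^TL\sigma)\,\widehat{\rho}(0)$, so nothing in your sketch distinguishes second order from first. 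What makes the second order strictly better in the paper is a \emph{second} application of \eqref{pscal_u_n_q}, this time at the occupied-to-unoccupied Coulomb vertex: $\pscal{u_{n,q'},u_{m,r+q'}}=O(|r|)$, with the resulting energy denominators bounded below by $c(m^{2/3}+1)$ uniformly in the band index once the cutoff $\xi$ has small enough support. This softens one internal singularity from $|r|^{-2}$ to $|r|^{-1}$ and yields the key estimate
$$
\big|\widehat{\rho_{R_2^{-++}}}(k)\big|\le C\,|k|\int_{\R^3}\frac{\xi(-r)}{|r|\,|r+k|^2}\,dr\le C\,|k|\log\frac1{|k|}\;\longrightarrow\;0.
$$
(The unoccupied-to-unoccupied vertex provides no such factor, since there is no orthogonality there, so the gain comes from exactly one Coulomb vertex.) Note also that the paper organizes the reduction differently from your block splitting: it first decomposes the Coulomb kernel into low and high Fourier modes, $v_{\rm c}=v_s+v_r$; every term containing at least one regular factor $\rho\star v_r$ is genuinely trace class with zero total trace (by polarization of the identity $\tr_0(Q_{2,\cdot})=0$), and only the purely singular operator $R_2$ requires the Bloch--Floquet estimate above. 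Under your splitting the off-diagonal blocks are never trace class, so the full estimate would still have to be carried out for them; without the vertex-softening step the central claim of the lemma remains unproved.
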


\begin{proof}[Proof of Lemma~\ref{lem:order2}]
As $\rho \in L^1(\R^3) \cap L^2(\R^3)$, we deduce from Young
inequality that $V = \rho \star |\cdot|^{-1}$ is in $L^p(\R^3)$ for $3
< p < \infty$ and that $\nabla V$ is in $(L^q(\R^3))^3$ for all $\frac 32 <
q < 6$. Therefore,  
$$
\left\|[\gamma^0_{\rm per},V]\right\|_{\gS_q}\leq C_q \|\nabla V\|_{L^q}
$$
for all $\frac 32 < q < 6$, by Lemma \ref{lem:technical}. Arguing as in the proof of
Lemma~\ref{prop:Dyson}, we obtain that $Q_{2,V} \in \gS_p$ for all
$p > 1$. 

We now concentrate on the regularity of $\widehat{r_2(\rho)}=\widehat\rho_{Q_{2,V}}$ at the origin. Arguing like in the proof of Proposition \ref{lem:limL}, we only have to study the density associated with the operator
$$R_2:=\frac1{2i\pi}\oint_\curv(z-H^0_{\rm per})^{-1}\left(\rho\star v_s\right)(z-H^0_{\rm per})^{-1}\left(\rho\star v_s\right)(z-H^0_{\rm per})^{-1}dz.$$
We will for simplicity only treat the term $R_2^{-++}$, the other ones being similar. Following the proof of Proposition \ref{lem:limL}, we obtain
\begin{multline}
\widehat{\rho_{R_2^{-++}}}(q)=(2\pi)^{-\frac{3}{2}}\sum_{n=1}^N\sum_{m\geq N+1}\sum_{m'\geq N+1}\fint_{\Gamma^\ast}dq'\fint_{\Gamma^\ast}dr\\
\frac{\pscal{u_{n,q'},(\rho\star v_s)_{q'-r}u_{m,r}}\pscal{u_{m,r},(\rho\star v_s)_{r-q'+q}u_{m',q'-q}}\pscal{u_{m',q'-q},u_{n,q'}}}{(\epsilon_{n,q'}-\epsilon_{m,r})(\epsilon_{n,q'}-\epsilon_{m',q'-q})}.
\end{multline}
Changing $r\leftarrow r-q'$ and using as before \eqref{formula_rho_vs},
we see that for $|q|$ small enough,
\begin{multline}
\widehat{\rho_{R_2^{-++}}}(q)=4\sqrt{2\pi}\sum_{n=1}^N\sum_{m\geq N+1}\sum_{m'\geq N+1}\fint_{\Gamma^\ast}dq'\fint_{\Gamma^\ast}dr\xi(-r) \xi(r+q)\times\\
\times\frac{\widehat{\rho}(-r) \widehat{\rho}(r+q)\pscal{u_{n,q'},u_{m,r+q'}}\pscal{u_{m,r+q'},u_{m',q'-q}}\pscal{u_{m',q'-q},u_{n,q'}}}{|r|^2|r+q|^2(\epsilon_{n,q'}-\epsilon_{m,r-q'})(\epsilon_{n,q'}-\epsilon_{m',q'-q})}.
\end{multline}
Next, using \eqref{pscal_u_n_q}, we obtain
\begin{multline}
|\widehat{\rho_{R_2^{-++}}}(q)|\leq C|q|\sum_{n=1}^N\sum_{m\geq N+1}\sum_{m'\geq N+1} \fint_{\Gamma^\ast}dq'\fint_{\Gamma^\ast}dr\frac{\xi(-r) \xi(r+q)}{|r|\;|r+q|^2}\times\\
\times\frac{|\widehat{\rho}(-r)|\; |\pscal{\nabla u_{n,q'}, u_{m,r+q'}}|}{|\epsilon_{m,r+q'}-\epsilon_{n,q'}-r\cdot(r+2q')/2|}\times\\
\times\frac{|\widehat{\rho}(r+q)|\;  |\pscal{u_{m',q'-q},\nabla u_{n,q'}}|}{(\epsilon_{m,r-q'}-\epsilon_{n,q'})(\epsilon_{m',q'-q}-\epsilon_{n,q'})|\epsilon_{n,q'}-\epsilon_{m',q'-q}+q\cdot(q-2q')/2|}.
\end{multline}
Note that choosing the support of $\xi$ small enough we have
$$|\epsilon_{m,r+q'}-\epsilon_{n,q'}-r\cdot(r+2q')/2|\geq c(m^{2/3}+1)$$
uniformly for $r,q'\in\Gamma^{\ast}$ and $n=1..N$. Similarly, taking $q$ small enough we get
$$|\epsilon_{n,q'}-\epsilon_{m',q'-q}+q\cdot(q-2q')/2|\geq c((m')^{2/3}+1).$$
Using these estimates and the fact that $\rho\in L^1$  we deduce that
$$|\widehat{\rho_{R_2^{-++}}}(q)|\leq C|q|\int_{\R^3}\frac{\xi(-r)}{|r|\;
  |r+q|^2} \, dr \leq C|q|\log\frac1{|q|}$$
and the result follows.

To establish that $\widetilde Q_{3,V}$ is trace-class, and therefore
that $\widetilde r_3(\rho)$ is integrable, we write 
$$
\widetilde Q_{3,V} = Q_{3,V}+\widetilde Q_{4,V}
$$
and proceed as above to prove that both operators in the right hand side
are trace-class. As $\tr_0(\widetilde Q_{3,V})=0$, we readily conclude
that $\tr(\widetilde Q_{3,V})=\int_{\R^3} \widetilde r_3(\rho)= 0$.
\end{proof}

\subsection{Proof of Theorem \ref{thm:notL1}}\label{sec:proof_notL1}
We now have all the material for proving Theorem \ref{thm:notL1}.
The first step is to confirm that if the external potential $ \nu \star |\cdot|^{-1}$
is small, so is the
effective potential $ (\nu -\rho_{\nu,\epsilon_{\rm F}}) \star
|\cdot|^{-1}$, hence the results of Section \ref{sec:expansion_Q_V} can be applied. 

\begin{lemma} \label{prop:small}
There exists $\beta > 0$ such that if
$$
\| \nu \star |\cdot|^{-1}\|_{L^2+\cC'} < \beta,
$$
then 
$$
\| (\nu -\rho_{\nu,\epsilon_{\rm F}}) \star |\cdot|^{-1}\|_{L^2+\cC'}
< \alpha
$$
where $\alpha$ is the constant used in the formulation of
Lemma~\ref{prop:Bogoliubov}. Consequently,  
the solution to (\ref{eq:min_E}) is unique and satisfies
$\tr_0(Q_{\nu,\epsilon_{\rm F}}) = 0$ and
\begin{eqnarray} 
Q_{\nu,\epsilon_{\rm F}} & = & 
1_{(-\infty,\epsilon_{\rm F}]} \left(H^0_{\rm per} + (\rho_{\nu,\epsilon_{\rm
    F}}-\nu) \star |\cdot|^{-1} \right) -  
1_{(-\infty,\epsilon_{\rm F}]} \left(H^0_{\rm per}\right) \nonumber \\
& = & \frac{1}{2i\pi} \oint_\curv \left[
\left( z-H^0_{\rm per} - (\rho_{\nu,\epsilon_{\rm
    F}}-\nu) \star |\cdot|^{-1} \right)^{-1} - \left(z-H^0_{\rm
  per}\right)^{-1} \right] \, dz.
\label{eq:SCFsmall}
\end{eqnarray}
\end{lemma}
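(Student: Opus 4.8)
The plan is to show that the self-consistent (effective) potential
$$
V := (\rho_{\nu,\epsilon_{\rm F}}-\nu)\star|\cdot|^{-1}
= v_{\rm c}(\rho_{\nu,\epsilon_{\rm F}}-\nu)
$$
is small in $L^2(\R^3)+\cC'$ as soon as the external potential $v_{\rm c}(\nu)=\nu\star|\cdot|^{-1}$ is; once this is known, all the stated conclusions are read off from Lemma~\ref{prop:Bogoliubov}. Since $v_{\rm c}$ is an isometry from $\cC$ onto $\cC'$ and $\|\cdot\|_{L^2+\cC'}\le\|\cdot\|_{\cC'}$, the triangle inequality gives
$$
\|V\|_{L^2+\cC'}\le \|v_{\rm c}(\rho_{\nu,\epsilon_{\rm F}})\|_{\cC'}+\|v_{\rm c}(\nu)\|_{L^2+\cC'}
= \|\rho_{\nu,\epsilon_{\rm F}}\|_{\cC}+\|\nu\star|\cdot|^{-1}\|_{L^2+\cC'}.
$$
Thus everything reduces to an a priori bound $\|\rho_{\nu,\epsilon_{\rm F}}\|_{\cC}\le C\,\beta_0$, with $\beta_0:=\|\nu\star|\cdot|^{-1}\|_{L^2+\cC'}$, valid once $\beta_0$ is small.

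To obtain it I would exploit that $Q_{\nu,\epsilon_{\rm F}}$ is a minimizer, so $E_{\nu,\epsilon_{\rm F}}(Q_{\nu,\epsilon_{\rm F}})\le E_{\nu,\epsilon_{\rm F}}(0)=0$. Writing $\rho=\rho_{\nu,\epsilon_{\rm F}}$ and $J=\tr(|H^0_{\rm per}-\epsilon_{\rm F}|(Q^{++}-Q^{--}))\ge0$, and estimating the external term by the duality between $L^2\cap\cC$ and $L^2+\cC'$, this reads
$$
J+\tfrac12\|\rho\|_{\cC}^2 \le \int_{\R^3}\rho\,(\nu\star|\cdot|^{-1})\le \|\rho\|_{L^2\cap\cC}\,\beta_0.
$$
The main obstacle is then to convert the kinetic bound into a bound on the density. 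For this I would invoke \cite[Prop.~1]{CanDelLew-08a} ($\|\rho_Q\|_{L^2}\le C\|Q\|_{\cQ}$) together with \eqref{eq:bounds_H0per}: for $Q\in\cK$ one has $Q^2\le Q^{++}-Q^{--}$, so the $\gS_2$-pieces of $\|Q\|_{\cQ}$ (namely $\|Q\|_{\gS_2}$ and $\||\nabla|Q\|_{\gS_2}$) are controlled by $\sqrt J$ and the $\gS_1$-pieces by $J$, whence $\|\rho\|_{L^2}\le C(\sqrt J+J)$. Setting $N=\|\rho\|_{L^2\cap\cC}$ and feeding $J\le N\beta_0$ and $\|\rho\|_{\cC}\le\sqrt{2N\beta_0}$ into $N\le\|\rho\|_{L^2}+\|\rho\|_{\cC}$ gives
$$
N\le (C+\sqrt2)\sqrt{N\beta_0}+C\,N\beta_0,
$$
so that for $\beta_0$ below an explicit threshold a standard absorption argument yields $N\le C''\beta_0$, and in particular $\|\rho_{\nu,\epsilon_{\rm F}}\|_{\cC}\le C''\beta_0$.

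Combining with the first display, $\|V\|_{L^2+\cC'}\le(C''+1)\beta_0$; choosing $\beta>0$ with $(C''+1)\beta<\alpha$ (and $\beta$ below the threshold used in the absorption step) gives $\|V\|_{L^2+\cC'}<\alpha$. It remains to translate this into the stated conclusions through Lemma~\ref{prop:Bogoliubov}, which yields $\epsilon_{\rm F}\notin\sigma(H^0_{\rm per}+V)$ and $\tr_0(Q_V)=0$. In particular $1_{(-\infty,\epsilon_{\rm F})}(H^0_{\rm per}+V)=1_{(-\infty,\epsilon_{\rm F}]}(H^0_{\rm per}+V)$, so the finite-rank operator $\delta$ of Theorem~\ref{thm:defaut}, whose range lies in the kernel of the self-consistent Hamiltonian $H^0_{\rm per}+V$ at $\epsilon_{\rm F}$ (now trivial), must vanish. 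Hence $Q_{\nu,\epsilon_{\rm F}}=Q_V$; since the minimizing density $\rho_{\nu,\epsilon_{\rm F}}$, and therefore $V$, is unique by Theorem~\ref{thm:defaut}, the minimizer itself is unique, $\tr_0(Q_{\nu,\epsilon_{\rm F}})=\tr_0(Q_V)=0$, and the contour representation \eqref{eq:SCFsmall} is precisely the one established in the proof of Lemma~\ref{prop:Bogoliubov}.
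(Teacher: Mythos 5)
Your proof is correct and takes essentially the same route as the paper's: both compare $E_{\nu,\epsilon_{\rm F}}(Q_{\nu,\epsilon_{\rm F}})\le E_{\nu,\epsilon_{\rm F}}(0)=0$, exploit the coercivity of the energy together with $Q^2\le Q^{++}-Q^{--}$ and the bound $\|\rho_Q\|_{L^2}\le C\|Q\|_{\cQ}$ to make the self-consistent potential small in $L^2(\R^3)+\cC'$, and then conclude via Lemma~\ref{prop:Bogoliubov} (with $\delta=0$ because $\epsilon_{\rm F}\notin\sigma(H^0_{\rm per}+V)$, whence uniqueness, $\tr_0(Q_{\nu,\epsilon_{\rm F}})=0$ and \eqref{eq:SCFsmall}). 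The only difference is quantitative bookkeeping: by retaining the quadratic term $\tfrac12\|\rho\|_{\cC}^2$ and running an absorption argument you obtain a linear bound $\|V\|_{L^2+\cC'}\le C\beta_0$, whereas the paper's cruder estimate gives $c'\delta^{1/2}$ --- both suffice for the choice of $\beta$.
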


\begin{proof}[Proof of Lemma~\ref{prop:small}]
Let $0 < \delta <1$ and $\nu$
such that $\|\nu \star |\cdot|^{-1}\|_{L^2 + \cC'} \le \delta$. This implies that 
$$
\nu \star |\cdot|^{-1} = V_2 + \nu' \star |\cdot|^{-1}
$$
with $V_2 \in L^2(\R^3)$, $\nu' \in \cC$, $\|V_2\|_{L^2} \le \delta$ and
$\|\nu'\|_\cC \le \delta$. We then deduce from (\ref{eq:bounds}) that
$$
0 =  E_{\nu,\epsilon_{\rm F}}(0) \ge E_{\nu,\epsilon_{\rm
    F}}(Q_{\nu,\epsilon_{\rm F}}) \ge \frac{c_1}2
\tr\left((1+|\nabla|^2)(Q^{++}_{\nu,\epsilon_{\rm F}}-
Q^{--}_{\nu,\epsilon_{\rm F}})\right) - C' \delta - \frac{\delta^2}2.
$$
It follows that there exists a constant $c \in \RR_+$ independent of
$\delta$ and $\nu$ such that 
$$
\tr\left((1+|\nabla|^2)(Q^{++}_{\nu,\epsilon_{\rm F}}-
Q^{--}_{\nu,\epsilon_{\rm F}})\right) \le c \delta.
$$
Using again the inequalities $Q^2 \le Q^{++}-Q^{--}$, $Q^{++} \ge 0$,
$Q^{--} \le 0$, we obtain
$$
\| Q_{\nu,\epsilon_{\rm F}} \|_{\cQ} \le 2c \delta^{\frac 12}.
$$
Therefore, there exists a constant $c'$ independent of $\delta$ such that
for all $\nu$ such that $\|\nu \star |\cdot|^{-1}\|_{L^2 + \cC'} \le
\delta$, 
$$
\| (\rho_{\nu,\epsilon_{\rm F}}-\nu) \star |\cdot|^{-1}\|_{L^2 + \cC'}
\le c' \delta^{\frac 12}.
$$
We obtain the desired result by choosing $\beta = \min
(1,{c'}^{-2}\alpha^2)$.  
\end{proof}

The proof of Theorem \ref{thm:notL1} is a simple consequence of the results of Sections \ref{sec:expansion_Q_V} and \ref{sec:expansion_density_higher_orders}. We assume that $\| \nu \star |\cdot|^{-1}\|_{L^2+\cC'} < \beta$ in such a way that Lemma \ref{prop:small} can be applied. This gives us that $\tr_0(Q_{\nu,\epsilon_{\rm F}})=0$ and that $\| (\nu -\rho_{\nu,\epsilon_{\rm F}}) \star |\cdot|^{-1}\|_{L^2+\cC'}
< \alpha$. Hence we can use the expansion of Lemma \ref{prop:Bogoliubov}.

If
$Q_{\nu,\epsilon_{\rm F}}$ were trace-class, then we would have 
$\rho_{{\nu,\epsilon_{\rm F}}} \in L^1(\R^3)$ and
$$
\widehat\rho_{{\nu,\epsilon_{\rm F}}}(0) = \int_{\R^3}
\rho_{{\nu,\epsilon_{\rm F}}} = \tr(Q_{\nu,\epsilon_{\rm F}}) = 0.
$$
On the other hand,
we would obtain from the expansion (\ref{eq:Dyson_rho_2}) 
\begin{equation} \label{eq:SCFrho}
\rho_{{\nu,\epsilon_{\rm F}}}   =  -  \cL (\rho_{{\nu,\epsilon_{\rm
      F}}}-\nu) + r_2(\rho_{{\nu,\epsilon_{\rm F}}}-\nu) +  
\widetilde r_3(\rho_{{\nu,\epsilon_{\rm F}}}-\nu).
\end{equation}
By Proposition \ref{lem:limL} and since we have assumed $\rho_{{\nu,\epsilon_{\rm F}}} \in L^1(\R^3)$, we know that 
$$\lim_{\eta\to0^+}\left(\widehat{\rho_{{\nu,\epsilon_{\rm F}}}}(\eta \sigma)+\widehat{\cL (\rho_{{\nu,\epsilon_{\rm
      F}}}-\nu)}(\eta\sigma)\right)=- (\sigma^TL\sigma)\widehat{\nu}(0)$$
for all $\sigma\in S^2$. On the other hand we have by Lemma \ref{lem:order2} that
the Fourier transform of the second and third order terms $r_2(\rho_{{\nu,\epsilon_{\rm F}}}-\nu)$ and 
$\widetilde r_3(\rho_{{\nu,\epsilon_{\rm F}}}-\nu)$ vanish at the origin. 
It would then follow that $(\sigma^TL\sigma)\widehat{\nu}(0) = 0$ for all $\sigma \in S^2$, which
obviously contradicts (\ref{eq:Lpositive}). Therefore,
$Q_{\nu,\epsilon_{\rm F}}$ is not trace-class.

Let us know assume that $\rho_{{\nu,\epsilon_{\rm F}}} \in
L^1(\R^3)$. The same arguments lead to 
$$
\widehat\rho_{{\nu,\epsilon_{\rm F}}}(0) = - (\sigma^TL\sigma)
(\widehat\rho_{{\nu,\epsilon_{\rm F}}}(0)-\widehat\nu(0))
$$
for all $\sigma \in S^2$. This is only possible if $L=L_0$. \qed

\subsection{Proof of Proposition~\ref{thm:dielectric}}\label{sec:proof_dielectric}
Let $\epsilon = 1-v_{\rm c} \chi_0$. It follows from
Proposition~\ref{prop:chi0} that $\epsilon$ is a bounded self-adjoint operator
on $\cC'$. Besides, using the fact that $\cL : = - \chi_0 v_{\rm c}$,
we easily see that 
\begin{eqnarray*}
\left[1+v_{\rm c}(1+\cL)^{-1} \chi_0 \right] \epsilon & = &
\left[1+v_{\rm c}(1+\cL)^{-1} \chi_0 \right] (1-v_{\rm c} \chi_0) \\
& = & 1 - v_{\rm c} \chi_0 + v_{\rm c} (1+\cL)^{-1} \chi_0 
+ v_{\rm c} (1+\cL)^{-1} \cL \chi_0 = 1.
\end{eqnarray*}
Likewise, $\epsilon\left[1+v_{\rm c}(1+\cL)^{-1} \chi_0 \right]=1$. Hence,
$
\epsilon^{-1} = \left[1+v_{\rm c}(1+\cL)^{-1} \chi_0 \right]^{-1}.
$

Lastly, $v_{\rm c}^{\frac{1}{2}}$ is an invertible bounded linear operator
from $L^2(\R^3)$ onto $\cC'$. Besides, for all $f$ and $g$ in
$L^2(\R^3)$, 
$$
\pscal{v_{\rm c}^{\frac 12}f,v_{\rm c}^{\frac 12}g}_{\cC'} 
= \langle f,g \rangle_{L^2}.
$$
As $\epsilon$ is an invertible bounded self-adjoint operator on $\cC'$,
$\widetilde \epsilon$ is an invertible bounded self-adjoint operator on
$L^2(\R^3)$:
$$\langle \widetilde \epsilon f,g \rangle_{L^2} = 
\pscal{v_{\rm c}^{\frac 12} \widetilde \epsilon 
f, v_{\rm c}^{\frac 12} g}_{\cC'} 
= \pscal{v_{\rm c}^{\frac 12} f, 
v_{\rm c}^{\frac 12} \widetilde \epsilon g }_{\cC'} 
 = \pscal{ f,  \widetilde \epsilon g }_{L^2}. $$
The proof is complete.
\qed

\subsection{Proof of Theorem~\ref{thm:homogenization}}

Let $\nu \in L^1(\R^3) \cap L^2(\R^3)$. Introducing the dilation
operator $(U_\eta f)(x)=\eta^{\frac 32}f(\eta x)$, we can write
$\nu_\eta = \eta^{\frac 32} U_\eta \nu$. The operator $U_\eta$ is an
isometry of $L^2(\R^3)$ and satisfies $\widetilde{U_\eta\phi}(k) =
\eta^{-\frac 32} \widetilde{\phi}(\eta^{-1}k)$. It follows that 
$$
\widehat{\nu_\eta}(k) = \widehat{\nu}\left( \frac k \eta \right) ,
$$
yielding
$$
\|\nu_\eta\|_{\cC} = \eta^{\frac{1}{2}} \|\nu\|_\cC.
$$
Hence, for $\eta > 0$ small enough, $\|\nu_\eta \star |\cdot|^{-1}
\|_{L^2+\cC'} < \beta$. 
Arguing like in the proof of Lemma~\ref{prop:small} we obtain
$$\frac12\|\rho_{\nu_\eta,\epsilon_{\rm F}}-\nu_\eta\|_\cC^2-\frac12\|\nu_\eta\|_\cC^2\leq E^{\nu_\eta}_{\epsilon_{\rm F}}(Q_{\nu_\eta,\epsilon_{\rm F}})\leq E^{\nu_\eta}_{\epsilon_{\rm F}}(0)=0.$$
Therefore,
$$\|\rho_{\nu_\eta,\epsilon_{\rm F}}-\nu_\eta\|_\cC\leq \eta^{\frac{1}{2}}\|\nu\|_\cC.$$
We therefore may use the self-consistent equation
\eqref{eq:SCF_rewritten2} and get 
\begin{equation}
\nu_\eta-\rho_{\nu_\eta,\epsilon_{\rm F}}=(1+\cL)^{-1}\nu_\eta-(1+\cL)^{-1}\tilde{r}_2(\nu_\eta-\rho_{\nu_\eta,\epsilon_{\rm F}}),
\label{eq:SCF_rewritten3}
\end{equation}
where $\tilde{r}_2(\rho):=r_2(\rho)+\tilde{r}_3(\rho)$.
The bounds of the proof of Proposition \ref{prop:Dyson} and the fact that $(1+\cL)^{-1}$ is a bounded operator on $\cC$ imply
\begin{equation}
\|(1+\cL)^{-1}\tilde{r}_2(\nu_\eta-\rho_{\nu_\eta,\epsilon_{\rm
    F}})\|_\cC\leq C \|\nu_\eta-\rho_{\nu_\eta,\epsilon_{\rm
    F}}\|_\cC^2\leq C'\eta.  
\label{estim_second_order_eta}
\end{equation}

For convenience, we are going to study equation~\eqref{eq:SCF_rewritten3} in
$L^2(\R^3)$. We therefore introduce 
$$
f_\eta:=v_c^{-\frac{1}{2}}W^\eta_\nu\qquad\text{and}\qquad g:=v_c^{\frac{1}{2}}\nu.
$$
We note that $f_\eta$ is indeed bounded in $L^2(\R^3)$ by the choice of
the scaling in $W^\eta_\nu$. Making use of the relation 
$\eta v_{\rm c}^{\frac 12}U_\eta v_{\rm c}^{-\frac 12} = U_\eta$, we can rewrite
\eqref{eq:SCF_rewritten3} as
\begin{equation}
f_\eta=U_\eta^\ast\,\tilde\epsilon^{-1}\, U_\eta
g-\eta^{-\frac{1}{2}}\,U_\eta^\ast v_c^{\frac{1}{2}}(1+\cL)^{-1}\tilde{r}_2(\nu_\eta-\rho_{\nu_\eta,\epsilon_{\rm F}}),
\label{eq:SCF_rewritten4}
\end{equation}
where we recall that $\tilde{\epsilon}^{-1}=v_c^{\frac{1}{2}}(1+\cL)^{-1}v_c^{-\frac{1}{2}}$ is a bounded self-adjoint operator on $L^2(\R^3)$. Our bound \eqref{eq:SCF_rewritten4} on the nonlinear term shows that
$$\|f_\eta-U_\eta^\ast\,\tilde\epsilon^{-1}\, U_\eta g\|_{L^2(\R^3)}\leq C\eta^{\frac{1}{2}}.$$
Hence the theorem will be proved if we show that
$U_\eta^\ast\,\tilde\epsilon^{-1}\, U_\eta g$ converges weakly in
$L^2(\R^3)$ to the correct limit as $\eta\to0^+$. 

This will follow from the following two important lemmas.
\begin{lemma}[The macroscopic dielectric permittivity]\label{lem:def_epsilon} We denote by $\tilde{\epsilon}_q:L^2_{\rm per}(\Gamma)\to L^2_{\rm per}(\Gamma)$ the Bloch transform of the operator $\tilde{\epsilon}$. We also denote by $e_0:=|\Gamma|^{-\frac{1}{2}}$ the (normalized) constant function and by $P_0$ the orthogonal projection on $\{e_0\}^\perp$. The following hold:
\begin{enumerate}
 \item The maps $q\mapsto \tilde{\epsilon}_q$ and $q\mapsto \tilde{\epsilon}^{-1}_q$ are continuous on $\Gamma^\ast\setminus\{0\}$ and uniformly bounded with respect to $q$.
\item For all $\sigma\in S^2$, $\tilde{\epsilon}_{t\sigma}e_0$ converges
  strongly in $L^2_{\rm per}(\Gamma)$ to  $b_\sigma(x)$,
  where for all $k \in \R^3$, the periodic function $b_k(x)$ is defined by
\begin{multline}
 b_k =(|k|^2+k^TLk)e_0  \\
-\frac{2i\sqrt{4\pi}}{|\Gamma|^{\frac 12}}
G_0^{\frac{1}{2}}\fint_{\Gamma^\ast}dq'\sum_{n=1}^N
\left(\frac{(\gamma_{\rm per}^0)^\perp_{q'}}{\left((H^0_{\rm per})_{q'}-\epsilon_{n,q'}\right)^2}(k\cdot\nabla)u_{n,q'}\right)\overline{u_{n,q'}},
\end{multline}
and where
$G^{\frac{1}{2}}_0$ is the operator defined on $L^2_{\rm per}(\Gamma)$
as
$$
G_0^{\frac{1}{2}}f
= \sum_{K\in\cR^\ast\setminus\{0\}} \frac{\sqrt{4\pi} \,
  \widehat{f}_K}{|K|} \, \frac{e^{iK\cdot x}}{|\Gamma|^{\frac 12}}
\qquad \mbox{where} \qquad \widehat{f}_K = \int_\Gamma f(x) \frac{e^{-iK\cdot
    x}}{|\Gamma|^{\frac 12}} \, dx,
$$
and which satisfies $P_0G_0^{\frac{1}{2}}=G_0^{\frac{1}{2}}P_0$.
\item The family of operators $P_0\tilde\epsilon_q P_0$ seen as bounded self-adjoint operators acting on $P_0L^2_{\rm per}(\Gamma)$ is continuous with respect to $q$ and one has
$$\big(P_0\tilde\epsilon_q P_0\big)_{|P_0L^2_{\rm per}(\Gamma)}\to C$$
strongly as $q\to0$, where $C\geq 1$ is the bounded operator on
$P_0L^2_{\rm per}(\Gamma)$ defined by 
\begin{equation}
Cf = f+2G^{\frac{1}{2}}_0\fint_{\Gamma^\ast}dq'\sum_{n=1}^N
\left(\frac{(\gamma_{\rm per}^0)^\perp_{q'}}{(H^0_{\rm per})_{q'}-\epsilon_{n,q'}}u_{n,q'}G_0^{\frac{1}{2}}f\right)
\overline{u_{n,q'}}
\label{formula_epsilon_tilde_P0}
\end{equation}
for all $f\in P_0L^2_{\rm per}(\Gamma)$.
\item One has for all $\sigma\in S^2$
\begin{equation}
\lim_{\eta\to0^+}\pscal{e_0,\tilde{\epsilon}^{-1}_{\eta\sigma}e_0}=\frac{1}{1+\sigma^TL\sigma-\pscal{P_0b_\sigma,C^{-1}P_0b_\sigma}}.
\label{eq:def_dielectric} 
\end{equation}
\end{enumerate}
\end{lemma}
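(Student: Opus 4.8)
The plan is to exploit the block structure of each Bloch fiber $\tilde{\epsilon}_q$ relative to the orthogonal splitting $L^2_{\rm per}(\Gamma) = \C e_0 \oplus P_0 L^2_{\rm per}(\Gamma)$, in which the \emph{macroscopic} mode $e_0$ carries the entire singularity of the Coulomb kernel as $q\to 0$. Recall from Proposition \ref{thm:dielectric} that $\tilde{\epsilon} = v_{\rm c}^{-1/2}\epsilon v_{\rm c}^{1/2} = 1 - v_{\rm c}^{1/2}\chi_0 v_{\rm c}^{1/2}$, so on the fiber over $q$ one has $\tilde{\epsilon}_q = 1 - (v_{\rm c}^{1/2})_q (\chi_0)_q (v_{\rm c}^{1/2})_q$, where $(v_{\rm c}^{1/2})_q$ multiplies the Fourier mode $K\in\cR^\ast$ of $L^2_{\rm per}(\Gamma)$ by $\sqrt{4\pi}/|q+K|$. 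The key observation is that as $q\to 0$ only the mode $K=0$ (the line $\C e_0$) is singular, scaling like $\sqrt{4\pi}/|q|$, whereas on $P_0 L^2_{\rm per}(\Gamma)$ the operator $(v_{\rm c}^{1/2})_q$ converges strongly to the bounded operator $G_0^{1/2}$ of the statement. For \emph{(1)}, the uniform bounds are free: since $\tilde{\epsilon}$ and $\tilde{\epsilon}^{-1}$ are bounded self-adjoint operators on $L^2(\R^3)$ commuting with $\cR$-translations, $\sup_q\|\tilde{\epsilon}_q\| = \|\tilde{\epsilon}\|$ and $\sup_q\|\tilde{\epsilon}_q^{-1}\| = \|\tilde{\epsilon}^{-1}\|$. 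Continuity on $\Gamma^\ast\setminus\{0\}$ then follows from the explicit Bloch expression of $(\chi_0)_q$ read off from \eqref{eq:FB_L}, the smoothness of $q\mapsto(\gamma^0_{\rm per})_q$ (\cite{Panati-07}, as already used in the proof of Proposition \ref{lem:limL}), and the continuity of $q\mapsto(v_{\rm c}^{1/2})_q$ away from $q=0$; absolute convergence of the band sums is handled exactly as in the Remark after \eqref{series_first} via $\epsilon_{n,q}\gtrsim n^{2/3}$.

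For \emph{(2)} and \emph{(3)} I would compute the two nontrivial blocks. Applying $\tilde\epsilon_{t\sigma}$ to $e_0$ and using $(v_{\rm c}^{1/2})_{t\sigma}e_0 = (\sqrt{4\pi}/t)\,e_0$ turns the innermost Coulomb factor into a $1/t$ blow-up; this is cancelled by the $O(t)$ vanishing of the band overlaps $\langle u_{n,q'},u_{n',q'-t\sigma}\rangle$, which I would make quantitative through the identity \eqref{pscal_u_n_q}. Letting $t\to0^+$ and recognizing the resulting quadratic form as \eqref{eq:mk} gives the $e_0$-component $1+\sigma^T L\sigma$, while the outer factor $P_0(v_{\rm c}^{1/2})_{t\sigma}\to G_0^{1/2}$ produces the $P_0$-component, yielding exactly $b_\sigma$ (note its $G_0^{1/2}$-image lies in $P_0 L^2_{\rm per}(\Gamma)$, so $\langle e_0,b_\sigma\rangle = 1+\sigma^T L\sigma$ and $P_0 b_\sigma$ is the integral term). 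This proves \emph{(2)}. For \emph{(3)}, on $P_0 L^2_{\rm per}(\Gamma)$ both copies of $v_{\rm c}^{1/2}$ are regular, so $P_0\tilde\epsilon_q P_0 \to P_0 - G_0^{1/2}(\chi_0)_0 G_0^{1/2} = C$ strongly, with $(\chi_0)_0$ given by its Bloch formula at $q=0$, matching \eqref{formula_epsilon_tilde_P0}; the bound $C\ge 1$ descends from $\tilde\epsilon\ge1$, which follows from $\cL\ge0$ (Proposition \ref{lem:defL}) and the isometry property of $v_{\rm c}^{1/2}$, since $\tilde\epsilon_q\ge1$ forces $P_0\tilde\epsilon_q P_0\ge P_0$ on every fiber. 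The real work here is justifying the interchange of the $t\to0$ limit with the band sum and $q'$-integral and upgrading weak to strong convergence; the tail bound $\epsilon_{n,q}\gtrsim n^{2/3}$, together with denominator estimates of the type $|\epsilon_{m,r+q'}-\epsilon_{n,q'}-\cdots|\ge c(m^{2/3}+1)$ (as in the proof of Lemma \ref{lem:order2}), turns this into a dominated-convergence argument.

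Finally, \emph{(4)} is the Schur-complement formula for the $(0,0)$ entry of an inverse in the block decomposition above,
$$\langle e_0,\tilde\epsilon_q^{-1}e_0\rangle = \Big(\langle e_0,\tilde\epsilon_q e_0\rangle - \big\langle P_0\tilde\epsilon_q e_0,\ (P_0\tilde\epsilon_q P_0)^{-1}\,P_0\tilde\epsilon_q e_0\big\rangle\Big)^{-1},$$
which is legitimate because $P_0\tilde\epsilon_q P_0\ge P_0$ is invertible with $\|(P_0\tilde\epsilon_q P_0)^{-1}\|\le1$ uniformly near $q=0$. Setting $q=\eta\sigma$ and letting $\eta\to0^+$, the inputs $\langle e_0,\tilde\epsilon_{\eta\sigma}e_0\rangle\to 1+\sigma^TL\sigma$ and $P_0\tilde\epsilon_{\eta\sigma}e_0\to P_0 b_\sigma$ from \emph{(2)}, together with $(P_0\tilde\epsilon_{\eta\sigma}P_0)^{-1}\to C^{-1}$ strongly from \emph{(3)} (strong convergence of inverses of uniformly coercive, strongly convergent self-adjoint operators), give \eqref{eq:def_dielectric}; the only point to check is that $\langle x_\eta, T_\eta x_\eta\rangle\to\langle x,Tx\rangle$ when $x_\eta\to x$ strongly and $T_\eta\to T$ strongly with $\sup_\eta\|T_\eta\|<\infty$, which is routine. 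I expect the genuine obstacle of the whole lemma to be \emph{(2)}: isolating and cancelling the $1/|q|$ Coulomb singularity against the vanishing overlaps while maintaining \emph{strong} convergence and uniform control of the infinite band sums.
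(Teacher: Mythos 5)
Your proposal is correct and follows essentially the same route as the paper's proof: the block decomposition $L^2_{\rm per}(\Gamma)=\C e_0\oplus P_0L^2_{\rm per}(\Gamma)$, the explicit Bloch-fiber formula for $\tilde\epsilon_q$ with $(v_{\rm c}^{1/2})_q e_0=\sqrt{4\pi}|q|^{-1}e_0$ and $G_q^{1/2}\to G_0^{1/2}$ on the $P_0$ block, the cancellation of the Coulomb singularity against the $O(|q|)$ vanishing of the band overlaps via \eqref{pscal_u_n_q}, and the Schur complement identity for $\pscal{e_0,\tilde\epsilon_q^{-1}e_0}$ are exactly the paper's steps. Your additional justifications (uniform fiber bounds from $\sup_q\|\tilde\epsilon_q\|=\|\tilde\epsilon\|$, coercivity $C(q)\geq P_0$ from $\tilde\epsilon\geq 1$, and strong convergence of inverses of uniformly coercive operators) are valid refinements of points the paper passes over quickly, not a different method.
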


Using \eqref{eq:def_dielectric}, we may now define the macroscopic dielectric permittivity as follows:
$$
\boxed{k^T\epsilon_{\rm M}k:=|k|^2+k^TLk-\pscal{P_0b_{k},C^{-1}P_0b_{k}}.}
$$
As $P_0b_k$ is linear in $k$, it follows that $\epsilon_{\rm M}$ is a
constant $3\times3$ symmetric matrix. 

Theorem \ref{thm:homogenization} readily follows from 
\begin{lemma}[Limit of the linear term]\label{lem:weak_limit} Let $g$ be a
  fixed function in $L^2(\R^3)$. Then
  $U_\eta^\ast\tilde{\epsilon}^{-1}U_\eta g$ weakly converges in
  $L^2(\R^3)$ as $\eta\to0^+$ to the function whose Fourier transform is
  given by 
$$\frac{\widehat{g}(k)}{1+\frac{k^TLk}{|k|^2}-\pscal{P_0b_{\frac{k}{|k|}},C^{-1}P_0b_{\frac{k}{|k|}}}}.$$ 
\end{lemma}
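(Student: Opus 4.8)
The plan is to work entirely on the Fourier/Bloch side and to exploit that $U_\eta g$ concentrates its Fourier mass at the origin as $\eta \to 0^+$. Set $T_\eta := U_\eta^\ast \tilde\epsilon^{-1} U_\eta$; since $U_\eta$ is unitary on $L^2(\R^3)$, each $T_\eta$ is bounded with $\|T_\eta\| = \|\tilde\epsilon^{-1}\|$, uniformly in $\eta$. Denote by $M$ the Fourier multiplier by $m(k) := \big(1 + \frac{k^T L k}{|k|^2} - \langle P_0 b_{k/|k|}, C^{-1} P_0 b_{k/|k|}\rangle\big)^{-1}$; by part~(1) of Lemma~\ref{lem:def_epsilon} the quantity $\langle e_0, \tilde\epsilon^{-1}_q e_0\rangle$ is uniformly bounded, so $m \in L^\infty(\R^3)$ and $M$ is bounded on $L^2(\R^3)$. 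Because $\{T_\eta\}$ and $M$ are uniformly bounded, it suffices to prove $\langle \phi, T_\eta g\rangle \to \langle \phi, Mg\rangle$ for $g$ and $\phi$ in the dense class $\{f \in L^2(\R^3) \, : \, \widehat f \in C_c^\infty(\R^3)\}$, the general case following by a standard density argument based on these uniform operator bounds.

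So fix $g, \phi$ with $\widehat g, \widehat\phi$ supported in $B(0,R)$. First I would write the pairing $\langle U_\eta \phi, \tilde\epsilon^{-1} U_\eta g\rangle$ in the Bloch representation, decomposing $k = q + K$ with $q \in \Gamma^\ast$ and $K \in \cR^\ast$ and using $\widehat{\tilde\epsilon^{-1} f}(q+K) = \sum_{K'} [\tilde\epsilon^{-1}]_{KK'}(q) \widehat f(q+K')$. This yields a double sum over $K, K' \in \cR^\ast$ of integrals over $\Gamma^\ast$ of $\overline{\widehat{U_\eta\phi}(q+K)}\,[\tilde\epsilon^{-1}]_{KK'}(q)\,\widehat{U_\eta g}(q+K')$. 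The key observation is a support–concentration argument: since $\widehat{U_\eta g}(q+K') = \eta^{-3/2}\widehat g((q+K')/\eta)$ and, $\Gamma^\ast$ being the Wigner--Seitz (Voronoi) cell, one has $|q+K'| \geq \frac12 \min_{K \neq 0}|K| =: d_0 > 0$ for all $q \in \Gamma^\ast$ and all $K' \neq 0$, the factor $\widehat g((q+K')/\eta)$ vanishes identically once $\eta R < d_0$, and similarly for $\widehat{U_\eta\phi}$. Hence for $\eta$ small only the term $K = K' = 0$ survives, and the pairing reduces to $\int_{\Gamma^\ast} \eta^{-3}\,\overline{\widehat\phi(q/\eta)}\,\langle e_0, \tilde\epsilon^{-1}_q e_0\rangle\,\widehat g(q/\eta)\,dq$.

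Next I would perform the rescaling $q = \eta p$, turning this into $\int_{\eta^{-1}\Gamma^\ast} \overline{\widehat\phi(p)}\,\langle e_0, \tilde\epsilon^{-1}_{\eta p}\,e_0\rangle\,\widehat g(p)\,dp$. For each fixed $p \neq 0$ we have $\langle e_0, \tilde\epsilon^{-1}_{\eta p} e_0\rangle = \langle e_0, \tilde\epsilon^{-1}_{(\eta|p|)\,p/|p|} e_0\rangle \to m(p)$ as $\eta \to 0^+$ by part~(4) of Lemma~\ref{lem:def_epsilon}, the radial factor $\eta|p|$ simply tending to $0$. Since the indicator $\mathbf{1}_{\eta^{-1}\Gamma^\ast} \to 1$ pointwise (as $\eta^{-1}\Gamma^\ast \uparrow \R^3$) and the integrand is dominated by $\|\tilde\epsilon^{-1}\|\,|\widehat\phi(p)\widehat g(p)| \in L^1(\R^3)$ (a product of two $L^2$ functions), dominated convergence yields the limit $\int_{\R^3}\overline{\widehat\phi(p)}\,m(p)\,\widehat g(p)\,dp = \langle \phi, Mg\rangle$, which is exactly the asserted weak limit.

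The main obstacle is organizational rather than analytic: one must rigorously discard the off-diagonal and high-frequency ($K$ or $K' \neq 0$) contributions and justify interchanging the limit with the integral. Both are controlled by the two structural facts supplied by Lemma~\ref{lem:def_epsilon} --- the uniform boundedness of $q \mapsto \tilde\epsilon^{-1}_q$ (part~(1)), which furnishes the $L^1$ dominating function, and the directional limit of $\langle e_0, \tilde\epsilon^{-1}_{\eta\sigma} e_0\rangle$ (part~(4)) --- together with the geometric lower bound $|q + K'| \geq d_0$ on the Brillouin zone that localizes the rescaled densities to the central cell. Once these are in place, the density argument extends the convergence from the functions with compactly supported Fourier transform to arbitrary $g, \phi \in L^2(\R^3)$, completing the proof of Theorem~\ref{thm:homogenization}.
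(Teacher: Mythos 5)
Your proof is correct and follows essentially the same route as the paper's: reduction by uniform boundedness of $U_\eta^\ast\tilde\epsilon^{-1}U_\eta$ to test functions with compactly supported Fourier transforms, the observation that for small $\eta$ the rescaled densities only see the $K=K'=0$ Bloch entry $\pscal{e_0,(\tilde\epsilon^{-1})_q e_0}$ (the paper phrases this as $B(0,R\eta)\subset\Gamma^\ast$, you as the equivalent Voronoi bound $|q+K'|\ge|K'|/2$), followed by the rescaling $q=\eta p$, the directional limit from Lemma~\ref{lem:def_epsilon}(4), and dominated convergence. The only difference is that you spell out the density argument and the dominating function explicitly, which the paper leaves implicit.
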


Assuming Lemma \ref{lem:def_epsilon}, we first write the 
\begin{proof}[Proof of Lemma \ref{lem:weak_limit}]
As $\tilde{\epsilon}^{-1}$ is a bounded operator on $L^2(\R^3)$, it suffices to show that
$$\lim_{\eta\to0^+}\pscal{U_\eta^\ast\tilde{\epsilon}^{-1}U_\eta g,g'} =  \int_{\R^3}\frac{\overline{\widehat{g}(k)}\widehat{g'}(k)}{1+\frac{k^TLk}{|k|^2}-\pscal{P_0b_{\frac{k}{|k|}},C^{-1}P_0b_{\frac{k}{|k|}}}}dk$$
for two functions $g,g'\in L^2(\R^3)$ such that both $\widehat{g}$ and $\widehat{g'}$ have a compact support (say in a ball of radius $R$). As the Fourier transforms of $U_\eta g$ and of $U_\eta g'$ have their support in the ball of radius $R\eta$, for $\eta$ small enough such that $B(0,R\eta)\subset \Gamma^\ast$, we have by the definition of the Bloch-Floquet transform
\begin{align*}
\pscal{\tilde{\epsilon}^{-1}U_\eta g,U_\eta g'}&= \int_{\R^3}\pscal{(\tilde{\epsilon}^{-1})_ke_0,e_0}\overline{\widehat{U_\eta g}(k)}\widehat{U_\eta g'}(k)\,dk\\
&= \int_{\R^3}\pscal{(\tilde{\epsilon}^{-1})_{\eta k}e_0,e_0}\overline{\widehat{g}(k)}\widehat{g'}(k)\,dk.
\end{align*}
The result then follows from \eqref{eq:def_dielectric} and the dominated
convergence Theorem. 
\end{proof}

It now remains to write the
\begin{proof}[Proof of Lemma \ref{lem:def_epsilon}]
Using \eqref{eq:FB_L} and time-reversal symmetry, we deduce that for all $f\in L^2_{\rm per}(\Gamma)$,
\begin{multline}
\left(\tilde\epsilon_qf\right)(x)=f+2\fint_{\Gamma^\ast}dq'\sum_{n=1}^N\sum_{n'=N+1}^{+\ii} \\
\frac1{\epsilon_{n',q'+q}-\epsilon_{n,q'}}\pscal{u_{n',q'+q},u_{n,q'}(v_c)^{\frac{1}{2}}_qf
 }_{L^2_{\rm per}(\Gamma)}\left[(v_c)^{\frac{1}{2}}_q\big(\overline{u_{n,q'}}u_{n',q'+q}\big)\right](x),
\label{formula_epsilon_tilde1}
\end{multline}
where $(v_c)^{\frac{1}{2}}_q$ is the convolution operator by the
corresponding Bloch component, which just consists in multiplying the
$K^{\rm th}$ Fourier coefficient of a function by
$(4\pi)^{\frac{1}{2}}|K+q|^{-1}$. 
The above formula can be rewritten as
\begin{equation}
\tilde\epsilon_qf =f+2(v_c)^{\frac{1}{2}}_q\fint_{\Gamma^\ast}dq'\sum_{n=1}^N
\left(\frac{(\gamma_{\rm per}^0)^\perp_{q'+q}}{(H^0_{\rm per})_{q'+q}-\epsilon_{n,q'}}u_{n,q'}{(v_c)^{\frac{1}{2}}_qf}\right)
\overline{u_{n,q'}}.
\label{formula_epsilon_tilde2}
\end{equation}
We note that for any $q\in\Gamma^\ast\setminus\{0\}$, $(v_c)^{\frac{1}{2}}_q$ is
a bounded (indeed compact) operator on $L^2_{\rm per}(\Gamma)$ and that $q
\mapsto (v_{\rm c})_q^{\frac{1}{2}}$ continuous from $\Gamma^\ast\setminus\{0\}$
to ${\mathcal L}(L^2_{\rm per}(\Gamma))$. The continuity of $\tilde{\epsilon}_q$ when $q$ stays away from 0 is therefore easy to verify.

For $f\in P_0L^2_{\rm per}(\Gamma)$, we have 
$$
(v_c)_q^{\frac{1}{2}}f=\sum_{K\in\cR^\ast\setminus\{0\}} 
 \frac{\sqrt{4\pi} \,
  \widehat{f}_K}{|q+K|} \, \frac{e^{iK\cdot x}}{|\Gamma|^{\frac 12}}:=
G_q^{\frac{1}{2}}f
$$
where $G_q^{\frac{1}{2}}$ is the operator on $L^2_{\rm per}(\Gamma)$ which multiplies the $K^{\rm th}$ Fourier coefficient of a function $f$ by $\sqrt{4\pi}|q+K|^{-1}$ except the coefficient corresponding to $K=0$ which is replaced by zero. Note that $G_q^{\frac{1}{2}}\to G_0^{\frac{1}{2}}$ in norm.
Formula \eqref{formula_epsilon_tilde2} then shows that
$C(q):=P_0\tilde{\epsilon}_qP_0|_{P_0L^2_{\rm per}(\Gamma)}$ is bounded and converges as $q\to0$ to the operator $C$ defined on $P_0L^2_{\rm per}(\Gamma)$ by \eqref{formula_epsilon_tilde_P0}. Obviously, $C\geq1$ on $P^0L^2_{\rm per}(\Gamma)$.

Next, we have for $q$ small enough
\begin{align*}
&(\tilde\epsilon_q-1)e_0\\
&=\frac{2\sqrt{4\pi}}{|\Gamma|^{\frac 12}}\fint_{\Gamma^\ast}dq'\sum_{n=1}^N\sum_{n'=N+1}^{+\ii} 
\frac{\pscal{u_{n',q'+q},u_{n,q'}
 }_{L^2_{\rm
     per}(\Gamma)}}{|q|(\epsilon_{n',q'+q}-\epsilon_{n,q'})}(v_c)^{\frac{1}{2}}_q\big(\overline{u_{n,q'}}u_{n',q'+q}\big) \\
&=\frac{\cB(q)}{|q|^2}e_0+\frac{2\sqrt{4\pi}}{|\Gamma|^{\frac 12}}\fint_{\Gamma^\ast}dq'\sum_{n=1}^N\sum_{n'=N+1}^{+\ii}
\frac{\pscal{u_{n',q'+q},u_{n,q'}
 }_{L^2_{\rm per}(\Gamma)}}{|q|(\epsilon_{n',q'+q}-\epsilon_{n,q'})}G^{\frac{1}{2}}_q\big(\overline{u_{n,q'}}u_{n',q'+q}\big)
\label{formula_epsilon_1K}
\end{align*}
where $\cB(q)$ was defined before in \eqref{def_cB_q}.
Now we use \eqref{pscal_u_n_q} and get
\begin{multline*}
\tilde\epsilon_{\eta\sigma}e_0
=\left(1+\frac{\cB(\eta\sigma)}{\eta^2}\right)e_0
-\frac{2i\sqrt{4\pi}}{|\Gamma|^{\frac 12}}\fint_{\Gamma^\ast}dq'\sum_{n=1}^N\sum_{n'=N+1}^{+\ii}\\
\frac{\pscal{u_{n',q'+\eta\sigma},(\sigma\cdot\nabla)u_{n,q'}}_{L^2_{\rm
     per}(\Gamma)}}{(\epsilon_{n',q'+\eta\sigma}-\epsilon_{n,q'})(\epsilon_{n',q'+\eta\sigma}-\epsilon_{n,q'}-\eta q'\cdot\sigma-\frac{\eta^2}2)}G^{\frac{1}{2}}_q
\big(\overline{u_{n,q'}}u_{n',q'+\eta\sigma}\big).
\end{multline*}
Hence as $\eta\to0^+$, $\tilde\epsilon_{\eta\sigma}e_0$ converges strongly in $L^2_{\rm per}(\Gamma)$ to $b_\sigma$.

The last step is to use the Schur complement formula which tells us that
$$\pscal{\tilde{\epsilon}_q^{-1}e_0,e_0}=\frac{1}{\pscal{\tilde{\epsilon}_qe_0,e_0}-\pscal{P_0\tilde{\epsilon}_qe_0,C(q)^{-1}P_0\tilde{\epsilon}_qe_0}}.$$
The above convergence properties yield
$$\lim_{\eta\to0^+}\pscal{\tilde{\epsilon}_{\eta\sigma}^{-1}e_0,e_0}=\frac{1}{1+\sigma^TL\sigma-\pscal{P_0b_\sigma,C^{-1}P_0b_\sigma}}$$ 
as was claimed.
\end{proof}

\section*{Acknowledgements}

This work was initiated while we were visiting the Institute
for Mathematics and its Applications (IMA) in Minneapolis. We warmly
thank the staff of the IMA for their hospitality. 
This work was partially supported by the ANR grants LN3M and
  ACCQUAREL.


\begin{thebibliography}{10}

\bibitem{Adler-62}
{\sc S.~L. Adler}, {\em Quantum theory of the dielectric constant in real
  solids}, Phys. Rev., 126 (1962), pp.~413--420.

\bibitem{BarRes-86}
{\sc S.~Baroni and R.~Resta}, {\em Ab initio calculation of the macroscopic
  dielectric constant in silicon}, Phys. Rev. B, 33 (1986), pp.~7017--7021.

\bibitem{CanDelLew-08a}
{\sc {\'E}.~Canc{\`e}s, A.~Deleurence, and M.~Lewin}, {\em A new approach to
  the modelling of local defects in crystals: the reduced {H}artree-{F}ock
  case}, Commun. Math. Phys., 281 (2008), pp.~129--177.

\bibitem{CanDelLew-08b}
\leavevmode\vrule height 2pt depth -1.6pt width 23pt, {\em Non-perturbative
  embedding of local defects in crystalline materials}, J. Phys.: Condens.
  Matter, 20 (2008), p.~294213.

\bibitem{CatBriLio-01}
{\sc I.~Catto, C.~{Le Bris}, and P.-L. Lions}, {\em On the thermodynamic limit
  for {H}artree-{F}ock type models}, Ann. Inst. H. Poincar{\'e} Anal. Non
  Lin{\'e}aire, 18 (2001), pp.~687--760.

\bibitem{DreGro-90}
{\sc R.~Dreizler and E.~Gross}, {\em Density functional theory}, Springer
  Verlag, 1990.

\bibitem{EngFar-92}
{\sc G.~E. Engel and B.~Farid}, {\em Calculation of the dielectric properties
  of semiconductors}, Phys. Rev. B, 46 (1992), pp.~15812--15827.

\bibitem{GajHumKreFurBec-06}
{\sc M.~Gajdo\v{s}, K.~Hummer, G.~Kresse, J.~Furthm{\"u}ller, and
  F.~Bechstedt}, {\em Linear optical properties in the projector-augmented wave
  methodology}, Phys. Rev. B, 73 (2006), p.~045112.

\bibitem{GraLewSer-08}
{\sc P.~Gravejat, M.~Lewin, and {\'E}.~S{\'e}r{\'e}}, {\em Ground state and
  charge renormalization in a nonlinear model of relativistic atoms}, Commun.
  Math. Phys., 286 (2009), pp.~179--215.

\bibitem{HaiLewSer-05a}
{\sc C.~Hainzl, M.~Lewin, and {\'E}.~S{\'e}r{\'e}}, {\em Existence of a stable
  polarized vacuum in the {B}ogoliubov-{D}irac-{F}ock approximation}, Commun.
  Math. Phys., 257 (2005), pp.~515--562.

\bibitem{HaiLewSer-08}
\leavevmode\vrule height 2pt depth -1.6pt width 23pt, {\em Existence of atoms
  and molecules in the mean-field approximation of no-photon quantum
  electrodynamics}, Arch. Rational Mech. Anal., in press (2008).

\bibitem{HaiLewSerSol-07}
{\sc C.~Hainzl, M.~Lewin, {\'E}.~S{\'e}r{\'e}, and J.~P. Solovej}, {\em A
  minimization method for relativistic electrons in a mean-field approximation
  of quantum electrodynamics}, Phys. Rev. A, 76 (2007), p.~052104.

\bibitem{HybLou-87a}
{\sc M.~S. Hybertsen and S.~G. Louie}, {\em Ab initio static dielectric
  matrices from the density-functional approach. {I}. {F}ormulation and
  application to semiconductors and insulators}, Phys. Rev. B, 35 (1987),
  pp.~5585--5601.

\bibitem{HybLou-87b}
\leavevmode\vrule height 2pt depth -1.6pt width 23pt, {\em Ab initio static
  dielectric matrices from the density-functional approach. {II}. {C}alculation
  of the screening response in diamond, {S}i, {G}e, and {L}i{C}l}, Phys. Rev.
  B, 35 (1987), pp.~5602--5610.

\bibitem{KunTos-84}
{\sc K.~Kunc and E.~Tosatti}, {\em Direct evaluation of the inverse dielectric
  matrix in semiconductors}, Phys. Rev. B, 29 (1984), pp.~7045--7047.

\bibitem{LieSim-77}
{\sc E.~H. Lieb and B.~Simon}, {\em The {H}artree-{F}ock theory for {C}oulomb
  systems}, Commun. Math. Phys., 53 (1977), pp.~185--194.

\bibitem{Panati-07}
{\sc G.~Panati}, {\em Triviality of {B}loch and {B}loch-{D}irac bundles}, Ann.
  Henri Poincar{\'e}, 8 (2007), pp.~995--1011.

\bibitem{PicCohMar-70}
{\sc R.~M. Pick, M.~H. Cohen, and R.~M. Martin}, {\em Microscopic theory of
  force constants in the adiabatic approximation}, Phys. Rev. B, 1 (1970),
  pp.~910--920.

\bibitem{ReeSim4}
{\sc M.~Reed and B.~Simon}, {\em Methods of modern mathematical physics. {IV}.
  {A}nalysis of operators}, Academic Press, New York, 1978.

\bibitem{ResBal-81}
{\sc R.~Resta and A.~Baldereschi}, {\em Dielectric matrices and local fields in
  polar semiconductors}, Phys. Rev. B, 23 (1981), pp.~6615--6624.

\bibitem{SeiSim-75}
{\sc E.~Seiler and B.~Simon}, {\em Bounds in the {Y}ukawa2 quantum field
  theory: upper bound on the pressure, {H}amiltonian bound and linear lower
  bound}, Commun. Math. Phys., 45 (1975), pp.~99--114.

\bibitem{Simon-79}
{\sc B.~Simon}, {\em Trace ideals and their applications}, vol.~35 of London
  Mathematical Society Lecture Note Series, Cambridge University Press,
  Cambridge, 1979.

\bibitem{Solovej-91}
{\sc J.~P. Solovej}, {\em {Proof of the ionization conjecture in a reduced
  Hartree-Fock model.}}, Invent. Math., 104 (1991), pp.~291--311.

\bibitem{Thomas-73}
{\sc L.~E. Thomas}, {\em Time dependent approach to scattering from impurities
  in a crystal}, Commun. Math. Phys., 33 (1973), pp.~335--343.

\bibitem{Wiser-63}
{\sc N.~Wiser}, {\em Dielectric constant with local field effects included},
  Phys. Rev., 129 (1963), pp.~62--69.

\end{thebibliography}

\end{document}